\newcommand{\Pmatrix}[1]{\begin{pmatrix}#1\end{pmatrix}}
\newcommand{\theset}[1]{\left\{#1\right\}}
\newcommand{\magn}[1]{{\left| #1 \right|}}
\newcommand{\norm}[1]{\left\Vert #1 \right\Vert}
\newcommand{\st}{\textrm{\ \Big\vert\ }}
\newcommand{\tr}{^\mathrm{T}}
\newcommand{\one}{\mathbf{1}}
\newcommand{\prob}[1]{\mathbf{P}\left[#1\right]}
\newcommand{\given}{\thinspace\big\vert\thinspace}
\newcommand{\und}{\ \&\ }
\newcommand{\Det}[1]{\mathbf{det}\left[#1\right]}
\newcommand{\diag}[1]{\mathbf{diag} \left[#1\right]}
\newcommand{\bdiag}[1]{\mathbf{bdiag} \left[#1\right]}
\newcommand{\expected}[1]{ \textbf{E} \left[#1\right]}
\newtheorem{Assumption}{Assumption}[section]
\newtheorem{Proposition}{Proposition}[section]
\newtheorem{Theorem}{Theorem}[section]
\newtheorem{Example}{Example}[]
\newtheorem*{Example*}{Example}
\newtheorem{Claim}{Claim}[section]
\newtheorem{Definition}{Definition}[section]
\title{Higher-Order Uncoupled Learning Dynamics and Nash Equilibrium\thanks{
Sarah A. Toonsi (stoonsi2@illinois.edu) and Jeff S. Shamma (jshamma@illinois,edu) are with the Department of Industrial and Enterprise Systems Engineering, University of Illinois Urbana-Champaign, Urbana, Illinois, USA.}}
\author{{Sarah A. Toonsi} \and {Jeff S. Shamma}}
\begin{document}

\maketitle

\begin{abstract}
We study learnability of mixed-strategy Nash Equilibrium (NE) in general finite games using higher-order replicator dynamics as well as classes of higher-order uncoupled heterogeneous dynamics. In higher-order uncoupled learning dynamics, players have no access to utilities of opponents (uncoupled) but are allowed to use auxiliary states to further process information (higher-order). We establish a link between uncoupled learning and feedback stabilization with decentralized control. Using this association, we show that for any finite game with an isolated completely mixed-strategy NE, there exist higher-order uncoupled learning dynamics that lead (locally) to that NE.
We further establish the lack of universality of learning dynamics by linking learning to the control theoretic concept of simultaneous stabilization. We construct two games such that any higher-order dynamics that learn the completely mixed-strategy NE of one of these games can never learn the completely mixed-strategy NE of the other.
Next, motivated by imposing natural restrictions on allowable learning dynamics, we introduce the Asymptotic Best Response (ABR) property. Dynamics with the ABR property asymptotically learn a best response in environments that are asymptotically stationary. We show that the ABR property relates to an internal stability condition on higher-order learning dynamics. We provide conditions under which NE are compatible with the ABR property.  Finally, we address learnability of mixed-strategy NE in the bandit setting using a bandit version of higher-order replicator dynamics.
\end{abstract}

\maketitle

\section{Introduction}\label{sec:introduction}\ \\[-15pt]

The topic of learning in games studies dynamic learning processes in which players adapt their strategies over time in response to the evolving strategies of other players. Outcomes of learning vary from convergence to solution concepts \cite{hart2000simple,berger2005fictitious,monderer1996fictitiousplay,shamma2004unified,Piliouras2018meaning},  cycling \cite{shapley1964sometopics,foster1998nonconvergence}, stochastic stability \cite{Young1998Book}, improved social welfare \cite{Piliouras2011BeyondNE},  and even chaos \cite{piliouras2014optimization}.

Much of the literature focuses on simple-adaptive learning dynamics that can be interpreted as ``natural" behavior \cite{hart2013simple}. In particular, reference
\cite{Hart2003Uncoupled} imposes a requirement that learning dynamics are ``uncoupled", i.e., a player's dynamics do not depend explicitly on other players' utilities. The study constructs a game with a unique mixed-strategy Nash Equilibrium (NE) for which no uncoupled learning dynamics can lead to NE. The conclusion is reinforced by players being indifferent at a completely mixed-strategy NE. In particular, what motivates a player to play the exact strategy that makes opponents indifferent without access to opponents' utilities? Continuing along the lines of impossibility results, reference \cite{Papadimitriou2023Conley} shows that there are games for which any learning dynamics will have an initial condition starting from which the dynamics do not lead to NE. More results are presented in \cite{Yuzuru2002Chaos,Piliouras2019butterfly, Gkaragkounis2020Donotmix, Muthkumar2024InformsImpossible}.

This rich discussion mainly focuses on standard-order natural learning dynamics. Learning is represented as a dynamical system. Standard-order learning dynamics have a state dimension that matches the dimension of the strategy space of an agent. Higher-order learning dynamics relax this
restriction on the order so that an agent's learning dynamics can have auxiliary states. These states can have various interpretations, such as path dependencies, forecasts, or estimates.
Higher-order learning in games parallels higher-order methods in optimization \cite{daskalakis2018limit, Polyak1964momentum}, impacting areas such as machine learning \cite{daskalakis2017training,sutskever13momentum}.

In the context of learning NE, it can be the case that limitations are a direct consequence of the presumption that dynamics are of standard order. Along these lines, reference \cite{shamma2005dynamic} introduces uncoupled higher-order anticipatory learning dynamics and uses these dynamics to learn the mixed-strategy NE of the counterexample game presented by \cite{Hart2003Uncoupled}. Accordingly, the actual restriction was not that the dynamics were just uncoupled but rather that they were both uncoupled \textit{and} of standard order.

Given the disparity in conclusions, it is worthwhile to understand the limits of both what is achievable and what is impossible under higher-order learning.

Previous literature on higher-order learning focused on specific constructions of higher-order dynamics, such as \cite{basar1987relaxation, conlisk1993adaptation, flam2003newtonian, shamma2005dynamic, laraki2012higher, gao2021passivity}. Two representative examples of complementary higher-order replicator dynamics are in \cite{laraki2012higher} and \cite{arslan2006anticipatory}. Reference \cite{laraki2012higher} presents a higher-order version of replicator dynamics in which a player's dynamics can have $n$-fold aggregation of the payoff. The formulation presents a specific construction of higher-order replicator dynamics that lead to the elimination of weakly dominated strategies. As with standard-order replicator dynamics, the introduced dynamics cannot converge to mixed-strategy Nash Equilibrium (NE). Reference \cite{arslan2006anticipatory} studies anticipatory learning in general population games, introduces anticipatory replicator dynamics, and shows that anticipation can enable convergence to mixed-strategy NE.

Reference  \cite{toonsi2023Higher} shows that for any polymatrix game with an isolated, completely mixed-strategy NE, there exist uncoupled higher-order gradient play dynamics that locally lead to that NE. Furthermore, these results are robust in the sense that dynamics that lead to an equilibrium in a game continue to do so for perturbed equilibria in nearby games. The considered dynamics were expressed in a ``payoff based" structure, meaning that a player's dynamics are not explicitly dependent on any utility function, including their own. Instead, the dynamics evolve in response to an external payoff vector. When players are engaged in a game, the payoff vector becomes a function of other players' strategies. Given this formulation, the learning dynamics of a player do not change when the underlying game changes. Only the source of the payoff vector changes. A key idea enabling these results is to associate uncoupled learning to decentralized feedback stabilization in control theory.

This paper further considers higher-order learning dynamics with the initial emphasis on higher-order replicator dynamics, a widely studied class of learning dynamics that has played a major role in the literature on evolutionary games \cite{schuster1983replicator,sorin2020replicator}. We study finite games with a regularity property related to isolated equilibrium (c.f., ``regular" equilibria in \cite{Harsanyi1973Oddness}). We establish that higher-order replicator dynamics can converge to mixed-strategy NE and go on to extend the conclusions beyond higher-order replicator dynamics to various classes of higher-order uncoupled heterogeneous dynamics.

Next, we consider non-universality of learning dynamics leading to NE. By our results that show that isolated NE are learnable, we understand that there is no single negating game as in \cite{Hart2003Uncoupled}. Accordingly, we present a result about simultaneous stabilization that applies to all higher-order uncoupled learning dynamics that satisfy minimal assumptions. Given two polymatrix games, different in network topology, we show that any learning dynamics that are locally stable around the completely mixed-strategy NE of one can never be locally stable around the completely mixed-strategy NE of the other. Our construction does not specify two games explicitly. Rather, we derive a class of games leading to the desired result. Therefore, the conclusion of non-universality continues to hold for games with a unique mixed-strategy NE that match the required assumptions. These results are potentially related to learning over networks, as the network topology impacts whether convergence will occur.

Moving forward, a core question we aim to address is: ``Which NE can be learned using ``natural" dynamics?" This question requires imposing criteria for classifying natural learning dynamics, as the notion of ``natural'' is subject to interpretation. Generally, the term ``natural" is used to refer to dynamics that have two desired characteristics: simplicity and adaptivity \cite{hart2013simple}. Specific constructions of simple and adaptive natural learning dynamics include gradient play, replicator dynamics, and fictitious play. A discussion from an evolutionary perspective is presented in \cite{weibull1997evolutionary}. In that work, general selection dynamics are introduced, and criteria such as payoff monotonicity or payoff positivity are discussed. More recent work in this direction explores the connection between the system-theoretic concept of ``passivity" and the natural property of ``no-regret" \cite{cheung2021online,abdelraouf2025passivity}.

Augmenting higher-order states to standard-order learning dynamics impacts their interpretation. This issue motivates imposing restrictions on allowable higher-order dynamics and understanding their implications.
Towards this end, we introduce the Asymptotic Best Response (ABR) property for natural dynamics. Dynamics that satisfy this property play a best response (asymptotically) to any constant payoff vector. Natural dynamics such as replicator dynamics, gradient play, and fictitious play are consistent with this property. With higher-order constructions in mind, we discuss connections between ABR and internal stability of the higher-order components and show how the ABR property impacts learnability of equilibria. We continue to leverage the connection to feedback stabilization by relating the ABR property to strong stabilization in control \cite{youla1974singleloop}.

In the final section of this paper, we study learnability of mixed-strategy NE in general finite games in the bandit setting. In such a setting, players cannot access payoff vectors but only observe scalar instantaneous realized payoffs. Using the ODE method of stochastic approximation \cite{Benaim1999Dynamics,borkar2008stochastic}, we link continuous-time stabilization results to discrete-time bandit analysis of higher-order replicator dynamics.

To summarize, the specific contributions of this paper are\footnote{These results build on past work by the authors \cite{toonsi2023Higher}\cite{toonsi2024bandit}.}:
\begin{itemize}
\item Proving learnability of regular isolated completely mixed-strategy NE in general finite games using higher-order replicator dynamics or classes of higher-order uncoupled heterogeneous dynamics (Theorem~\ref{Thm:RDStabilization} and Theorem~\ref{Thm:HeterogeneousStabilization}).

\item Providing a result about non-universality of higher-order learning dynamics through the lens of simultaneous stabilization in feedback control (Theorem~\ref{Thm:SStabilization}).

\item Introducing the Asymptotic Best-Response (ABR) property, discussing its relation to internal stability of higher-order components,  and establishing a sufficient condition for higher-order learning dynamics to satisfy ABR (Proposition~\ref{Prop:Internallystable}).

\item Analyzing compatibility of NE from various classes of games with learnability via internally stable higher-order components and establishing that learning some mixed-strategy NE \textit{requires} using unstable higher-order components (Propositions~\ref{Prop:Incomptwo}, \ref{Prop:IdenticalInterest}, \ref{Prop:Zerosumgrap}, and~\ref{Prop:StrategicallyZerosum}).

\item Extending learnability results of regular isolated completely mixed-strategy NE to the bandit setting of replicator dynamics (Theorems~\ref{Thm:BanditCaseI} and~\ref{Thm:BanditCaseII}).

\end{itemize}

\section{Preliminaries}

\subsection{Finite games over mixed strategies}

Our framework will be finite games, i.e., finite players and finite actions per player. The set of players is denoted by
$$\mathcal{I}=\theset{1,\hdots,n},$$
the action set of player
$i$ is denoted by $\mathcal{A}_i=\theset{1,\hdots,k_i}$, and the utility (reward) of player $i$ is
a function
$$r_i:\mathcal{A} \rightarrow \mathbb{R},$$
where $\mathcal{A}=\mathcal{A}_1 \times \hdots \times \mathcal{A}_n.$

We now extend the definition of utility functions to mixed strategies. Define
$$\Delta(k_i) = \theset{v \in \mathbb{R}^{k_i} \st v_j\ge 0, j = 1, ..., k_i, \und \sum_{j=1}^{k_i} v_j = 1}.$$
and
$$\mathcal{X} = \Delta(k_1) \times ... \times \Delta(k_n).$$
The mixed strategy of player $i$ is a vector $x_i\in \Delta(k_i)$, and the utility of player $i$ is defined as
$$R_i : \mathcal{X} \rightarrow \mathbb{R}$$
where
$$R_i(x)=\sum_{a\in \mathcal{A}}\left(\prod_{j \in \mathcal{I}} \left(x_{j (a_{j})}\right) \right)r_i(a),$$
and $x_{j (a_{j})}$ represents the probability that player $j$ plays action $a_j$.

It will be convenient to rewrite the definition of utility to isolate the effect of an individual player.
Defining
$$\mathcal{A}_{-i} = \mathcal{A}_1 \times \dots \times \mathcal{A}_{i-1} \times
\mathcal{A}_{i+1} \times \dots \mathcal{A}_n$$
allows us to rewrite
$$r_i(a_1,...,a_i,...,a_n) = r_i(a_i,a_{-i})$$
for $a_i\in \mathcal{A}_i$ and $a_{-i}\in \mathcal{A}_{-i}$. Likewise, defining $$\mathcal{X}_{-i} = \Delta(k_1) \times ... \times \Delta(k_{i-1}) \times \Delta(k_{i+1}) \times ... \times \Delta(k_n),$$
allows us to rewrite
$$R_i(x_1,...,x_i,...,x_n) = R_i(x_i,x_{-i})$$
for $x_i\in \Delta(k_i)$ and $x_{-i}\in \mathcal{X}_{-i}$.
The utility of player $i$ can also be written as
$$R_i(x_i,x_{-i}) = x_i\tr P_i(x_{-i}),$$ where
\begin{equation}\label{eq:payoffVector}
P_i(x_{-i}) = \Pmatrix{R_i(\mathbf{e}_1,x_{-i})\\ \vdots\\ R_i(\mathbf{e}_{k_i},x_{-i})}
\end{equation}
is the \textbf{payoff vector} of player $i$,  $x_{-i} \in \mathcal{X}_{-i}$ and $\mathbf{e}_\kappa$ is the unit vector (of appropriate dimension) defined by
\begin{equation}\label{eq:DefinitionOfe}
\mathbf{e}\tr_\kappa = (0\quad\dots\quad 0\quad \underbrace{1}_{\kappa^{\mathrm{th}} \text{ position}} \quad 0\quad \dots\quad 0).
\end{equation}

The $\kappa^{\mathrm{th}}$ element of $P_i(x_{-i})$, denoted by $P_{i\kappa}$, corresponds to the expected utility of playing action $\kappa$ given strategies of other players and can be written as
\begin{equation}\label{eq:payoffVectorAlt}
P_{i\kappa}(x_{-i})= \sum_{a_{-i}\in \mathcal{A}_{-i}}\left(\prod_{j \in \mathcal{I} \backslash i} \left(x_{j (a_{j})}\right) \right)r_i(\kappa,a_{-i}),
\end{equation}
again, where $x_{j(a_j)}$ denotes the probability that player $j$ plays action $a_j$.

A \textbf{Nash Equilibrium (NE)} is a strategy profile $(x_1^*,...,x_n^*)\in \mathcal{X}$ such that for all $i \in \mathcal{I}$,
$$R_i(x_i^*,x_{-i}^*) \ge R_i(x_i,x_{-i}^*),\quad \text{ for all } x_i\in \Delta(k_i).$$
The strategy profile $(x_1^*,\hdots,x_n^*)$ is a \textbf{strict NE} if for all $i\in \mathcal{I}$, $P_i(x^*_{-i})$ has a unique maximizing element, in which case each $x^*_i$ is on a vertex of the simplex $\Delta(k_i)$.
A \textbf{completely mixed-strategy NE} is a strategy profile where all $x_i^*$ are in the interior of the simplex and for every $i\in \mathcal{I}$,
$$P_i(x_{-i}^*)=\alpha_i \one, $$
for some value, $\alpha_i$, where $\one $ is a vector of all ones. In this case, all players receive the same payoff for each possible action and, by extension,  for any mixed strategy.

\subsection{A regularity assumption}\label{sec:regularity}

This section presents a regularity assumption related to a completely mixed-strategy being isolated. See Appendix~\ref{App:A0} for an explicit example of the constructions in this section for a three-player two-action game.

Let $x^* = (x_1^*,...,x_n^*)$ be a completely mixed-strategy NE. For any $i$, one can express neighboring strategies in the simplex as
\begin{equation}\label{eq:nearby}
x_i = x_i^* + N_i w_i,
\end{equation}
where the matrix $N_i\in \mathbb{R}^{k_i \times k_{i-1}}$ satisfies
\begin{equation}\label{eq:Ncondition}
\one\tr N_i =0 \quad \And \quad   {N_i}\tr N_i =I.
\end{equation}
Note that since the NE is completely mixed, one has $x_i^* + N_i w_i \in \Delta(k_i)$
for sufficiently small $w_i$.

Now inspect the $\kappa^{\mathrm{th}}$ element of the payoff vector of player $i$, $P_i(x_{-i})$, defined in (\ref{eq:payoffVector}) and
rewritten in (\ref{eq:payoffVectorAlt}) in the vicinity of $x^*$:
$$P_{i\kappa}(x_{-i}^*+N_{-i}w_{-i}) =\sum_{a_{-i}\in \mathcal{A}_{-i}}\left(\prod_{j \in \mathcal{I} \backslash i} \left(x^*_{j (a_{j})}+ N_j(a_{j}) w_{j}\right) \right)r_i(\kappa,a_{-i}),$$
where
$$N_{-i}w_{-i}=\Pmatrix{N_{1}w_{1} & \dots & N_{i-1}w_{i-1} & N_{i+1}w_{i+1} & \dots & N_{n}w_{n}}.$$
The notation $N_i(a_j)$ refers to the row of the matrix $N_i$ corresponding to action $a_j$.

Collecting relevant terms together, the payoff vector of player $i$ near $x^*$ takes the form
\begin{equation}\label{eq:payoffnearxstar}
P_{i}(x_{-i}^*+N_{-i}w_{-i})=P_{i}(x_{-i}^*)+M_{i} \mathcal{N} w +\Tilde{P}_{i}(w_{-i}),
\end{equation}
where
$$M_{i}= \Pmatrix{M_{i1}& \hdots & M_{i(i-1)}& 0& M_{i(i+1)}&\hdots&M_{in}},$$
%$$M_{ij}=\left.\Pmatrix{ \frac{\partial P_{i1}}{\partial x_{j1}}& .. &\frac{\partial P_{i1}}{\partial x_{j(k_j-1)}}\\  \vdots &\ddots  & \vdots \\ \frac{\partial P_{ik_i}}{\partial x_{j1}}&..& \frac{\partial P_{ik_i}}{\partial x_{j(k_j-1)}}}\right|_{x = x^*}$$
$$M_{ij} = \nabla_{x_j} P_i(x^*),\quad i\not= j,$$
\begin{equation}
\mathcal{N} = \bdiag{N_1,...,N_n},
\end{equation}
and $\bdiag{\cdot}$ is defined as a block diagonal matrix formed from its arguments.
The elements of $\Tilde{P}_{i}(w_{-i})$ are polynomials with terms at least quadratic in $w_{-i}$ so that
$$\limsup_{\norm{w_{-i}}\downarrow 0} \frac{\norm{\Tilde{P}_i(w_{-i})}}{\norm{w_{-i}}^2} < \infty.$$

The \textbf{reduced-order payoff vector} is defined as
$$N_i\tr P_{i}(x_{-i}^*+N_{-i}w_{-i})= \underbrace{N_i\tr M_{i}\mathcal{N}}_{\mathcal{M}_{i}} w +N_i\tr \Tilde{P}_{i}(w_{-i}),$$
where $\mathcal{M}_i$ is the reduced-order version of $M_i$.
Let
\begin{equation}\label{eq:calligraphicM}
M=\Pmatrix {M_{1}\\ \vdots \\ M_{n}} \quad \And \quad
\mathcal{M}=\mathcal{N}\tr  M \mathcal{N}=\Pmatrix{\mathcal{M}_{1}\\ \vdots \\ \mathcal{M}_{n}}.
\end{equation}
In brief, the matrix $\mathcal{M}$ is the Jacobian matrix, evaluated at $0$, of the mapping
$$w \mapsto \mathcal{N}\tr \Pmatrix{P_1(x_{-1}^* + N_{-1}w_{-1})\\ \vdots\\ P_n(x_{-n}^* + N_{-n} w_{-n})}$$
i.e., the mapping from perturbations of the mixed-strategy NE to the reduced-order payoff vector.

We will make the following assumption on completely mixed-strategy Nash equilibria.
\begin{Assumption}\label{Assumption:Regularity}
The (Jacobian) matrix  $\mathcal{M}$ is non-singular.
\end{Assumption}

The intuition behind this assumption is to require a certain regularity in the games such that a local differentiable inverse map exists from reduced-order payoff vectors to perturbations around completely mixed-strategy NE of interest. Accordingly, the completely mixed NE must be isolated. By the inverse function theorem \cite[p. 39]{spivak1965calculus}, the regularity assumption on the local inverse function necessitates non-singularity of $\mathcal{M}$. This condition is closely related to the definition of regular equilibria introduced in \cite{Harsanyi1973Oddness}, where it was shown that almost all finite games are regular (a game is called regular if all its equilibria are regular). Further discussion on uniqueness of equilibria can be found in \cite{bailey2024uniqueness}.

\section{Payoff-based learning dynamics: Standard and higher-order}\ \\[-15pt]

An important aspect of the approach in the paper is that a player's learning dynamics are defined as an \textit{open dynamical system} that maps an input vector to a mixed strategy.

In \textit{standard-order} learning dynamics, these dynamics for an individual player $i$ take the form
\begin{equation}\label{eq:standardorder}
\dot{x}_i(t)=f_i(x_i(t),p_i(t)),
\end{equation}
where $f_i:\Delta(k_i) \times \mathbb{R}^{k_i} \rightarrow  \mathbb{R}^{k_i}$ and $p_i:\mathbb{R}_{+} \rightarrow \mathbb {R}^{k_i}$. These dynamics are ``standard-order'' in the sense that the evolving state variable is the player's mixed strategy.

We assume implicitly that both $f_i$ and $p_i$ are such that (\ref{eq:standardorder}) has unique solutions when $x_i(0) \in \Delta (k_i)$ and that $p_i$ is continuous in $t$. Moreover, we assume that the simplex is invariant, i.e., if $x_i(0) \in \Delta (k_i)$, then
$ x_i(t) \in \Delta(k_i)$ for all $t \geq 0.$

The external input vector, $p_i(t)$, becomes a payoff vector once the dynamics are in feedback with a game, i.e.,
$$p_i(t)=P_i(x_{-i}(t)).$$ This payoff formulation allows defining learning dynamics independent of any game and facilitates the use of feedback analysis tools (see Figure~\ref{fig:feedback}). Accordingly, a player's learning dynamics are not explicitly dependent on any utility function, including their own. This feature renders the dynamics \textit{uncoupled} in the terminology of \cite{Hart2003Uncoupled}.
\begin{figure}[!t]
\centerline{\includegraphics[width=.25\columnwidth]{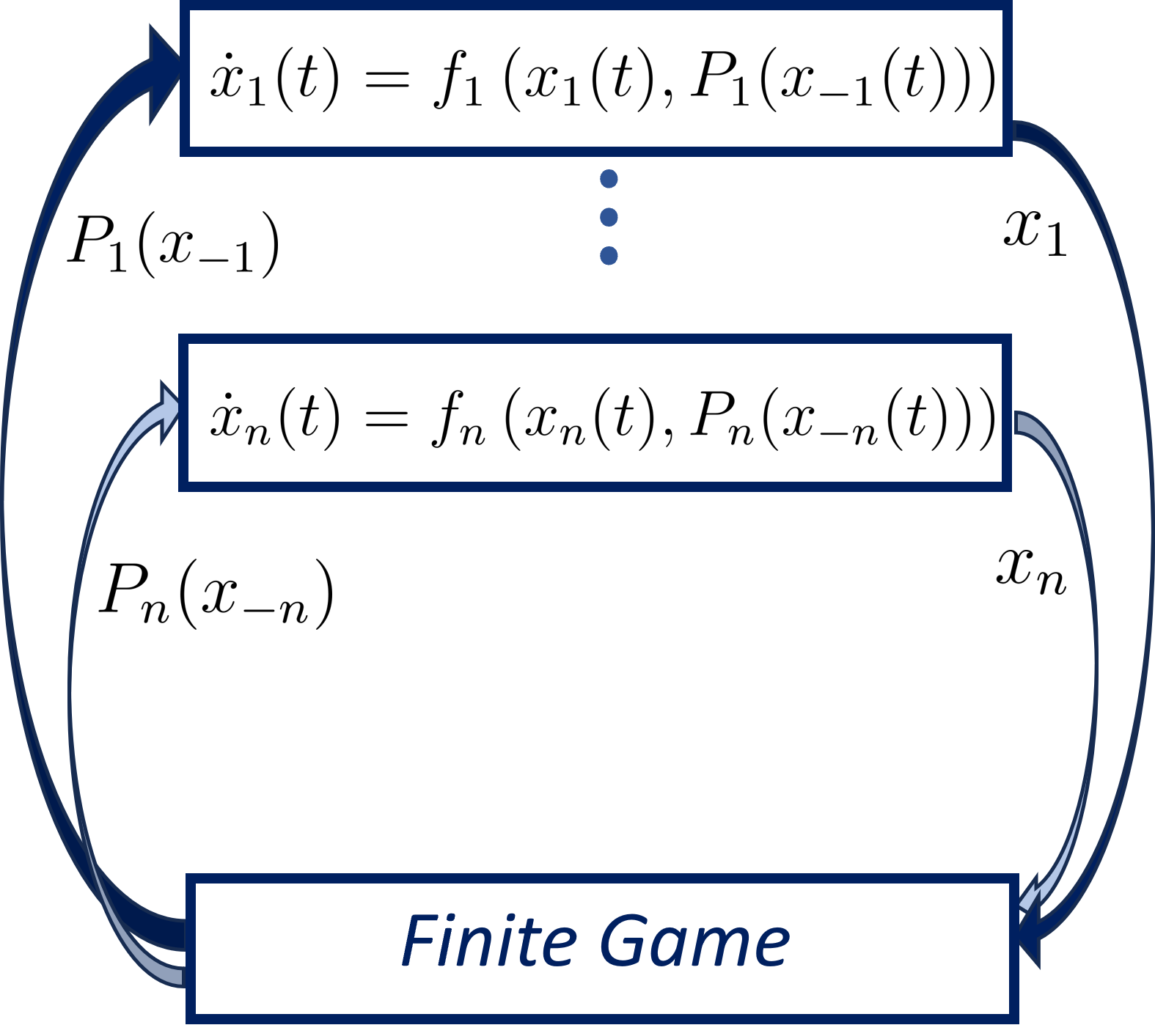}}
\caption{Learning dynamics in feedback with a game}
\label{fig:feedback}
\end{figure}

In higher-order learning, players are allowed auxiliary internal states to further process the payoff vector. There is no longer any restriction to the order of the dynamics of a player.

For any standard-order dynamics, we can define the following \textit{higher-order learning dynamics}:
\begin{subequations}\label{eq:ourhigherorder}
\begin{align}
\dot{x}_i(t)&=f_i(x_i(t),p_i(t)+\phi_i(p_i(t), z_i(t)))\\
\dot{z}_i(t)&=g_i(p_i(t),z_i(t)),
\end{align}
\end{subequations}
where $x_i, p_i,$ and $f_i$ are as before. We  assume implicitly that $f_i,g_i,p_i,$ and $\phi_i$ are such that the dynamics have unique solutions whenever $x_i(0)\in \Delta(k_i)$.

The auxiliary state variable here is $z_i \in \mathbb{R}^{d_i}$, which evolves according to $g_i:\mathbb{R}^{k_i} \times \mathbb{R}^{d_i} \rightarrow  \mathbb{R}^{d_i}$. The higher-order terms also act as an open system that maps an input vector to the auxiliary state.

In this form of additive higher-order learning,  player $i$ continues to utilize standard-order dynamics, but in response to the modified vector $p_i + \phi_i$ instead of just $p_i$, where $\phi_i:\mathbb{R}^{k_i} \times \mathbb{R}^{d_i} \rightarrow  \mathbb{R}^{k_i}$. The modified payoff vector $p_i + \phi_i$ can capture more complex behaviors, such as path dependencies, anticipation, or estimation. It is noted that more general constructions of higher-order dynamics could have non-additive effect of $\phi_i$ on the function $f_i$. Furthermore, both $g_i$ and $\phi_i$ could also be functions of the strategy $x_i$. We will not consider such generalizations herein.

We will impose the following assumption on the higher-order modifications.

\begin{Assumption}\label{Assumption:higherorder}
For any $p_i^*$ there exists $z_i^*$ such that
\begin{align*}
0&=g_i(p^*_i,z^*_i)\\
0&=\phi_i(p^*_i,z^*_i).
\end{align*}
\end{Assumption}
This assumption assures that stationary points of standard-order learning corresponding to NE continue to be stationary points under higher-order learning.

\section{Learning mixed-strategy NE with higher-order replicator dynamics} \label{Sec:Stabilization}\ \\[-15pt]

In this section, we establish learnability of mixed-strategy NE in finite games using higher-order replicator dynamics.

\subsection{Standard and higher-order replicator dynamics}\label{sec:standard-order}

Following (\ref{eq:standardorder}), our starting point is to express replicator dynamics (cf., \cite{sorin2020replicator}) as an open system. For player $i$, these become
\begin{equation}\label{eq:standardOrderReplicator}
\dot{x}_i=\diag{p_{i}-\left(  x_{i} \tr p_{i} \right) \one}x_{i},
\end{equation}
where $\diag{v}$ denotes a diagonal matrix of the elements of the argument vector, $v$.
When in feedback with a game, the dynamics of player $i$ take the form
\begin{equation}\label{eq:RDclosed}
\dot{x}_i=\diag{P_{i}(x_{-i})- \left( x_{i} \tr  P_{i}(x_{-i}) \right)  \one}x_{i},
\end{equation}
which is now in the form of a closed system, where the $x_{-i}$ also evolve according to replicator dynamics.

Our structure for higher-order dynamics in (\ref{eq:ourhigherorder}) applied to replicator dynamics  becomes
\begin{subequations}\label{eq:higherOrderRDgeneral}
\begin{align}
\dot{x}_{i}&=\diag{p_{i}+\phi_{i}(p_i,z_i) -\left( x_{i}\tr (p_{i}+\phi_{i}(p_i,z_i)) \right)\one}x_{i}\\
\dot{z}_i&=g_i(p_i,z_i).
\end{align}
\end{subequations}
As before, these are written as an open system dependent on the external input $p_i$.

\subsection{Examples}

Here, we compare and contrast different forms of higher-order replicator dynamics that have appeared in the literature. Let us first revisit a derivation of replicator dynamics \cite{Hofbauer2009DerivationofRD}. %that will be useful to keep in mind.
Consider a score variable $s_i$ that has the dynamics
$$\dot{s}_i= p_i,$$
i.e., $s_i$ is the integral of the external input vector, viewing replicator dynamics as an open system. Next, define the strategy of player $i$ as
$$x_i=\sigma(s_i),$$
where $\sigma(\cdot)$ is the softmax function (or Gibbs distribution)
$$\sigma: \mathbb{R}^\kappa \rightarrow \Delta(\kappa)$$
defined as
$$\sigma(v) = \frac{1}{\sum_{j=1}^\kappa e^{v_j}} \Pmatrix{e^{v_1}\\ \vdots\\ e^{v_\kappa}},$$ where
$v = (v_1,...,v_\kappa)$. It is straightforward to show that differentiating $x_i$ results in standard-order replicator dynamics (\ref{eq:standardOrderReplicator}).

Reference \cite{gao2021passivity} introduces ``exponentially discounted learning'':
\begin{subequations}\label{eq:EDL}
\begin{align}
\dot{s}_i &= p_i - s_i\\
x_i &= \sigma(s_i)
\end{align}
\end{subequations}
which can be written to mirror (\ref{eq:higherOrderRDgeneral}) as
\begin{align*}
\dot{x}_i &= \diag{(p_i - s_i) - \left(x_i\tr (p_i - s_i)\right)\one }x_i\\
\dot{s}_i &= p_i - s_i.
\end{align*}
It is noted that the original form in (\ref{eq:EDL}) is standard-order, but does not have the strategy as a state variable. Reference \cite{gao2021passivity} goes on to allow additional higher-order behaviors in the form of dynamic dependence on the strategy. A representative special case is
\begin{subequations}\label{eq:EDLfeedback}
\begin{align}
\dot{x}_i &= \diag{(p_i - \xi_i - s_i) - \left(x_i\tr (p_i - \xi_i - s_i)\right)\one }x_i\\
\dot{s}_i &= p_i - \xi_i - s_i\\
\dot{\xi}_i &= x_i - \xi_i.
\end{align}
\end{subequations}

Reference \cite{arslan2006anticipatory} introduces ``anticipatory'' replicator dynamics, which take the form:
\begin{subequations}\label{eqAnticipatory}
\begin{align*}
\dot{x}_i &= \diag{(p_i + \gamma\lambda(p_i - z_i)) - \left(x_i\tr \left(p_i + \gamma\lambda(p_i - z_i)\right)\right)\one}x_i\\
\dot{z}_i &= \lambda(p_i - z_i).
\end{align*}
\end{subequations}
The terminology stems from the quantity $p_i + \gamma\lambda(p_i - z_i)$ serving as a myopic prediction of the future value of $p_i$ for $\lambda \gg 1$. Reference \cite{arslan2006anticipatory} shows that this modification can enable convergence to mixed-strategy NE in cases where standard-order replicator dynamics are unstable.

Reference \cite{laraki2012higher} introduced higher-order replicator dynamics that parallel the original, except that the ``score'' variable is a higher-order integral of the input vector, as in
\begin{align*}
\frac{d^m s_i}{dt^m} &= p_i\\
x_i &= \sigma(s_i).
\end{align*}
In case $m = 2$, these dynamics can be written as
\begin{subequations}\label{eq:2DRD}
\begin{align}
\dot{x}_i &= \diag{z_i - \left(x_i\tr z_i\right) \one }x_i\\
\dot{z}_i &= p_i.
\end{align}
\end{subequations}
Reference \cite{laraki2012higher} goes on to show that this form of higher-order replicator dynamics can lead to behaviors unachievable by standard-order replicator dynamics, such as the elimination of weakly dominated strategies.

These examples illustrate the breadth of possibilities in defining higher-order learning dynamics. In particular, it is noted that both (\ref{eq:EDLfeedback}) and (\ref{eq:2DRD}) do not fall within the present framework of (\ref{eq:higherOrderRDgeneral}), but for different reasons, (\ref{eq:EDLfeedback}) because of the presence of $x_i$ in the higher-order state dynamics, and (\ref{eq:2DRD}) because of violation of Assumption~\ref{Assumption:higherorder}.

\subsection{Stabilizability of isolated completely mixed-strategy NE}

We now return to the existence of higher-order replicator dynamics that lead to mixed-strategy NE. We will focus on the following special case of higher-order replicator dynamics, where the higher-order dynamics and their additive effect
on the payoff vector are defined by a linear dynamical system:
\begin{subequations}\label{eq:higherorderRD}
\begin{align}
\dot{x}_{i}&=\diag{p_{i}+\phi_{i}(p_i,v_i,\xi_i)-\left( x_{i}\tr \left (p_{i}+\phi_{i}(p_i,v_i,\xi_i)\right) \right)\one}x_{i}\\
\dot{v}_i&= N_i\tr p_i-v_i, \label{eq:higherorderRD3}\\
\dot{\xi}_i&=E_i\xi_i+F_i(N_i\tr p_i-v_i)\\
&\quad \phi_{i}(p_i,v_i,\xi_i) =N_i( G_{i} \xi_{i}+H_{i} (N_i\tr p_i-v_i))
\end{align}
\end{subequations}
In this formulation, the auxiliary state variable $z_i$ consists of two components, $v_i$ and $\xi_i$. The matrices $(E_i,F_i,G_i,H_i)$ define the dynamics of the auxiliary states and how they impact the standard-order replicator dynamics. The $v_i$ dynamics represent a washout filter (see Appendix~\ref{App:WashoutFilters}) that has the property that if $p_i(t)$ converges to a constant, then $(N_i\tr p_i(t)-v_i(t))$ converges to zero. The washout filter guarantees satisfaction of Assumption~\ref{Assumption:higherorder}. In particular, for any $p_i^*$, setting
$v_i^* = N_i\tr p_i^*$ and $\xi_i^* = 0$ result in
$$\dot{v}_i = 0, \quad \dot{\xi}_i = 0, \quad \&\quad \phi_i(p_i^*,v^*_i,\xi_i^*) = 0,$$
as desired.
 In fact, for $p_i^* = \alpha_i \one $, one has that $v_i^* = 0$ by definition of $N_i$. Figure~\ref{fig:openloophigherorder} illustrates the structure of these dynamics.

\begin{figure}[!t]
\centerline{\includegraphics[width=.8\columnwidth]{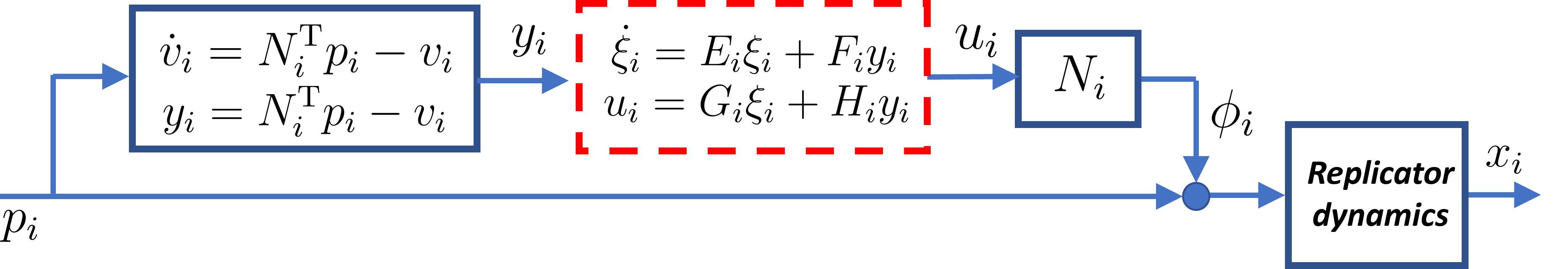}}
\caption{Higher-order replicator dynamics with linear higher-order terms as an open system}
\label{fig:openloophigherorder}
\end{figure}

Assume all players utilize higher-order replicator dynamics as in (\ref{eq:higherorderRD}).
If $x^* = (x_1^*,...,x_n^*)$ is a mixed-strategy equilibrium and $p_i^* = \alpha_i^*\one $ for all $i$, then
\begin{align*}
x^* &= (x_1^*,...,x_n^*)\\
v^* &= (v_1^*,...,v_n^*) = (0,...,0)\\
\xi^* &= (\xi_1^*,...,\xi_n^*) = (0,...,0)\\
\end{align*}
is a stationary point of the overall dynamics.

We are now in a position to state the first main result.

\begin{Theorem}\label{Thm:RDStabilization}
For any finite game with an isolated completely mixed-strategy NE, $x^*$, that satisfies Assumption~\ref{Assumption:Regularity}, there exist higher-order replicator dynamics of the form (\ref{eq:higherorderRD}) with $p_i = P_i(x_{-i})$ such that the equilibrium $(x^*,0,0)$ is locally exponentially stable.

Furthermore, there exists a $\delta > 0$, such that stability is maintained for nearby games, i.e., for all $\tilde{r}:\mathcal{A}\rightarrow \mathbb{R}^n$ such that
$$\max_{a\in\mathcal{A}} \norm{r(a) - \tilde{r}(a)} < \delta.$$
\end{Theorem}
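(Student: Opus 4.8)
The plan is to linearize the closed loop around $(x^*,0,0)$, reduce to tangent coordinates on $\mathcal{X}$, and recognize the result as a decentralized output-feedback stabilization problem whose solvability is exactly what Assumption~\ref{Assumption:Regularity} buys. Concretely, I would substitute $x_i = x_i^* + N_i w_i$ and project the replicator field by $N_i\tr$ (legitimate since $N_i\tr N_i = I$). Writing the modified payoff as $q_i = p_i + \phi_i$, the derivative of the open-system replicator map $\diag{q_i - (x_i\tr q_i)\one}x_i$ at $x_i^*$ with $q_i = \alpha_i\one$ is, in the $x_i$ slot, $-\alpha_i x_i^*\one\tr$, whose projection $-\alpha_i N_i\tr x_i^*\,\one\tr N_i$ vanishes because $\one\tr N_i = 0$; in the $q_i$ slot it is $\Xi_i := \diag{x_i^*} - x_i^*(x_i^*)\tr$. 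Since $\Xi_i\one = 0$ and $\Xi_i$ is symmetric with range $\one^\perp$, one gets $N_i\tr\Xi_i\,\delta q_i = \Sigma_i\,(N_i\tr\delta q_i)$ with $\Sigma_i := N_i\tr\Xi_i N_i$, which is positive definite at a completely mixed (full-support) $x_i^*$. Using $N_i\tr\delta p_i = \mathcal{M}_i w$ from \eqref{eq:payoffnearxstar} and $N_i\tr N_i = I$ to simplify $N_i\tr\phi_i$, the linearization collapses to
\begin{align*}
\dot w &= \Sigma\big[\mathcal{M}w + G\xi + H(\mathcal{M}w - v)\big],\\
\dot v &= \mathcal{M}w - v,\\
\dot\xi &= E\xi + F(\mathcal{M}w - v),
\end{align*}
where $\Sigma = \bdiag{\Sigma_1,\dots,\Sigma_n}\succ 0$, $\mathcal{M}$ is the nonsingular Jacobian of \eqref{eq:calligraphicM}, and $E,F,G,H$ are the block-diagonal free matrices; Assumption~\ref{Assumption:higherorder} guarantees $(x^*,0,0)$ is stationary, so it remains only to choose $(E_i,F_i,G_i,H_i)$ to render this system Hurwitz.

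Second, I would read this as decentralized control: the strategy--washout pair $(w,v)$ is an open-loop plant with per-player inputs entering through $\Sigma$ and per-player measured outputs $e_i = \mathcal{M}_i w - v_i$, to be stabilized by the local dynamic controllers $\dot\xi_i = E_i\xi_i + F_i e_i$, $y_i = G_i\xi_i + H_i e_i$. Its input--output map is $G(s) = \tfrac{s}{s+1}\,\mathcal{M}(sI - \Sigma\mathcal{M})^{-1}\Sigma$, so the washout contributes only the stable factor $\tfrac{s}{s+1}$ (a blocking zero at the origin that pins the equilibrium and a stable pole at $-1$), while the modes to be assigned are those of $\Sigma\mathcal{M}$. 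The crux --- and the step I expect to be the main obstacle --- is to prove that block-diagonal stabilizing controllers exist, i.e., that the plant has no decentralized fixed modes in the closed right half-plane. This is where nonsingularity of $\mathcal{M}$ is essential: it makes the aggregate measurement $w\mapsto\mathcal{M}w$ invertible and keeps each channel's controllability/observability nondegenerate, which I would exploit to rule out unstable fixed modes; a decentralized stabilization theorem then yields admissible $(E_i,F_i,G_i,H_i)$.

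Finally, a Hurwitz linearization gives local exponential stability of $(x^*,0,0)$ for the full nonlinear closed loop by Lyapunov's indirect method, the discarded terms (the quadratic remainder $N_i\tr\tilde P_i$ and the higher-order corrections) being $o(\norm{\cdot})$. For robustness I would fix the controller designed for $r$ and vary the game: by regularity the implicit function theorem continues the completely mixed NE to a nearby $\tilde x^*$ for every $\tilde r$ with $\max_{a}\norm{r(a)-\tilde r(a)}<\delta$, with $\tilde{\mathcal{M}}$ and $\tilde\Sigma$ depending continuously on $\tilde r$. The linearization matrix then varies continuously in $\tilde r$, and since Hurwitzness is an open condition it persists for $\delta$ small enough, giving local exponential stability of $(\tilde x^*,0,0)$ in every nearby game.
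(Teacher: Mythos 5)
Your overall route is the same as the paper's: linearize the closed loop at $(x^*,0,0)$, reduce to tangent coordinates via $\mathcal{N}$, recognize the washout-filtered system as a decentralized output-feedback stabilization problem, invoke a decentralized stabilization theorem, and finish with Lyapunov's indirect method plus openness of the Hurwitz property for the robustness claim. Your factorization is in fact a clean improvement on two of the paper's lemmas: writing the payoff-slot derivative as $\Xi_i=\diag{x_i^*}-x_i^*(x_i^*)\tr$ and using $N_i\tr\Xi_i\,\delta q_i=\Sigma_i N_i\tr\delta q_i$ with $\Sigma_i=N_i\tr\Xi_i N_i\succ0$ exhibits the reduced state matrix as $\Sigma\mathcal{M}$ and the reduced input matrix as $\Sigma$, so the nonsingularity of $\mathcal{N}\tr J\mathcal{N}$ and of each $N_i\tr B_i$ (the paper's Claims~\ref{Claim:nonSingular} and~\ref{Claim:Birank}) become immediate, whereas the paper proves them by a more hands-on null-space argument.

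The genuine gap is that you never actually verify the absence of unstable decentralized fixed modes --- you announce it as ``the step I expect to be the main obstacle'' and offer only the heuristic that nonsingularity of $\mathcal{M}$ ``keeps each channel's controllability/observability nondegenerate.'' That heuristic is not sufficient: the fixed-mode test (Theorem~\ref{thm:davison}) requires, for \emph{every} partition $\mathcal{U}\cup\mathcal{Y}=\{1,\dots,n\}$ and every $\lambda$ with $\mathbf{Re}[\lambda]\ge 0$, that the bordered matrix built from the columns $B\vert^{\mathcal{U}}$ and the rows $C\vert_{\mathcal{Y}}$ have full rank; mixed partitions (some players contributing inputs, others outputs) are exactly the cases not covered by per-channel controllability and observability. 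The paper's Claim~\ref{Claim:final} does this work: $\lambda=0$ is handled by nonsingularity of $\Sigma\mathcal{M}$ (so the blocking zero of $\tfrac{s}{s+1}\mathcal{M}(sI-\Sigma\mathcal{M})^{-1}\Sigma$ at the origin is not a plant pole), and for $\lambda\ne 0$ a block row reduction shows the rank is carried by $\mathcal{M}\vert_{\mathcal{Y}}$ (full row rank since $\mathcal{M}$ is nonsingular), the washout block $\Pmatrix{\mathcal{M}&-(\lambda+1)I}$, and $\Sigma\vert^{\mathcal{U}}$ (full column rank blockwise since each $\Sigma_i\succ0$). With your $\Sigma\mathcal{M}$ factorization this verification is short, but it is the substantive content of the theorem and cannot be left as an intention; as written, your argument does not establish that block-diagonal stabilizing $(E_i,F_i,G_i,H_i)$ exist. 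The remaining steps (stationarity via Assumption~\ref{Assumption:higherorder}, local exponential stability from a Hurwitz linearization, and persistence under small perturbations of $r$) are correct and match the paper.
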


The remainder of this section is devoted to the proof of Theorem~\ref{Thm:RDStabilization}.

\subsection{Reformulation as decentralized stabilization}

Theorem~\ref{Thm:RDStabilization} claims local stability of NE is achievable under higher-order replicator dynamics. The proof will rely on a control theoretic approach that uses linear systems techniques to study the local behavior of nonlinear systems (see Appendix~\ref{sec:ofStabilization}).

Let us rewrite our specific form of higher-order replicator dynamics (\ref{eq:higherorderRD}) as
\begin{subequations}\label{eq:closedloopHORD}
\begin{align}
\dot{x}_{i}&=\diag{P_{i}(x_{-i})+N_i u_i -\left( x_{i}\tr (P_{i}(x_{-i})+N_i u_i) \right)\one}x_{i}\\
\dot{v}_i&=N_i\tr P_{i}(x_{-i}) -v_i   \\
y_i&=N_i\tr P_{i}(x_{-i}) -v_i
\end{align}
\end{subequations}
where we have replaced $p_i$ with $P_i(x_{-i})$ and introduced two variables, a ``measurement'', $y_i$, and ``control'', $u_i$. The control, $u_i$, is interpreted as the additive payoff modification, which evolves according to a linear dynamical system driven by the measurement, $y_i$, according to
\begin{subequations}\label{eq:controller}
\begin{align}
\dot{\xi_i} &= E_i \xi_i + F_i y_i\\
u_i &= G_i \xi_i + H_i y_i.
\end{align}
\end{subequations}

We are now in the setting of Appendix~\ref{sec:ofStabilization}. Let us gather the state variables
$x = (x_1,...,x_n)$, $v = (v_1, ..., v_n)$, measurement variables, $y = (y_1,...,y_n)$, and control variables, $u = (u_1,...,u_n)$.
The overall system can be written as
\begin{subequations}\label{eq:overallRD}
\begin{align}
\Pmatrix{\dot{x}\\ \dot{v}} &= \Pmatrix{\mathcal{F}(x,v,u)\\ \mathcal{N}\tr P(x) - v}    \\
y &= \mathcal{N}\tr P(x) - v
\end{align}
\end{subequations}
where
$$P(x) = \Pmatrix{P_1(x_{-1})\\ \vdots\\ P_n(x_{-n})}.$$
These dynamics have an equilibrium $(x^*,v^*,u^*) = ((x_1^*,...,x_n^*),(0,...,0),(0,...,0)).$

Our goal is to find linear dynamic feedback
\begin{align*}
\dot{\xi} &= E\xi + Fu\\
u &= G\xi + Hy
\end{align*}
so that the overall dynamics
\begin{subequations}\label{eq:RDfeedback}
\begin{align}
\Pmatrix{\dot{x}\\ \dot{v}} &= \Pmatrix{\mathcal{F}\Big(x,v,G\xi+H(\mathcal{N}\tr P(x) - v)\Big)\\ \mathcal{N}\tr P(x) - v}\\
\dot{\xi} &= E \xi + F(\mathcal{N}\tr P(x) - v)
\end{align}
\end{subequations}
are locally exponentially stable at $(x^*,0,0)$.

An important \textit{distinction} with the stabilization framework of Appendix~\ref{sec:ofStabilization} is that the feedback must remain ``uncoupled'' to respect that players only have access to their own payoff vector. This restriction necessitates the block diagonal structure
\begin{align*}
E &= \bdiag{E_1,...,E_n}, \quad F = \bdiag{F_1,...,F_n},\\
G &= \bdiag{G_1,...,G_n}, \quad H = \bdiag{H_1,...,H_n}.
\end{align*}

This restriction brings in the discussion of Appendix~\ref{App:Decentralized}, which provides conditions under which decentralized (i.e., or in our case, uncoupled) stabilization is possible. The remainder of the proof will require deriving the associated linearized dynamics of (\ref{eq:overallRD}) (see upcoming (\ref{eq:DecentralizedFormulation})) and showing that the decentralized stabilization conditions in Appendix~\ref{App:Decentralized} are satisfied.

\subsection{Proof of Theorem~\ref{Thm:RDStabilization}}

Following the process of Appendix~\ref{sec:ofStabilization}, we first construct the linearized dynamics of (\ref{eq:overallRD}) at the equilibrium $(x^*,0,0)$.
To this end, define the block matrix, $J$, with block elements
$$J_{ij} = \nabla_{x_j} \mathcal{F}_i (x^*,0,0)$$
where, following (\ref{eq:closedloopHORD}),
$$\mathcal{F}_i(x,v,u) = \diag{P_{i}(x_{-i})+N_i u_i -\left( x_{i}\tr (P_{i}(x_{-i})+N_i u_i) \right)\one}x_{i}.$$
For $i=j$, one can show that
$$\nabla_{x_i} \mathcal{F}_i(x,v,u) = \diag{P_{i}(x_{-i})+N_i u_i -\left( x_{i}\tr (P_{i}(x_{-i})+N_i u_i) \right)\one} - \diag{x_i}\one (P_i(x_{-i}) + N_i u_i)\tr.$$
Evaluating this expression at $x = x^*$, $v=0$, and $u=0$, results in
\begin{align*}
J_{ii} &= \diag{P_{i}(x^*_{-i}) -\left( (x^*_{i})\tr (P_{i}(x^*_{-i})) \right)\one}-\diag{x^*_i}\one (P^*_i(x_i))\tr\\
&= \diag{\alpha_i^* \one  - \left( (x^*_{i})\tr \alpha_i^*\one \right)\one} - \alpha_i^*\diag{x_i^*}\one \one\tr\\
&= -\alpha_i^* x_i^*\one\tr
\end{align*}
where we used that at a completely mixed-strategy NE, $P_i^*(x_i^*) = \alpha_i^*\one$ for some $\alpha_i^*$.

For $i\not= j$,
\begin{equation}\label{eq:nondiagonalblockofJ}
\nabla_{x_j} \mathcal{F}_i(x,v,u) =
\diag{x_i} \left( \nabla_{x_j} P_{i}(x_{-i}) - \one \Big( x_{i}\tr \big(\nabla_{x_j} P_{i}(x_{-i})\big)\Big)\right) .
\end{equation}
Evaluating this expression at $x=x^*$, $v=0$, and $u=0$, results in
$$J_{ij} = \diag{x_i^*}\left(M_{ij} - \one (x_i^*)\tr M_{ij}\right).$$
Using similar arguments,
$$\nabla_{u_i} \mathcal{F}_i(x,v,u) =
\diag{x_i}\left(N_i - \mathbf{1} x_i\tr N_i\right).$$
Define
\begin{align*}
B_i &= \nabla_{u_i} \mathcal{F}_i(x^*,0,0)\\
&= \diag{x_i^*}\left(N_i - \mathbf{1} (x_i^*)\tr N_i\right).
\end{align*}

Let $\tilde{x} = x - x^*$. The linearized dynamics of (\ref{eq:closedloopHORD}) now can be written as
\begin{align*}
\Pmatrix{\dot{\tilde{x}}\\ \dot{v}} &= \Pmatrix{J&0\\ \mathcal{N}\tr M&-I}\Pmatrix{\tilde{x}\\ v} + \Pmatrix{\bdiag{B_1,...,B_n}\\0} u\\
y &= \Pmatrix{\mathcal{N}\tr M& -I} \Pmatrix{\tilde{x}\\ v}
\end{align*}

Recall equation (\ref{eq:nearby}) in Section~\ref{sec:regularity}. To study the local behavior around $x^*$,  we  define
$$\tilde{x}_i= x_i - x_i^* = N_iw_i$$
for each player $i$, or collectively
$$\tilde{x} = \mathcal{N} w.$$
The collective reduced-order linearized dynamics of the open system now take the form
\begin{subequations}\label{eq:DecentralizedFormulation}
\begin{align}
\Pmatrix{\dot{w} \\ \dot{v}}
&= \Pmatrix{\mathcal{N}\tr J\mathcal{N}&0\\\mathcal{M} & -I}\Pmatrix{w\\ v} + \Pmatrix{\bdiag{N_1\tr B_1,...,N_n\tr B_n}\\0}u\\
y &= \Pmatrix{\mathcal{M}&-I}\Pmatrix{w\\ v}
\end{align}
\end{subequations}
which uses that $\mathcal{M} = \mathcal{N}\tr M\mathcal{N}$.

Before proceeding, it is interesting to note that linearized standard-order replicator dynamics near a completely mixed equilibrium satisfy
$$\dot{w} = \mathcal{N}\tr J \mathcal{N} w.$$
The block diagonal terms are
$$N_i\tr J_{ii} N_i = -\alpha_i^* N_i\tr x_i^* \one\tr N_i = 0$$
by definition of $N_i$. Accordingly, the matrix $\mathcal{N}\tr J\mathcal{N}$ has zero trace, which implies that a completely mixed NE cannot be locally exponentially stable under standard-order replicator dynamics.

We now show that the conditions of Appendix~\ref{App:Decentralized} are satisfied by the linearized dynamics (\ref{eq:DecentralizedFormulation}), implying the existence of decentralized stabilizing feedback, thereby completing the proof of
Theorem~\ref{Thm:RDStabilization}.

\begin{Claim}\label{Claim:nonSingular}
Under Assumption~\ref{Assumption:Regularity}, $\mathcal{N} \tr J \mathcal{N}$ is non-singular.
\end{Claim}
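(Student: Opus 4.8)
The plan is to show that $\mathcal{N}\tr J \mathcal{N}$ factors as the product of a block-diagonal positive-definite matrix and the regularity matrix $\mathcal{M}$, so that non-singularity of $\mathcal{M}$ (Assumption~\ref{Assumption:Regularity}) transfers directly to $\mathcal{N}\tr J\mathcal{N}$. First I would compute the $(i,j)$ block $N_i\tr J_{ij}N_j$ from the expressions for $J_{ii}$ and $J_{ij}$ derived above. The diagonal blocks vanish exactly as already observed, since $\one\tr N_i = 0$ forces $N_i\tr J_{ii} N_i = -\alpha_i^* N_i\tr x_i^* (\one\tr N_i) = 0$. For the off-diagonal blocks, using $\diag{x_i^*}\one = x_i^*$ I would rewrite
$$N_i\tr J_{ij} N_j = N_i\tr\left(\diag{x_i^*} - x_i^*(x_i^*)\tr\right) M_{ij} N_j = N_i\tr Q_i M_{ij}N_j,$$
where $Q_i := \diag{x_i^*} - x_i^*(x_i^*)\tr$.

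The central structural fact I would establish is that $Q_i$ is the covariance matrix of the categorical distribution $x_i^*$: it is symmetric positive semidefinite, satisfies $Q_i\one = 0$, and --- because $x_i^*$ is completely mixed (all entries strictly positive) --- has kernel exactly $\spans{\one}$. Since $Q_i\one = 0$, one gets the identity $Q_i = Q_i N_i N_i\tr$ (using $N_i N_i\tr = I - \tfrac{1}{k_i}\one\one\tr$). Substituting this into the block expression and using $N_i\tr N_i = I$ yields
$$N_i\tr J_{ij}N_j = (N_i\tr Q_i N_i)(N_i\tr M_{ij} N_j) = \hat{Q}_i\, \mathcal{M}_{ij},$$
with $\hat{Q}_i := N_i\tr Q_i N_i$ and $\mathcal{M}_{ij} = N_i\tr M_{ij} N_j$. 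This identity also holds on the diagonal, since both sides are zero there ($M_{ii}=0$ and the diagonal block of $\mathcal{N}\tr J\mathcal{N}$ vanishes). Collecting the blocks gives $\mathcal{N}\tr J\mathcal{N} = \hat{\mathcal{Q}}\,\mathcal{M}$, where $\hat{\mathcal{Q}} = \bdiag{\hat{Q}_1,\dots,\hat{Q}_n}$.

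To finish, I would argue that $\hat{\mathcal{Q}}$ is invertible: for any nonzero $w$, the vector $N_iw$ is nonzero (full column rank) and orthogonal to $\one$ (as $\one\tr N_i = 0$), hence lies outside $\ker Q_i = \spans{\one}$, so $w\tr\hat{Q}_i w = (N_iw)\tr Q_i (N_i w) > 0$; thus each $\hat{Q}_i$, and hence $\hat{\mathcal{Q}}$, is positive definite. Since $\mathcal{M}$ is non-singular by Assumption~\ref{Assumption:Regularity}, the product $\hat{\mathcal{Q}}\mathcal{M}$ is non-singular, which is the claim. The main obstacle I anticipate is the factorization step: recognizing $Q_i$ as a categorical covariance matrix (so that its kernel is precisely $\spans{\one}$) and exploiting $Q_i = Q_i N_i N_i\tr$ to peel the regularity matrix $\mathcal{M}$ cleanly out of the sandwiched expression. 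Everything else is routine block bookkeeping.
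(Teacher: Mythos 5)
Your proof is correct, and it takes a genuinely different route from the paper's. The paper argues by kernel-chasing: it supposes $\mathcal{N}\tr J\mathcal{N}c=0$, uses the orthogonality $\one\tr\diag{x_i^*}\zeta_i=0$ together with the fact that the null space of $N_i\tr$ is $\spans{\one}$ to force $\zeta_i=0$ blockwise, and then left-multiplies by $N_i\tr$ to conclude $\mathcal{M}c=0$, hence $c=0$. You instead exhibit the explicit factorization $\mathcal{N}\tr J\mathcal{N}=\hat{\mathcal{Q}}\mathcal{M}$ with $\hat{\mathcal{Q}}=\bdiag{N_1\tr Q_1N_1,\dots,N_n\tr Q_nN_n}$, which rests on the same two underlying facts (the identity $J_{ij}=Q_iM_{ij}$ with $Q_i=\diag{x_i^*}-x_i^*(x_i^*)\tr$, and the transversality of $\ker Q_i=\spans{\one}$ with the range of $N_i$) but packages them as a clean matrix identity; all your intermediate steps check out, including $Q_i\one=0$, $N_iN_i\tr=I-\tfrac{1}{k_i}\one\one\tr$, and the resulting $Q_i=Q_iN_iN_i\tr$ that lets you peel $\mathcal{M}_{ij}$ out of the sandwich. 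The factorization buys strictly more than the paper's argument: it shows $\mathcal{N}\tr J\mathcal{N}$ is non-singular \emph{if and only if} $\mathcal{M}$ is (the paper proves only the needed implication), it identifies the block-diagonal factor as positive definite (the reduced categorical covariance of each $x_i^*$), and it would make the later heterogeneous case immediate, since replacing some players' blocks $\hat{Q}_i$ by $I$ preserves invertibility of the block-diagonal factor. The paper's argument, for its part, avoids introducing $Q_i$ and the projector identity, and generalizes slightly more readily to the mixed replicator/target-gradient-play setting as written. Either way, your proof is complete and correct.
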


\begin{proof} We will show that there is no non-zero vector such that
$$\mathcal{N}\tr J \mathcal{N} c = 0.$$
Let $c = (c_1,...,c_n)$ be suitably partitioned and suppose the $i^\mathrm{th}$ block component of $\mathcal{N}\tr J\mathcal{N}$
satisfies
$$N_i\tr \diag{x_i^*} \underbrace{\sum_{j\not= i} (M_{ij} N_j - \mathbf{1}(x_i^*)\tr M_{ij}N_j) c_j}_{\zeta_i} = 0.$$
First, note that $\diag{x_i^*} \zeta_i$ is orthogonal to the null space of $N_i\tr$, which is spanned by $\one$:
\begin{align*}
\one\tr \diag{x_i^*}  \zeta_i &= \one\tr \diag{x_i^*} \sum_{j\not= i} (M_{ij} N_j - \mathbf{1}(x_i^*)\tr M_{ij}N_j)\\
&= 0.
\end{align*}
Accordingly, since $\diag{x_i^*}$ is invertible, it must be that $\zeta_i = 0$, or
$$\sum_{j\not= i} M_{ij}N_j c_j = \sum_{j\not= i} \mathbf{1}(x_i^*)\tr M_{ij}N_j c_j.$$
Multiplying on the left by $N_i\tr$ results in
$$\sum_{j\not=i} N_i\tr M_{ij} N_j c_j = \sum_{j\not= i} N_i\tr \one (x_i^*)\tr M_{ij}N_j c_j = 0,$$
and therefore
$$\mathcal{M}_i c = 0.$$
Repeating this analysis for all block components results in
$$\mathcal{M}c = 0.$$
Since $\mathcal{M}$ is non-singular by assumption, $c= 0$.
\end{proof}

\begin{Claim}\label{Claim:Birank}
The $k_i - 1\times k_i - 1$ matrix
$$N_i\tr B_i = N_i\tr \diag{x_i^*}(N_i - \one (x_i^*)\tr N_i)$$
is non-singular.
\end{Claim}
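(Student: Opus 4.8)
The plan is to prove the stronger statement that $N_i\tr B_i$ is \emph{symmetric positive definite}, which immediately yields non-singularity. First I would simplify the expression. Writing $D = \diag{x_i^*}$ and using $D\one = x_i^*$, the claimed matrix expands to
$$N_i\tr B_i = N_i\tr D N_i - (N_i\tr x_i^*)\big((x_i^*)\tr N_i\big),$$
which is manifestly symmetric: a Gram-type term $N_i\tr D N_i$ (positive definite, since $D$ has strictly positive diagonal and $N_i$ has full column rank by $N_i\tr N_i = I$) minus a rank-one positive-semidefinite term.

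Next I would evaluate the associated quadratic form on an arbitrary $w \neq 0$. Set $u = N_i w$. Because $N_i\tr N_i = I$, the matrix $N_i$ has full column rank, so $u \neq 0$ whenever $w \neq 0$. Then
$$w\tr N_i\tr B_i w = u\tr D u - \big((x_i^*)\tr u\big)^2 = \sum_\kappa x^*_{i\kappa}\, u_\kappa^2 - \Big(\sum_\kappa x^*_{i\kappa}\, u_\kappa\Big)^2.$$
Since $x^*$ is completely mixed, the entries $x^*_{i\kappa}$ are strictly positive and sum to one, so the right-hand side is exactly the variance of the random variable taking value $u_\kappa$ with probability $x^*_{i\kappa}$; in particular it is non-negative.

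The remaining point is to rule out the degenerate case. By Cauchy--Schwarz (equivalently, by the equality condition for the variance), this quadratic form vanishes only when $u$ is constant across its coordinates, i.e.\ $u \in \spans{\one}$. But the defining property $\one\tr N_i = 0$ forces $\one\tr u = \one\tr N_i w = 0$, and a nonzero constant vector cannot be orthogonal to $\one$; hence $u = 0$, contradicting $u \neq 0$. Therefore $w\tr N_i\tr B_i w > 0$ for every $w \neq 0$, so $N_i\tr B_i$ is positive definite and thus non-singular. The only delicate step is this last one: one must combine the tightness condition of Cauchy--Schwarz (namely that $u$ be constant) with the constraint $\one\tr N_i = 0$ to force $u = 0$. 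Everything else is direct substitution using $D\one = x_i^*$, $\one\tr x_i^* = 1$, and the full column rank of $N_i$, so I expect no essential obstacle beyond recognizing the quadratic form as a variance, which is what makes the positive-definiteness transparent.
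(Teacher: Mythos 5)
Your proof is correct, but it takes a genuinely different route from the paper's. The paper argues via kernels: if $N_i\tr \diag{x_i^*}(N_i - \one (x_i^*)\tr N_i)c = 0$, it observes (as in Claim~\ref{Claim:nonSingular}) that $\diag{x_i^*}(N_i - \one (x_i^*)\tr N_i)c$ is orthogonal to $\one$, which spans the null space of $N_i\tr$, so this vector must vanish; invertibility of $\diag{x_i^*}$ then gives $(N_i - \one (x_i^*)\tr N_i)c = 0$, and left-multiplying by $N_i\tr$ (using $N_i\tr N_i = I$ and $N_i\tr \one = 0$) yields $c = 0$. You instead prove the strictly stronger statement that $N_i\tr B_i$ is symmetric positive definite, by writing it as $N_i\tr D N_i - (N_i\tr x_i^*)(N_i\tr x_i^*)\tr$ and recognizing the quadratic form $w\tr N_i\tr B_i w$, with $u = N_i w$, as the variance of a random variable taking value $u_\kappa$ with probability $x^*_{i\kappa}$; the equality case of Cauchy--Schwarz forces $u \in \spans{\one}$, which together with $\one\tr N_i = 0$ and full column rank of $N_i$ forces $w=0$. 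Both arguments rest on the same two structural facts ($\one\tr N_i = 0$, $N_i\tr N_i = I$, plus complete mixedness of $x_i^*$), but yours buys more: positive definiteness (hence real positive eigenvalues and symmetry) rather than mere invertibility, which could be reused elsewhere. The paper's version has the stylistic advantage of being a verbatim reuse of the Claim~\ref{Claim:nonSingular} machinery. Your one delicate step --- combining the tightness condition of the variance bound with $\one\tr u = 0$ --- is handled correctly.
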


\begin{proof} Following arguments similar to the proof of Claim~\ref{Claim:nonSingular}, one can show that
if
$$N_i\tr B_i c = N_i\tr \diag{x_i^*}(N_i - \one (x_i^*)\tr N_i) c = 0$$
for some $c$, then necessarily
$$(N_i - \one (x_i^*)\tr N_i) c= 0.$$
Multiplying on the left by $N_i\tr$ implies that $c = 0$.
\end{proof}

\begin{Claim}\label{Claim:final}
For all partitions, $\mathcal{U} \cup \mathcal{Y} = \theset{1,2,...,n}$, the rank of partitioned matrix
\begin{equation}\label{eq:DecentralizedtTestMatrix}
\mathbf{\Sigma} = \Pmatrix{\Pmatrix{\mathcal{N}\tr J \mathcal{N} -\lambda I & 0\\ \mathcal{M}&-(\lambda+1)I} &
\left.\Pmatrix{\bdiag{N_1\tr B_1,...,N_n\tr B_n}\\0}\right\vert^\mathcal{U}\\
\left.\Pmatrix{\mathcal{M} & \hbox{\hspace{50pt}}& -I}\right\vert_\mathcal{Y} & 0 } \end{equation}
is $2\ell$ for all $\lambda$ with positive real-part , where $$\ell=\sum_{i=1}^{n} (k_i -1).$$
\end{Claim}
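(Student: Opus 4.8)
The matrix $\mathbf{\Sigma}$ is precisely the PBH / Anderson--Clements test matrix whose rank deficiency would signal a decentralized (uncoupled) fixed mode of the triple in (\ref{eq:DecentralizedFormulation}): state matrix $\Pmatrix{\mathcal N\tr J\mathcal N & 0\\ \mathcal M & -I}$, input matrix with nonzero block $\bdiag{N_1\tr B_1,\dots,N_n\tr B_n}$, and output matrix $\Pmatrix{\mathcal M & -I}$. The plan is to compute $\ker\mathbf{\Sigma}$ directly, doing the bookkeeping block-by-block (per player) so that one computation covers all partitions $\mathcal U\cup\mathcal Y$ at once. Since $\lambda$ has positive real part I may freely use $\lambda\neq0$ and $\lambda\neq-1$; in particular $-(\lambda+1)I$ is invertible, which is what the washout channel contributes.

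The enabling step I would establish first is a structural identity reconciling the two different matrices $\mathcal N\tr J\mathcal N$ and $\mathcal M$ that appear in $A$ and $C$. The off-diagonal blocks obey $J_{ij}=D_iM_{ij}$ with $D_i=\diag{x_i^*}-x_i^*(x_i^*)\tr$, while $J_{ii}=-\alpha_i^*x_i^*\one\tr$ contributes nothing after reduction since $\one\tr N_i=0$. Because the NE is completely mixed, $D_i$ is symmetric, annihilates $\one$, and is positive definite on the tangent space $\{\,w:\one\tr w=0\,\}=\mathrm{range}(N_i)$; hence $\mathrm{range}(N_i)$ is $D_i$-invariant, so $D_iN_i=N_i\Lambda_i$ with $\Lambda_i=N_i\tr D_iN_i$ nonsingular, and $N_i\tr D_i=\Lambda_i\tr N_i\tr$. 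Therefore
\[ N_i\tr J_{ij}N_j=N_i\tr D_iM_{ij}N_j=\Lambda_i\tr N_i\tr M_{ij}N_j=\Lambda_i\tr\mathcal M_{ij}, \]
so that $\mathcal N\tr J\mathcal N=\Theta\mathcal M$ with $\Theta=\bdiag{\Lambda_1\tr,\dots,\Lambda_n\tr}$ block-diagonal and nonsingular; equivalently, the $i$-th block row of $\mathcal N\tr J\mathcal N$ is a nonsingular left multiple of the $i$-th block row of $\mathcal M$. (This also re-derives Claim~\ref{Claim:nonSingular}.)

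With the identity available I would take a null vector $(a,b,c)$ of $\mathbf{\Sigma}$ along the $w$-, $v$-, and selected $u$-columns and process the three block rows in order. The $\dot v$ rows give $b=(\lambda+1)^{-1}\mathcal M a$; substituting into the $\mathcal Y$-output rows and using $\lambda\neq0$ forces $\mathcal M_{\mathcal Y}a=0$, where $\mathcal M_{\mathcal Y}$ stacks the $\mathcal Y$-block rows of $\mathcal M$. The $\dot w$ rows read $(\mathcal N\tr J\mathcal N-\lambda I)a+B^{\mathcal U}c=0$, where $B^{\mathcal U}$ collects the columns $N_i\tr B_i$, $i\in\mathcal U$; projecting onto the $\mathcal Y$-block rows and invoking the identity gives $(\mathcal N\tr J\mathcal N)_{\mathcal Y}a=\bdiag{\Lambda_i\tr:i\in\mathcal Y}\,\mathcal M_{\mathcal Y}a=0$, so these rows collapse to $-\lambda a_{\mathcal Y}=0$, i.e.\ $a_{\mathcal Y}=0$. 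The $\mathcal U$-block rows of the $\dot w$ equation then determine $c$ uniquely through the blocks $N_i\tr B_i$, invertible by Claim~\ref{Claim:Birank}. What remains is $a$ supported on the $\mathcal U$ blocks with $\mathcal M_{\mathcal Y\mathcal U}a_{\mathcal U}=0$, so $\ker\mathbf{\Sigma}\cong\ker\mathcal M_{\mathcal Y\mathcal U}$ and $\mathrm{rank}\,\mathbf{\Sigma}=2\ell+\mathrm{rank}\,\mathcal M_{\mathcal Y\mathcal U}\ge 2\ell$. The rank thus attains the full state dimension $2\ell$ required by the decentralized test for every partition (with equality on decoupled partitions and on the two extreme partitions $\mathcal Y=\emptyset$, $\mathcal U=\emptyset$, which reduce to the plain PBH reachability and observability tests at $\lambda$, clean by nonsingularity of $\mathcal M$ from Assumption~\ref{Assumption:Regularity} together with Claim~\ref{Claim:Birank}).

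The main obstacle is discovering and proving the identity $\mathcal N\tr J\mathcal N=\Theta\mathcal M$: it is the positive-definiteness of $D_i$ on the tangent space, hence the $D_i$-invariance of $\mathrm{range}(N_i)$, that makes the $\dot w$-equation and the output equation decouple over the partition; without it the $\mathcal Y$-components of $a$ are not forced to vanish and the block-by-block reduction stalls. The secondary difficulty is organizational rather than analytic: the computation must be phrased per player so that it holds uniformly over all $2^n$ partitions and all $\lambda$ in the open right-half plane, the only analytic inputs being $\lambda\neq0,-1$, invertibility of the diagonal blocks $N_i\tr B_i$ (Claim~\ref{Claim:Birank}), and nonsingularity of $\mathcal M$ (Assumption~\ref{Assumption:Regularity}, re-expressed through the identity). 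Matching this rank condition against the criterion of Appendix~\ref{App:Decentralized} then certifies the absence of right-half-plane decentralized fixed modes and completes the proof.
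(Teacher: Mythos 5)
Your proposal is correct, but it takes a genuinely different route from the paper's own proof. The paper never relates $\mathcal{N}\tr J\mathcal{N}$ to $\mathcal{M}$: it performs three block row operations that zero out the $v$-columns of the output rows and then exhibits $2\ell$ manifestly independent rows (the rows $-\lambda\,\mathcal{M}\vert_{\mathcal{Y}}$ through the $w$-columns, the full washout block $\Pmatrix{\mathcal{M} & -(\lambda+1)I & 0}$ through the $v$-columns, and the $\mathcal{U}$-rows of the remaining block through the $u$-columns), using only nonsingularity of $\mathcal{M}$, nonsingularity of the $N_i\tr B_i$ (Claims~\ref{Claim:nonSingular} and~\ref{Claim:Birank}), and $\lambda\neq 0,-1$. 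You instead compute $\ker\mathbf{\Sigma}$, which forces you to establish the structural identity $\mathcal{N}\tr J\mathcal{N}=\Theta\mathcal{M}$ with $\Theta=\bdiag{\Lambda_1,\dots,\Lambda_n}$ and $\Lambda_i=N_i\tr\bigl(\diag{x_i^*}-x_i^*(x_i^*)\tr\bigr)N_i\succ 0$; this identity is correct (the invariance of the range of $N_i$ under $D_i$ already follows from symmetry of $D_i$ and $D_i\one=0$, while positive definiteness on the tangent space is what makes $\Lambda_i$ nonsingular), it subsumes Claim~\ref{Claim:nonSingular}, and your subsequent block-by-block elimination is sound. What your route buys is precision: the exact count $\mathbf{rank}\,\mathbf{\Sigma}=2\ell+\mathbf{rank}\,\mathcal{M}_{\mathcal{Y}\mathcal{U}}$ (where $\mathcal{M}_{\mathcal{Y}\mathcal{U}}$ is the submatrix of $\mathcal{M}$ with block rows in $\mathcal{Y}$ and block columns in $\mathcal{U}$) shows that for coupled partitions the rank can strictly exceed $2\ell$, so the literal ``$=2\ell$'' in the claim, and likewise ``$=n$'' in the statement of Theorem~\ref{thm:davison}, should be read as ``$\geq 2\ell$'' and ``$\geq n$'', which is the actual no-unstable-decentralized-fixed-mode condition; the paper's own argument also only certifies this lower bound. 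What the paper's route buys is economy: no additional lemma is needed and the row-rank count is a few lines. Both arguments deliver exactly what the decentralized stabilizability test of Appendix~\ref{App:Decentralized} requires.
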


\begin{proof}
 First, note that $\mathbf{\Sigma}$ has rank $2\ell$ for $\lambda =0$ from the top left block matrix by non-singularity of $\mathcal{N}\tr J\mathcal{N}$ from Claim~\ref{Claim:nonSingular}.

Perform the following block row operations: (i) multiply the bottom block row by $-(\lambda+1)$, (ii) add the rows corresponding to $\mathcal{Y}$ from the middle block row to the bottom block row, and (iii) swap the first and last block rows. Then the rank of $\mathbf{\Sigma}$ is the rank of
$$\Pmatrix{-\lambda \mathcal{M}\vert_\mathcal{Y} & 0 & 0 \\ \mathcal{M} & -(\lambda+1) I & 0\\\mathcal{N} \tr J \mathcal{N} - \lambda I& 0&\left.\Pmatrix{\bdiag{N_1\tr B_1,...,N_n\tr B_n}}\right\vert^\mathcal{U}}.\nonumber$$
The first block row provide a row rank of $\sum_{e \in \mathcal{Y}} (k_e -1)$ since $\mathcal{M}$ is non-singular and $\lambda \neq 0$. The middle block row provides a row rank of $\ell$. Finally, the last block row provides a row rank of $\sum_{q\in \mathcal{U}} (k_q -1)$ , since each matrix $N_i\tr {B}_i$ has rank $k_i-1$.

\end{proof}

With Claim~\ref{Claim:final}, the proof of Theorem~\ref{Thm:RDStabilization} is complete. There exist block diagonal $E$, $F$, $G$, and $H$, such that the linearized dynamics of (\ref{eq:RDfeedback}) at the equilibrium $(x^*,0,0)$ satisfy the conditions of Theorem~\ref{thm:ofStabilization} in Appendix~\ref{sec:ofStabilization}, and hence the equilibrium is locally exponentially stable. The second part of Theorem~\ref{Thm:RDStabilization} follows from the robustness result of Theorem~\ref{thm:robustStability}.

\section{Learning  mixed-strategy NE with heterogeneous higher-order uncoupled dynamics}\label{sec:heterogeneous}

The previous section focused on learnability of completely mixed-strategy NE using higher-order replicator dynamics.
In this section, we extract the essential features that enable this stabilization.

\subsection{General sufficient conditions for learnability}\label{sec:general}

Let $x^*$ be an isolated completely mixed-strategy NE. An essential step in establishing learnability using replicator dynamics was rewriting higher-order replicator dynamics in the form of \eqref{eq:closedloopHORD}, defining both the control, $u_i$, and the measurement, $y_i$. This step, along with the point $(x^*, 0,0)$ being an equilibrium of \eqref{eq:overallRD}, enabled us to be in the setting of Appendix~\ref{sec:ofStabilization}.

We will follow the same steps, replacing replicator dynamics of player $i$ with some arbitrary learning dynamics $f_i$. In particular, we rewrite \eqref{eq:closedloopHORD} as follows:
\begin{subequations}\label{eq:IOHeterogeneous}
\begin{align}
\dot{x}_{i}&=f_i(x_i,P_{i}(x_{-i})+N_iu_i)\\
\dot{v}_i&=N_i\tr P_{i}(x_{-i}) -v_i   \\
y_i&=N_i\tr P_{i}(x_{-i}) -v_i.
\end{align}
\end{subequations}

Gathering state variables, along with measurements and inputs, we can write the overall system as follows:
\begin{subequations}\label{eq:overallHeterogeneous}
\begin{align}
\Pmatrix{\dot{x}\\ \dot{v}} &= \Pmatrix{\mathcal{L}(x,v,u)\\ \mathcal{N}\tr P(x) - v}    \\
y &= \mathcal{N}\tr P(x) - v.
\end{align}
\end{subequations}
Here, $\mathcal{F}(x,v,u)$  is replaced by $\mathcal{L}(x,v,u)$ indicating the change in players' dynamics. In particular, the $i^{\mathrm{th}}$ block row of $\mathcal{L}$ is
$$\mathcal{L}_i(x,v,u) = f_i(x_i,P_{i}(x_{-i})+N_iu_i).$$
It is now evident that the point $(x^*,0,0)$ must be an equilibrium of \eqref{eq:overallHeterogeneous}.

The second step was finding the collective reduced-order linear dynamics around the equilibrium $(x^*,0,0)$. Following the same steps, the collective reduced-order linear dynamics of \eqref{eq:overallHeterogeneous} around $(x^*,0,0)$ take the form
\begin{subequations}\label{eq:DecentralizedFormulationHeterogeneous}
\begin{align}
\Pmatrix{\dot{w} \\ \dot{v}}
&= \Pmatrix{\mathcal{N}\tr \bar{J} \mathcal{N}&0\\\mathcal{M} & -I}\Pmatrix{w\\ v} + \Pmatrix{\bdiag{N_1\tr \bar{B}_1,...,N_n\tr \bar{B}_n}\\0}u\\
y &= \Pmatrix{\mathcal{M}&-I}\Pmatrix{w\\ v},
\end{align}
\end{subequations}
where instead of $\mathcal{N}\tr J \mathcal{N}$ and $$\Pmatrix{\bdiag{N_1\tr B_1,...,N_n\tr B_n}\\0},$$
 we now have $\mathcal{N}\tr \bar{J} \mathcal{N}$ and  
$$\Pmatrix{\bdiag{N_1\tr \bar{B}_1,...,N_n\tr \bar{B}_n}\\0}.$$ 
Note that $f_i$ is not a function of $v_i$, which results in the zero upper-right block in the local dynamics matrix.

Finally, we analyzed the rank of \eqref{eq:DecentralizedtTestMatrix} for all required partitions. In analyzing the rank of \eqref{eq:DecentralizedtTestMatrix}, we used that both $\mathcal{N}\tr J \mathcal{N}$ and $N_i\tr B_i$,  $i=1,\hdots,n,$ were non-singular. We are now ready to state the main result.

\begin{Theorem}\label{Thm:HeterogeneousStabilization}
Consider a finite game with an isolated completely mixed-strategy NE, denoted by $x^*$, that satisfies Assumption~\ref{Assumption:Regularity}. For any standard-order uncoupled learning dynamics that satisfy the following:
\begin{itemize}
\item The point $(x^*,0,0)$ is an equilibrium of \eqref{eq:overallHeterogeneous}
\item The matrices $\mathcal{N}\tr \bar{J} \mathcal{N}$ and $ N_i\tr \bar{B}_i$, $i=1,\hdots,n$, are non-singular
\end{itemize}
there exists a higher-order uncoupled version of the dynamics such that the equilibrium $(x^*,0,0)$ is locally exponentially stable.
\end{Theorem}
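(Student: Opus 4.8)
The plan is to run the argument of Theorem~\ref{Thm:RDStabilization} essentially verbatim, exploiting that the reduced-order linearized dynamics \eqref{eq:DecentralizedFormulationHeterogeneous} of the heterogeneous open system have exactly the same block structure as \eqref{eq:DecentralizedFormulation}, with $\mathcal{N}\tr J \mathcal{N}$ replaced by $\mathcal{N}\tr \bar{J} \mathcal{N}$ and each $N_i\tr B_i$ replaced by $N_i\tr \bar{B}_i$. The entire reduction to decentralized stabilization --- writing the closed loop in the input/output form \eqref{eq:overallHeterogeneous}, linearizing at $(x^*,0,0)$, and passing to reduced coordinates via $\tilde{x}_i = N_i w_i$ --- is generic and uses no property specific to replicator dynamics beyond the two non-singularity facts. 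Here those facts are hypotheses rather than conclusions, so the bulk of the work reduces to re-checking that the decentralized stabilization rank test still holds.

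First I would note that hypothesis (i) supplies the equilibrium $(x^*,0,0)$ of \eqref{eq:overallHeterogeneous}, and that the reduced-order form \eqref{eq:DecentralizedFormulationHeterogeneous} is obtained exactly as before, the vanishing upper-right block persisting because $f_i$ does not depend on $v_i$. The only non-routine step is to re-establish Claim~\ref{Claim:final} with $\bar{J}$ and $\bar{B}_i$ in place of $J$ and $B_i$. Inspecting that proof, the three rank contributions rely solely on: non-singularity of $\mathcal{N}\tr \bar{J} \mathcal{N}$ (used for the $\lambda = 0$ case via the top-left block), non-singularity of $\mathcal{M}$ (from Assumption~\ref{Assumption:Regularity}, used for the $\mathcal{Y}$-block row when $\lambda \neq 0$), and each $N_i\tr \bar{B}_i$ having rank $k_i - 1$ (used for the $\mathcal{U}$-block row). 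Since each $N_i\tr \bar{B}_i$ is a square $(k_i-1) \times (k_i-1)$ matrix, hypothesis (ii) --- its non-singularity --- is exactly full rank $k_i - 1$, so the last contribution is immediate. The block row operations (scaling, adding the $\mathcal{Y}$ rows, and swapping) are purely formal and transfer unchanged.

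With the analog of Claim~\ref{Claim:final} in hand, the partitioned test matrix $\mathbf{\Sigma}$ of \eqref{eq:DecentralizedtTestMatrix}, now built from $\mathcal{N}\tr \bar{J} \mathcal{N}$ and the $N_i\tr \bar{B}_i$, has full rank $2\ell$ for every $\lambda$ with positive real part and every partition $\mathcal{U} \cup \mathcal{Y} = \theset{1,\dots,n}$. This is precisely the condition required by the decentralized stabilization result of Appendix~\ref{App:Decentralized}, so there exist block diagonal $E,F,G,H$ satisfying the conditions of Theorem~\ref{thm:ofStabilization} and rendering the closed-loop system locally exponentially stable at $(x^*,0,0)$. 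Together with the washout filter of \eqref{eq:IOHeterogeneous} (which guarantees Assumption~\ref{Assumption:higherorder}), this furnishes the sought higher-order uncoupled augmentation of the given dynamics $f_i$.

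I do not expect a genuine technical obstacle; the content of the theorem is the recognition --- already set up in Section~\ref{sec:general} --- that replicator dynamics entered the proof of Theorem~\ref{Thm:RDStabilization} only through Claims~\ref{Claim:nonSingular} and~\ref{Claim:Birank}, i.e., through the two non-singularity properties now promoted to hypotheses. The one point warranting care is confirming that these two properties are genuinely all that Claim~\ref{Claim:final} consumes: that no hidden structural feature of $J$ or $B_i$, such as a sign pattern or the specific factorization $\diag{x_i^*}(N_i - \one (x_i^*)\tr N_i)$, is silently used in the rank count. A line-by-line reading of that proof confirms it is not, so the generalization is sound.
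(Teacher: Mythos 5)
Your proposal is correct and matches the paper's own argument: the paper likewise observes that replicator dynamics entered the proof of Theorem~\ref{Thm:RDStabilization} only through the non-singularity of $\mathcal{N}\tr J \mathcal{N}$ and the $N_i\tr B_i$ (Claims~\ref{Claim:nonSingular} and~\ref{Claim:Birank}), promotes those to hypotheses, and reruns the reduction to the decentralized rank test of Claim~\ref{Claim:final} with $\bar{J}$ and $\bar{B}_i$ in place of $J$ and $B_i$. Your added care in checking that Claim~\ref{Claim:final} consumes nothing beyond those two facts and the non-singularity of $\mathcal{M}$ is exactly the right verification.
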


\subsection{Target gradient play dynamics}

We will now use the results derived so far to study learnability of $x^*$ using target gradient play dynamics.

Standard-order target gradient play dynamics take the form
$$\dot{x}_i=\Pi_{\Delta}\left[x_i+p_i\right]-x_i,$$
where $\Pi_\Delta[x] = \arg \min_{s\in \Delta}\norm{x - s}.$
When in feedback with a game, the dynamics of player $i$ take the form
$$\dot{x}_i=\Pi_{\Delta}\left[x_i+P_i(x_{-i})\right]-x_i,$$
where the $x_{-i}$ also evolve according to target gradient play dynamics.

Before we proceed, let us look at higher-order extensions of target gradient play dynamics that have been explored in the literature. Reference \cite{shamma2005dynamic} introduces anticipatory target gradient play dynamics and demonstrates how anticipation can enable convergence to mixed-strategy NE. Reference \cite{toonsi2023Higher} uses a special class of higher-order target gradient play dynamics to study learnability of mixed-strategy NE in polymatrix games.

Returning to the analysis, suppose that all players in a finite game utilize target gradient play dynamics. Following previous procedure to study learnability, we write:
\begin{subequations}\label{eq:targetgradient}
\begin{align}
\dot{x}_{i}&=\Pi_{\Delta}\left[x_i+P_i(x_{-i})+N_iu_i\right]-x_i\\
\dot{v}_i&=N_i\tr P_{i}(x_{-i}) -v_i   \\
y_i&=N_i\tr P_{i}(x_{-i}) -v_i.
\end{align}
\end{subequations}
The point $(x^*,0,0)$ is an equilibrium of the overall dynamics.
The collective reduced-order linear dynamics around $(x^*,0,0)$ take the form
\begin{align*}
\Pmatrix{\dot{w}\\\dot{v}}&=\Pmatrix{\mathcal{M} & 0 \\ \mathcal{M} & -I}\Pmatrix{w\\v}+  \Pmatrix{I\\ 0} u\\
y&= \Pmatrix{\mathcal{M}& -I}\Pmatrix{w\\v},
\end{align*}
where in this case
$$\mathcal{N}\tr \bar{J} \mathcal{N} =\mathcal{M}, \quad \bar{B}_i=N_i,$$
and $\mathcal{M}$ has the structure in \eqref{eq:calligraphicM}. With this, it is direct to see that target gradient play dynamics satisfy the requirements of Theorem~\ref{Thm:HeterogeneousStabilization}.

\subsection{Illustration: Replicator + target gradient play}

We now demonstrate learnability of $x^*$ using heterogeneous dynamics, where some players use replicator dynamics and others use target gradient play.

\begin{Proposition}
Any combination of replicator dynamics and target gradient play dynamics satisfies the assumptions of Theorem~\ref{Thm:HeterogeneousStabilization}.
\end{Proposition}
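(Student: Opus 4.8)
The plan is to verify the two hypotheses of Theorem~\ref{Thm:HeterogeneousStabilization} for an arbitrary partition of the players into a set $\mathcal{R}$ that uses replicator dynamics and a set $\mathcal{T}$ that uses target gradient play. The equilibrium hypothesis is immediate: at $x=x^*$, $v=0$, $u=0$ one has $p_i=P_i(x_{-i}^*)=\alpha_i^*\one$, and both vector fields vanish there (replicator because $\diag{\alpha_i^*\one-\alpha_i^*\one}x_i^*=0$, target gradient play because $\Pi_{\Delta}[x_i^*+\alpha_i^*\one]=x_i^*$), while $\dot v_i=N_i\tr\alpha_i^*\one-0=0$ since $\one\tr N_i=0$. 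It then remains to show that $\mathcal{N}\tr\bar J\mathcal{N}$ and each $N_i\tr\bar B_i$ are non-singular for the mixed system \eqref{eq:DecentralizedFormulationHeterogeneous}.

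Because each player's dynamics have the common form $f_i(x_i,p_i)$ with $p_i=P_i(x_{-i})+N_iu_i$, the linearized blocks depend only on the Jacobian $D_i:=\nabla_{p_i}f_i(x_i^*,\alpha_i^*\one)$: for $i\neq j$ one gets $\bar J_{ij}=D_iM_{ij}$ and the input is $\bar B_i=D_iN_i$, while $\bar J_{ii}=\nabla_{x_i}f_i$ since $P_i$ does not depend on $x_i$. For $i\in\mathcal{R}$ one recovers $D_i=\diag{x_i^*}(I-\one(x_i^*)\tr)$, so $\bar J_{ij}$ and $\bar B_i$ coincide with the $J_{ij}$ and $B_i$ of Section~\ref{Sec:Stabilization}, and $N_i\tr\bar B_i$ is non-singular by Claim~\ref{Claim:Birank}. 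For $i\in\mathcal{T}$ I would first observe that $x_i^*+\alpha_i^*\one$ projects to the interior point $x_i^*$, so near $x^*$ the active set of $\Pi_{\Delta}$ is empty and the projection acts as the affine projector $I-\tfrac1{k_i}\one\one\tr$; hence $D_i=I-\tfrac1{k_i}\one\one\tr$, giving $\bar B_i=N_i$ and $N_i\tr\bar B_i=N_i\tr N_i=I$, which is non-singular. In both cases the reduced diagonal block vanishes, $N_i\tr\bar J_{ii}N_i=0$, exactly as for pure replicator dynamics.

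The main step, and the only real obstacle, is non-singularity of $\mathcal{N}\tr\bar J\mathcal{N}$, because the two Jacobians $D_i$ interact differently with $N_i\tr$. I would suppose $\mathcal{N}\tr\bar J\mathcal{N}c=0$ with $c=(c_1,\dots,c_n)$ and show each block equation collapses to $\mathcal{M}_ic=0$. Since the reduced diagonal blocks vanish, the $i$-th block reads $\sum_{j\neq i}N_i\tr D_iM_{ij}N_jc_j=0$. For $i\in\mathcal{T}$ this is transparent: $N_i\tr D_i=N_i\tr-\tfrac1{k_i}(N_i\tr\one)\one\tr=N_i\tr$, so the block becomes $\sum_{j\neq i}N_i\tr M_{ij}N_jc_j=\mathcal{M}_ic=0$. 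For $i\in\mathcal{R}$ the $\diag{x_i^*}$ factor obstructs this, and I would reuse the orthogonality argument of Claim~\ref{Claim:nonSingular}: with $\zeta_i=\sum_{j\neq i}(M_{ij}-\one(x_i^*)\tr M_{ij})N_jc_j$, the vector $\diag{x_i^*}\zeta_i$ is orthogonal to $\one$, which spans $\mathrm{null}(N_i\tr)$, because $(x_i^*)\tr\one=1$; hence $\zeta_i=0$, and left-multiplying by $N_i\tr$ annihilates the rank-one term to again yield $\mathcal{M}_ic=0$. Collecting all blocks gives $\mathcal{M}c=0$, and Assumption~\ref{Assumption:Regularity} forces $c=0$. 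With both hypotheses verified, Theorem~\ref{Thm:HeterogeneousStabilization} delivers the conclusion.
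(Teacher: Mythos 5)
Your proposal is correct and follows essentially the same route as the paper: compute the linearized blocks for each player type, observe that the target-gradient-play blocks reduce directly to $\mathcal{M}_i c = 0$ while the replicator blocks require the orthogonality argument of Claim~\ref{Claim:nonSingular}, and conclude $c=0$ from Assumption~\ref{Assumption:Regularity}. Your explicit verification of the equilibrium hypothesis and of the non-singularity of each $N_i\tr \bar B_i$ (via the projector $I-\tfrac1{k_i}\one\one\tr$ and Claim~\ref{Claim:Birank}) is slightly more thorough than the paper's write-up, but it is the same argument.
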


\begin{proof}
Without loss of generality, assume that the first $\ell$ players use replicator dynamics. Then, we have the following:
\begin{align*}
\mathcal{L}_i(x,v,u)&=\diag{P_{i}(x_{-i})+N_i u_i -\left( x_{i}\tr (P_{i}(x_{-i})+N_i u_i) \right)\one}x_{i}, \quad &\text{for }  i=1,\hdots, \ell \\
\mathcal{L}_i(x,v,u)&=\Pi_{\Delta}\left[x_i+P_i(x_{-i})+N_iu_i\right]-x_i, \quad &\text{for }  i= \ell+1,\hdots, n.
\end{align*}

Following previous process, we have
$$\bar{J}_{ij} = \diag{x_i^*}\left(M_{ij} - \one (x_i^*)\tr M_{ij}\right) \quad \text{and} \quad \bar{B}_i= \diag{x_i^*}(N_i - \one (x_i^*)\tr N_i), \quad \text{for }  i=1,\hdots, \ell.$$

For players playing target gradient play, we have
$$\bar{J}_{ij}=M_{ij} \quad \text{and} \quad \bar{B}_i=N_i, \quad \text{for }  i=\ell+1,\hdots,n.$$

The collective local linear dynamics take the form:
\begin{align*}
\Pmatrix{\dot{w} \\ \dot{v}}
&= \Pmatrix{\mathcal{N}\tr \bar{J} \mathcal{N}&0\\\mathcal{M} & -I}\Pmatrix{w\\ v} + \Pmatrix{\bdiag{N_1\tr \bar{B}_1,\hdots,N_\ell \tr \bar{B}_\ell, I,\hdots, I}\\0}u\\
y &= \Pmatrix{\mathcal{M}&-I}\Pmatrix{w\\ v}.
\end{align*}

We now show that $\mathcal{N}\tr \bar{J} \mathcal{N}$ is non-singular when Assumption~\ref{Assumption:Regularity} holds. Suppose
$$\mathcal{N}\tr \bar{J} \mathcal{N} c=0$$
for a non-zero, suitably partitioned, vector $c = (c_1,...,c_n)$. We directly have
$$\mathcal{M}_i c=0\quad \text{for }  i=\ell+1,\hdots,n.$$

Following the same arguments used to prove Claim~\ref{Claim:nonSingular}, it must be that
$$\sum_{j\not= i} M_{ij}N_j c_j = \sum_{j\not= i} \mathbf{1}(x_i^*)\tr M_{ij}N_j c_j, \quad \text{for }  i=1,\hdots,\ell.$$
Multiplying on the left by $N_i\tr$ results in
$$\sum_{j\not=i} N_i\tr M_{ij} N_j c_j = \sum_{j\not= i} N_i\tr \one (x_i^*)\tr M_{ij}N_j c_j = 0,  \quad \text{for }  i=1,\hdots,\ell$$
and therefore
$$\mathcal{M}_ic=0  \quad \text{for }  i=1,\hdots,\ell.$$

Collecting everything, we see that for the same $c$
$$\mathcal{N}\tr \bar{J} \mathcal{N} c=0 \implies  \mathcal{M} c=0.$$

With this non-singularity proof, we conclude learnability of $x^*$ using such heterogeneous dynamics.
\end{proof}

\section{Simultaneous stabilization of two mixed-strategy NE} \label{Sec:Simultaneous}\ \\[-15pt]

In this section, we address a learning limitation that persists for higher-order constructions. In particular, we show how higher-order learning dynamics cannot simultaneously learn the completely mixed-strategy NE of two given games. Here, ``simultaneously'' means that if learning dynamics are stable for one, then they cannot be stable for the other. These results hold for general higher-order learning dynamics that satisfy minimal assumptions.

\subsection{Two games with different network structures}

Let us begin by discussing the class of games of interest.
To this end, define $\Gamma_\mathrm{CY}(c)$ to be the following (cyclic) polymatrix game
$$R_1(x_1,x_2) = x_1\tr M_1(c_1)  x_2\quad
R_2(x_2,x_3) = x_2\tr M_2(c_2) x_3$$
$$R_3(x_3,x_4) = x_3\tr M_3(c_3) x_4\quad
R_4(x_4,x_1) = x_4\tr M_4(c_4) x_1,$$
and define $\Gamma_\mathrm{PW}(c)$ to be the following (pairwise) polymatrix game
$$R_1(x_1,x_4) = x_1\tr M_1(c_1) x_4\quad
R_2(x_2,x_3) = x_2\tr M_2(c_2) x_3 $$
$$ R_3(x_3,x_2) = x_3\tr M_3(c_3) x_2 \quad
R_4(x_4,x_1) = x_4\tr M_4(c_4) x_1.$$
The utility matrices of all players have the same dimension $\mathbb{R}^{\ell \times \ell}$, $c=(c_1,c_2,c_3,c_4)$, and each $c_i$ is a tuple of $\ell-1$ scalar elements, such that $c_i=(c_{i1},\hdots,c_{i(\ell-1)})$.
Each utility matrix, $M_i$, is a function of the utility parameters, $c_i$.
For simplicity, assume each element of $c_i$ affects a distinct entry of $M_i$ through a non-trivial linear function.
The utility matrices of all players are the same in both games. Only the network structures are different (see Figure~\ref{fig:Twogameswithdiffstruc}).
We will not define the players' exact utility matrices. Instead, we will consider a list of requirements that both games must satisfy.

\begin{figure}[!t]
\centerline{\includegraphics[width=.7\columnwidth]{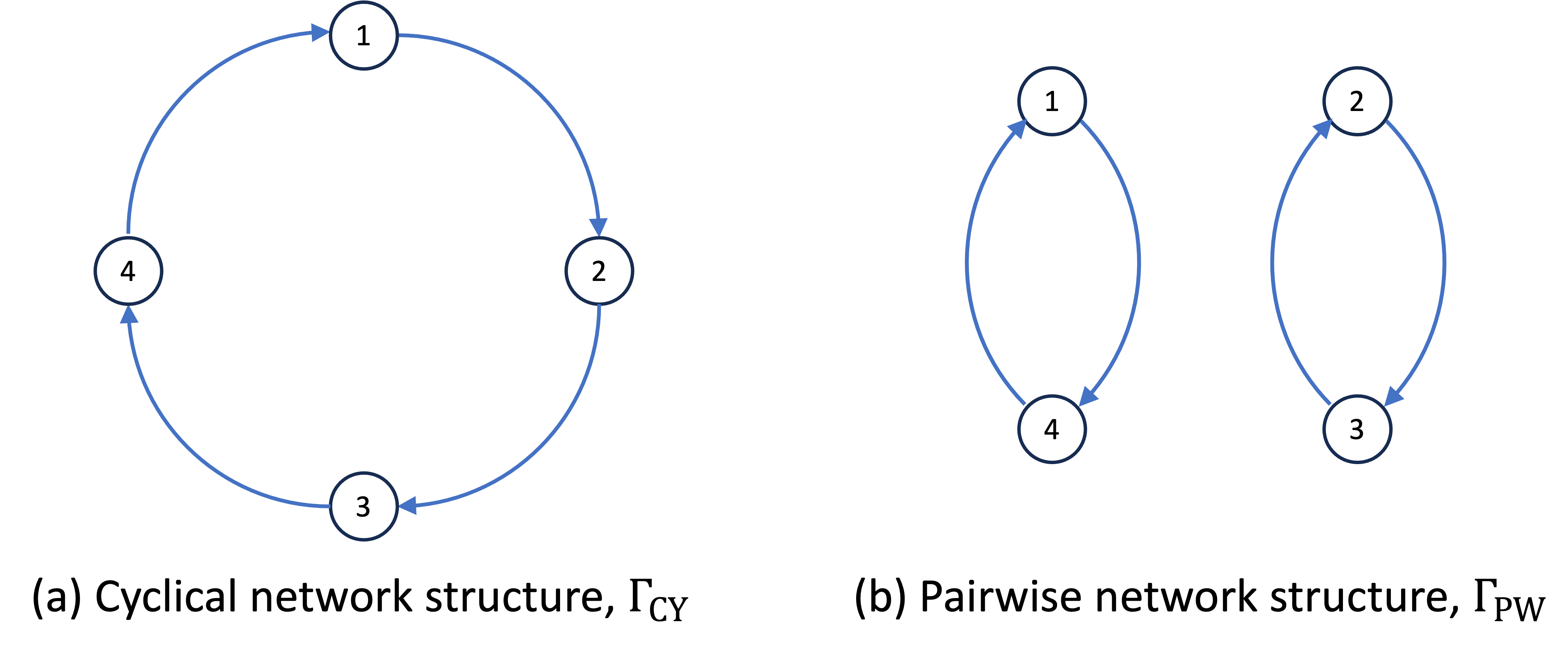}}
\caption{Network structures of $\Gamma_\mathrm{CY}$ and $\Gamma_\mathrm{PW}$. Arrows indicate strategic dependency (e.g., in $\Gamma_\mathrm{CY}$, player 1's payoff depends on player 2's strategy.)}
\label{fig:Twogameswithdiffstruc}
\end{figure}

Let $c^*=(c^*_1,c^*_2,c^*_3,c^*_4)$ denote the nominal values of the utility parameters. Define the set
$$\mathbf{c}_\epsilon = \theset{(c_1,c_2,c_3,c_4) \st \magn{c_{ik} - c^*_{ik}} < \epsilon},$$
which generates a neighborhood of games around any of the two nominal games $\Gamma_\mathrm{CY}(c^*)$ or $\Gamma_\mathrm{PW}(c^*)$. If the nominal game has an isolated completely mixed-strategy NE, then for a sufficiently small $\epsilon$, each game in the neighborhood has an isolated completely mixed-strategy NE parameterized by the tuple $c$.

Let us take a closer look at the equilibria of the games as functions of the utility parameters.
Suppose $x^*(c)$ is an isolated completely mixed-strategy NE of $\Gamma_\mathrm{CY}(c)$, then we have
$$x_1^*(c_4)=\Pmatrix{x_{11}^*(c_4)\\ \vdots \\ x_{1\ell}^*(c_4)}\quad
x_2^*(c_1)=\Pmatrix{x_{21}^*(c_1)\\ \vdots \\ x_{2\ell}^*(c_1)}\quad
x_3^*(c_2)=\Pmatrix{x_{31}^*(c_2)\\ \vdots \\ x_{3\ell}^*(c_2)} \quad
x_4^*(c_3)=\Pmatrix{x_{41}^*(c_3)\\ \vdots \\ x_{4\ell}^*(c_3)} ,$$
where  $x^*_{ik} (c_j)$ denotes the $k^\text{th}$ entry of the NE strategy of player $i$ in $\Gamma_\mathrm{CY}(c)$, and $j$ is the index of the player whose utility is affected by the strategy of player $i$.
Similarly, suppose $y^*(c)$  is an isolated completely mixed-strategy NE  of $\Gamma_\mathrm{PW}(c)$, then we have
$$y_1^*(c_4)=\Pmatrix{y_{11}^*(c_4)\\ \vdots \\ y_{1\ell}^*(c_4)} \quad
y_2^*(c_3)=\Pmatrix{y_{21}^*(c_3)\\ \vdots \\ y_{2\ell}^*(c_3)}\quad
y_3^*(c_2)=\Pmatrix{y_{31}^*(c_2)\\ \vdots \\ y_{3\ell}^*(c_2)} \quad
y_4^*(c_1)=\Pmatrix{y_{41}^*(c_1)\\ \vdots \\ y_{4\ell}^*(c_1)} .$$
Note the dependence on opponents' utility parameters given the game structure.

Let us now list the requirements for both games.

\begin{Assumption}\label{Assumption:Requirements} \ \\[-20pt]
\begin{itemize}% [leftmargin=*]
\item[a.] The utility matrices of all players have the same dimension $\mathbb{R}^{\ell \times \ell}$, where $\ell$ is an even integer.
\item[b.] Both $\Gamma_\mathrm{CY}(c^*)$ and $\Gamma_\mathrm{PW}(c^*)$ have an isolated completely mixed-strategy NE  (either game could have other types of NE)
denoted by $x^*(c^*)$ and $y^*(c^*)$ respectively .
\item[c.]  Let $x^*_i(c_j)$ denote the NE strategy of player $i$ in $\Gamma_\mathrm{CY}(c)$,  where $j$ is the index of the player whose utility is affected by the strategy of player $i$, and let $N$ be defined as in \eqref{eq:Ncondition}. We require the matrix
$$\Pmatrix{N\tr \nabla_{c_{j1}} x_i^*(c^*_j)&\hdots& N\tr \nabla_{c_{j(\ell-1)}} x_i^*(c^*_j)}$$
to have rank $\ell-1$ for each player $i$.
\item[d.] $x^*_2(c^*_1)=x^*_4(c^*_3)$
\end{itemize}
\end{Assumption}
Since the utility matrices of the players are the same in $\Gamma_\mathrm{CY}(c^*)$ and $\Gamma_\mathrm{PW}(c^*)$, Assumption~\ref{Assumption:Requirements}-d implies that $x^*(c^*)=y^*(c^*)$. In particular, we have

$$
x_1^*(c_4^*)=y_1^*(c_4^*), \quad
x_3^*(c_2^*)=y_3^*(c_2^*),\quad \text{and} \quad  x_2^*(c_1^*)=x_4^*(c_3^*)=y_2^*(c_3^*)=y_4^*(c_1^*).
$$

\begin{Example}
Let the utility matrices be
$$M_1(c_1)=\Pmatrix{c_{11} & 0 & 0&0\\
0 & c_{12} & 0&0\\
0 & 0 & c_{13} &0\\
0 & 0 & 0&1},
M_2(c_2)=\Pmatrix{c_{21} & 0 & 0&0
\\0 & c_{22} & 0&0
\\0 & 0 & c_{23}&0
\\ 0 & 0 & 0&1},$$
$$M_3(c_3)=\Pmatrix{0 & 0 & c_{31}&0\\
0 & 0 & 0&c_{32}\\
c_{33} & 0 & 0&0\\
0 & 1 & 0&0},
M_4(c_4)=\Pmatrix{0 & c_{41} & 0&0\\
0 & 0 & c_{42}&0\\
0 & 0 & 0& c_{43}\\
1 & 0 & 0&0}.$$
The nominal values of the parameters are $c^*_1=c^*_2=c^*_3=c^*_4=(1,1,1)$. The nominal games $\Gamma_\mathrm{CY}(c^*)$ and $\Gamma_\mathrm{PW}(c^*)$ have an isolated NE at
$$x_i^*(c^*)=\Pmatrix{\frac{1}{4}&\frac{1}{4}&\frac{1}{4}&\frac{1}{4}}\tr \quad \text{for } i=1,2,3,4.$$
The game $\Gamma_\mathrm{PW}(c^*)$ has two other NE that are not completely mixed.
The local dynamics matrix of replicator dynamics around $x^*$ in $\Gamma_\mathrm{CY}(c^*)$ has no real eigenvalues. An anticipatory version of replicator dynamics can lead to $x^*$ in $\Gamma_\mathrm{CY}(c^*)$.
The same dynamics cannot lead to $x^*$ in $\Gamma_\mathrm{PW}(c^*)$. In fact, the local dynamics matrix of replicator dynamics in $\Gamma_\mathrm{PW}(c^*)$ around $x^*$ has a real positive eigenvalue.
\end{Example}

\subsection{Impossibility of simultaneous stabilization}

Next, we will discuss the assumptions on the learning dynamics of the players. Let the learning dynamics of the four players be
\begin{align*}
\dot{x}_1 &= f_1(x_1,p_1,\phi_1(p_1,z_1)) &
\dot{x}_2 &= f_2(x_2,p_2,\phi_2(p_2,z_2)) & \\
\dot{z}_1 &= g_1(p_1,z_1) &
\dot{z}_2 &= g_2(p_2,z_2)
\end{align*}
\begin{align*}
\dot{x}_3 &= f_3(x_3,p_3,\phi_3(p_3,z_3)) &
\dot{x}_4 &= f_4(x_4,p_4,\phi_4(p_4,z_4)) & \\
\dot{z}_3 &= g_3(p_3,z_3) &
\dot{z}_4 &= g_4(p_4,z_4) .
\end{align*}
Here, the effect of $\phi_i$ on the function $f_i$ does not have to be additive.

When the players are playing $\Gamma_\mathrm{CY}$, the input vector $p_i(t)$ of each player becomes
\begin{align*}
p_1(t)&=P_1(x_2(t))=M_1(c_1) x_2(t), \quad p_2(t)=P_2(x_3(t))=M_2(c_2)x_3(t), \\ p_3(t)&=P_3(x_4(t))=M_3(c_3)x_4(t), \quad p_4(t)=P_4(x_1(t))=M_4(c_4) x_1(t),
\end{align*}
while the input vectors in $\Gamma_\mathrm{PW}$  are
\begin{align*}
p_1(t)&=P_1(x_4(t))=M_1(c_1)x_4(t), \quad p_2(t)=P_2(x_3(t))=M_2(c_2) x_3(t), \\
p_3(t)&=P_3(x_2(t))=M_3(c_3) x_2(t), \quad p_4(t)=P_4(x_1(t))=M_4(c_4)x_1(t).
\end{align*}
Moving forward, we will omit the explicit dependence of $M_i(c_i)$ and simply write $M_i$ for brevity, with the understanding that it is still a function of $c_i$.

We will now look at a reduced-order version of the dynamics for both games. Let us start with $\Gamma_\mathrm{CY}(c)$. Recalling equation~\eqref{eq:nearby}, we define
$$x_i(t) =x_i^*(c^*_j)+ N w_i(t)$$
where $w_i(t)$ is unique and satisfies
$${w}_i(t) = N\tr {x}_i(t)-N\tr x_i^*(c^*_j),$$
and $N$ is defined according to \eqref{eq:Ncondition}.
With this definition, let
\begin{subequations}\label{eq:red}
\begin{align}
\dot{w} = f^\mathrm{red}_\mathrm{CY}(w,z;c)\\
\dot{z} = g^\mathrm{red}_\mathrm{CY}(w,z;c)
\end{align}
\end{subequations}
denote the reduced-order dynamics of the game $\Gamma_\mathrm{CY}(c)$. These reduced-order dynamics represent the deviation of the strategies in $\Gamma_\mathrm{CY}(c)$ from the NE of the nominal game, where $c\in \mathbf{c}_\epsilon$ and $\epsilon$ is sufficiently small.

Following the same procedure, define
$$x_i(t) =x_i^*(c^*_j)+ N q_i(t)$$ for players' strategies in  $\Gamma_\mathrm{PW}(c)$,
where $q_i(t)$ is unique and satisfies
$${q}_i(t)= N\tr {x}_i(t)-N\tr x_i^*(c^*_j).$$
Let the dynamics
\begin{align*}
\dot{q} = f^\mathrm{red}_\mathrm{PW}(q,z;c)\\
\dot{z} = g^\mathrm{red}_\mathrm{PW}(q,z;c),
\end{align*}
denote the reduced-order dynamics of the game $\Gamma_\mathrm{PW}(c)$, where $c\in \mathbf{c}_\epsilon$ and $\epsilon$ is sufficiently small. Again, these dynamics represent the deviation of the strategies in $\Gamma_\mathrm{PW}(c)$ from the NE of the nominal game.

Now define
$$w_1^*(c_4)=N\tr \left(x_1^*(c_4)- x_1^*(c^*_4) \right), \quad
w_2^*(c_1)=N\tr \left( x_2^*(c_1)- x_2^*(c^*_1)\right),$$
$$ w_3^*(c_2)=N\tr \left( x_3^*(c_2)-  x_3^*(c^*_2)\right),  \quad
w_4^*(c_3)=N\tr \left( x_4^*(c_3)- x_4^*(c^*_3)\right)$$
$$q_1^*(c_4)=N\tr \left(y_1^*(c_4)- x_1^*(c^*_4) \right), \quad
q_2^*(c_3)=N\tr \left( y_2^*(c_3)- x_2^*(c^*_1)\right), $$
$$ q_3^*(c_2)=N\tr \left( y_3^*(c_2)-  x_3^*(c^*_2)\right),  \quad
q_4^*(c_1)=N\tr \left( y_4^*(c_1)- x_4^*(c^*_3)\right). $$
The tuple $w^*(c)=(w_1^*(c_4), w_2^*(c_1),w_3^*(c_2), w_4^*(c_3))$ represents the deviation between the isolated mixed-strategy NE of the nominal game, $\Gamma_\mathrm{CY}(c^*)$, and the isolated mixed-strategy NE of the perturbed game, $\Gamma_\mathrm{CY}(c)$, while $q^*(c)=(q_1^*(c_4), q_2^*(c_3),q_3^*(c_2), q_4^*(c_1))$ represents the deviation between the isolated mixed-strategy NE of the nominal game, $\Gamma_\mathrm{PW}(c^*)$, and the isolated mixed-strategy NE of the perturbed $\Gamma_\mathrm{PW}(c)$.

\begin{Assumption}\label{Assumption:NEStationarity}
For sufficiently small $\epsilon > 0$, the dynamics satisfy the following:
\begin{itemize}
\item For each $c\in \mathbf{c}_\epsilon$,  there exist suitably defined and unique
$$r^*(c)=(r_1^*(c_1), r_2^*(c_2), r_3^*(c_3), r_4^*(c_4)), \quad \text{and} \quad s^*(c)=(s_1^*(c_1), s_2^*(c_2), s_3^*(c_3), s_4^*(c_4)),$$
such that
\begin{align*}
0 &= f^\mathrm{red}_\mathrm{CY}(w^*(c),r^*(c);c),&\quad 0& = g^\mathrm{red}_\mathrm{CY}(w^*(c),r^*(c);c) \\
0&= f^\mathrm{red}_\mathrm{PW}(q^*(c),s^*(c);c),&\quad 0&= g^\mathrm{red}_\mathrm{PW}(q^*(c),s^*(c);c).
\end{align*}
\item The reduced-order dynamics are continuously differentiable in all their arguments within a region near the equilibria $(w^*(c),r^*(c))$ and $(q^*(c),s^*(c))$ for each $c\in \mathbf{c}_\epsilon$.
\end{itemize}
\end{Assumption}

In words, Assumption~\ref{Assumption:NEStationarity} assumes that the dynamics, in all sufficiently small perturbed games, are stationary at the completely mixed-strategy NE, with suitably defined higher-order equilibrium components.
Also, recall that $x^*_2(c^*_1)=x^*_4(c^*_3)$ and $w^*(c^*)=q^*(c^*)$.
Therefore, we have $r^*(c^*)=s^*(c^*)$ by the uniqueness assumption on the higher-order equilibrium components.
We are now ready to state the main result.

\begin{Theorem}\label{Thm:SStabilization}
Under Assumptions~\ref{Assumption:Requirements} and \ref{Assumption:NEStationarity}, learning dynamics cannot be locally exponentially stable at $(w^*(c^*), r^*(c^*))$ in $\Gamma_\mathrm{CY}(c^*)$ and at $(w^*(c^*), r^*(c^*))$ in $\Gamma_\mathrm{PW}(c^*)$ simultaneously.
\end{Theorem}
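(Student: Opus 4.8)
The plan is to read the two closed-loop systems as a single \emph{simultaneous stabilization} problem: the players' learning rules form one fixed, block-diagonal (uncoupled) controller, and passing from $\Gamma_\mathrm{CY}(c^*)$ to $\Gamma_\mathrm{PW}(c^*)$ only rewires which opponent supplies each payoff vector. I would suppose, for contradiction, that the common dynamics are locally exponentially stable at the shared equilibrium in both games. Writing the reduced-order dynamics as $\dot w=f^{\mathrm{red}}(w,z;c)$, $\dot z=g^{\mathrm{red}}(w,z;c)$ and letting $A^{\mathrm{CY}}$, $A^{\mathrm{PW}}$ be the Jacobians of $(f^{\mathrm{red}},g^{\mathrm{red}})$ at the respective equilibria for $c=c^*$, exponential stability forces each $A$ to be Hurwitz. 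For a real Hurwitz matrix of size $D$ the real eigenvalues are negative and the nonreal ones occur in conjugate pairs (hence contribute positively), so $\mathrm{sign}\,\det A=(-1)^{D}$. The state dimension $D$ is the same in both games because the controller is literally the same system, so stability of both would force $\det A^{\mathrm{CY}}$ and $\det A^{\mathrm{PW}}$ to have the \emph{same} sign. I would then derive a contradiction by proving they have \emph{opposite} signs.

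To isolate the source of the sign I first eliminate the auxiliary states. Assumption~\ref{Assumption:NEStationarity} (uniqueness and differentiability of the higher-order equilibrium) lets me solve $g^{\mathrm{red}}=0$ for $z=z^*(w,c)$ near the equilibrium and form the reduced map $\Phi(w;c)=f^{\mathrm{red}}(w,z^*(w,c);c)$, whose zero locus is exactly $w^*(c)$ for $\Gamma_\mathrm{CY}$ (and $q^*(c)$ for $\Gamma_\mathrm{PW}$). By the Schur-complement identity $\det A=\det(\partial_z g^{\mathrm{red}})\cdot\det(\partial_w\Phi)$, and since the internal controller block $\partial_z g^{\mathrm{red}}$ is identical at the common equilibrium in both games, comparing $\mathrm{sign}\,\det A^{\mathrm{CY}}$ and $\mathrm{sign}\,\det A^{\mathrm{PW}}$ reduces to comparing $\mathrm{sign}\,\det(\partial_w\Phi)$. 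Differentiating the identity $\Phi(w^*(c);c)\equiv 0$ then gives
$$
(\partial_w\Phi)\,\Big(\tfrac{\partial w^*}{\partial c}\Big)=-\,\partial_c\Phi,
$$
where both $w$ and $c$ live in $\mathbb{R}^{4(\ell-1)}$, so all three factors are square and taking determinants is legitimate.

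The two structural inputs are now: (i) $\partial_c\Phi$ is the \emph{same} matrix in both games, and (ii) $\partial w^*/\partial c$ has \emph{opposite} determinant sign in the two games. For (i), each parameter block $c_i$ enters only player $i$'s own payoff matrix $M_i$, acting on player $i$'s nominal opponent strategy; Assumption~\ref{Assumption:Requirements}-d makes every player's nominal opponent strategy coincide across the two games (indeed $x^*(c^*)=y^*(c^*)$), so the explicit $c$-dependence of $\Phi$ is unchanged by the rewiring. For (ii), the NE component $x_i^*$ is pinned by the indifference condition $M_{\rho(i)}x_i^*\propto\one$ of the player $\rho(i)$ whose utility is affected by player $i$ (the index $j$ of Assumption~\ref{Assumption:Requirements}-c), so $x_i^*$ depends on the parameter block $c_{\rho(i)}$; hence $\partial w^*/\partial c$ is a block generalized-permutation matrix whose nonzero $(\ell-1)\times(\ell-1)$ block in row $i$ sits in column $\rho(i)$ and equals $N\tr\,\partial x_i^*/\partial c_{\rho(i)}$. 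The wiring permutation is a $4$-cycle for $\Gamma_\mathrm{CY}$ and the product of two transpositions $(1\,4)(2\,3)$ for $\Gamma_\mathrm{PW}$, i.e.\ odd versus even. Because each $M_k$ determines, in both games, the NE component of \emph{the same} nominal strategy vector (again by Assumption~\ref{Assumption:Requirements}-d), the collection of nonzero blocks is identical across the two games, and each is invertible by the rank condition Assumption~\ref{Assumption:Requirements}-c. Thus the two determinants agree up to the block-permutation sign $(\mathrm{sgn}\,\rho)^{\ell-1}$, which equals $(-1)^{\ell-1}=-1$ for the odd $\Gamma_\mathrm{CY}$ wiring and $+1$ for the even $\Gamma_\mathrm{PW}$ wiring, using that $\ell$ is \emph{even} (Assumption~\ref{Assumption:Requirements}-a). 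Feeding (i) and (ii) back through the displayed identity yields $\det(\partial_w\Phi)^{\mathrm{CY}}=-\det(\partial_w\Phi)^{\mathrm{PW}}$, hence $\det A^{\mathrm{CY}}$ and $\det A^{\mathrm{PW}}$ have opposite signs, contradicting the common required sign $(-1)^D$.

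The main obstacle is item (i) together with the block-matching in (ii): one must verify that every quantity entering $\partial_c\Phi$ and every nonzero block of $\partial w^*/\partial c$ is genuinely game-independent at $c^*$. This is exactly where Assumption~\ref{Assumption:Requirements}-d (coincidence of the equilibria \emph{and} of every player's relevant opponent strategy) and the uniqueness in Assumption~\ref{Assumption:NEStationarity} do the real work; the parity of $\ell$ then converts the purely combinatorial difference between a $4$-cycle and two transpositions into an honest sign flip. A secondary point needing care is that eliminating $z$ and invoking the Schur identity requires $\partial_z g^{\mathrm{red}}$ to be invertible, which is automatic under the hypothesized exponential stability.
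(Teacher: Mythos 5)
Your route is built on a step that fails: you assert that $\partial_z g^{\mathrm{red}}=\bdiag{E_1,\dots,E_4}$ is invertible ``automatically under the hypothesized exponential stability.'' That is not true. A Hurwitz matrix can have a singular principal sub-block --- a pure integrator in the higher-order loop (an $E_i$ with a zero eigenvalue) is a perfectly natural choice and is compatible with a Hurwitz closed loop --- and nothing in Assumptions~\ref{Assumption:Requirements} or~\ref{Assumption:NEStationarity} excludes it. Without invertibility of $\partial_z g^{\mathrm{red}}$ you cannot apply the implicit function theorem to define $z^*(w,c)$, the reduced map $\Phi$ does not exist, and the Schur identity $\det A=\det(\partial_z g^{\mathrm{red}})\det(\partial_w\Phi)$ is unavailable, so the entire reduction collapses. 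This is not easily patched by perturbing the $E_i$, since a generic perturbation shifts the higher-order equilibrium and generally destroys the stationarity of the NE across the family $\mathbf{c}_\epsilon$ that your identity $\Phi(w^*(c);c)\equiv 0$ depends on. A secondary omission: if $\det(\partial_c\Phi)=0$ your ``opposite signs'' conclusion degenerates to both determinants vanishing; the contradiction still follows (both Jacobians are then singular, hence not Hurwitz), but you should say so.

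That said, your endgame is exactly the paper's: reduce impossibility to the statement that the two closed-loop Jacobians at the shared equilibrium have determinants of opposite sign, which is incompatible with both being Hurwitz. The paper obtains the sign flip directly and without any invertibility hypothesis on the $E_i$: differentiating the stationarity identity in $c$ and invoking the rank condition of Assumption~\ref{Assumption:Requirements}-c kills the diagonal $w$-blocks of each Jacobian, Assumption~\ref{Assumption:Requirements}-d forces every remaining block of $J_{\mathrm{PW}}(0,r^*)$ to equal the corresponding block of $J_{\mathrm{CY}}(0,r^*)$, and the two full matrices then differ only by swapping the $w_2$ and $w_4$ block-columns --- $\ell-1$ transpositions, an odd number since $\ell$ is even. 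Your alternative extraction of the same $-1$ from the parity of the wiring permutation in $\partial w^*/\partial c$ (a $4$-cycle versus two transpositions), together with the game-independence of $\partial_c\Phi$, is correct bookkeeping and is a genuinely different and illuminating way to see where the sign comes from; but as written it only goes through under the additional, unstated hypothesis that every $E_i$ is nonsingular.
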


The remainder of this section is devoted to the proof of Theorem~\ref{Thm:SStabilization}.

\subsection{The structure of the Jacobian matrices}

As was done previously, we will use linear system techniques to study the local behavior of nonlinear systems. In particular, we will study local stability through linearization around equilibrium and check whether the local linear dynamics matrix is a stability matrix. Let us now study the structures of the matrices of the local linear dynamics.

To this end, let $J_{\mathrm{CY}}(w,z)$ denote the Jacobian matrix of $\Pmatrix{f^\mathrm{red}_\mathrm{CY}\\ g^\mathrm{red}_\mathrm{CY}}$ evaluated at $(w,z)$ in the nominal game, and $J_{\mathrm{PW}}(q,z)$ denote the Jacobian matrix of $\Pmatrix{f^\mathrm{red}_\mathrm{PW}\\ g^\mathrm{red}_\mathrm{PW}}$ evaluated at $(q,z)$ in the nominal game.
\begin{Claim}\label{Claim:JacobianStructureGamma}
The Jacobian matrix $J_{\mathrm{CY}}(0,r^*)$ has the structure
$$J_{\mathrm{CY}}(0,r^*)= \Pmatrix{
0 & J_{1} & 0 & 0 & G_1 & 0 & 0 & 0\\
0 & 0 & J_{2} & 0 & 0 & G_2 & 0 & 0\\
0 & 0 & 0 & J_{3} & 0 & 0 & G_3 & 0\\
J_{4} & 0 & 0 & 0 & 0 & 0 & 0 & G_4\\
0 & F_{1} & 0 & 0 & E_1 & 0 & 0 & 0\\
0 & 0 & F_{2} & 0 & 0 & E_2 & 0 & 0\\
0 & 0 & 0 & F_{3} & 0 & 0 & E_3 & 0\\
F_{4} & 0 & 0 & 0 & 0 & 0 & 0 & E_4\\},$$
where
\begin{alignat*}{2}
J_{1}&=\nabla_{w_2}f_{\mathrm{CY},1}^\mathrm{red} (0,r^*;c^*),\quad &
J_{2}&=   \nabla_{w_3}f_{\mathrm{CY},2}^\mathrm{red} (0,r^*;c^*),\\
J_{3}&= \nabla_{w_4}f_{\mathrm{CY},3}^\mathrm{red} (0,r^*;c^*),\quad&
J_{4}&=\nabla_{w_1}f_{\mathrm{CY},4}^\mathrm{red} (0,r^*;c^*) ,\\
F_{1}&=\nabla_{w_2}g_{\mathrm{CY},1}^\mathrm{red} (0,r^*;c^*),\quad&
F_{2}&=   \nabla_{w_3}g_{\mathrm{CY},2}^\mathrm{red} (0,r^*;c^*),\\
F_{3}&= \nabla_{w_4}g_{\mathrm{CY},3}^\mathrm{red} (0,r^*;c^*),\quad&
F_{4}&=\nabla_{w_1}g_{\mathrm{CY},4}^\mathrm{red} (0,r^*;c^*) ,\\
G_{i}&=\nabla_{z_i}f_{\mathrm{CY},i}^\mathrm{red} (0,r^*;c^*), \quad& E_{i}&=\nabla_{z_i}g_{\mathrm{CY},i}^\mathrm{red} (0,r^*;c^*).
\end{alignat*}

\end{Claim}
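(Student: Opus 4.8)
The plan is to obtain the displayed sparsity pattern in two stages: a bookkeeping argument that tracks which reduced variables each component of the vector field genuinely depends on, followed by one nontrivial computation that kills the diagonal blocks of the $f$-rows.

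First I would write the reduced field by composition. Using the cyclic payoff structure $p_i = M_i x_{i+1}$ (indices mod $4$, so $x_5 = x_1$) together with the substitution $x_i = x_i^*(c_j^*) + N w_i$,
\begin{align*}
f_{\mathrm{CY},i}^\mathrm{red}(w,z;c) &= N\tr f_i\!\big(x_i^* + N w_i,\; M_i(x_{i+1}^* + N w_{i+1}),\; \phi_i(M_i(x_{i+1}^* + N w_{i+1}),z_i)\big),\\
g_{\mathrm{CY},i}^\mathrm{red}(w,z;c) &= g_i\!\big(M_i(x_{i+1}^* + N w_{i+1}),\, z_i\big).
\end{align*}
Reading off the arguments, $f_{\mathrm{CY},i}^\mathrm{red}$ depends only on $(w_i, w_{i+1}, z_i)$ and $g_{\mathrm{CY},i}^\mathrm{red}$ only on $(w_{i+1}, z_i)$, so every partial derivative with respect to a variable outside these lists vanishes identically. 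This yields all off-cyclic zeros in the upper $w$--$w$ block, all cross-player zeros in the $w$--$z$ and $z$--$z$ blocks, and---because $g_i$ carries no dependence on its own $x_i$---the vanishing diagonal of the lower $g$-rows, i.e.\ $\nabla_{w_i} g_{\mathrm{CY},i}^\mathrm{red} = 0$.

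The one block not explained by dependency bookkeeping is the diagonal of the $f$-rows, and showing it vanishes is the crux. Since $p_i = M_i x_{i+1}$ does not involve $x_i$, the variable $w_i$ enters $f_{\mathrm{CY},i}^\mathrm{red}$ only through the first argument (it alters neither $p_i$ nor $z_i$, hence not $\phi_i$), so the chain rule gives $\nabla_{w_i} f_{\mathrm{CY},i}^\mathrm{red}(0,r^*;c^*) = N\tr (\nabla_{x_i} f_i) N$ at $x_i = x_i^*$, $p_i = \alpha_i\one$, $\phi_i = 0$. I would then invoke the indifference structure of natural learning at a completely mixed NE: with a uniform payoff $p_i = \alpha_i\one$ (and $\phi_i = 0$ by the higher-order equilibrium condition) no action is preferred, so the strategy component is stationary for every nearby strategy, i.e.\ $f_i(\cdot,\alpha_i\one,0)$ vanishes identically on a simplex-neighborhood of $x_i^*$. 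Hence the derivative of $f_i$ along every simplex-tangent direction is zero; as the columns of $N$ span $\theset{d : \one\tr d = 0}$, we get $(\nabla_{x_i} f_i) N = 0$ and therefore $N\tr (\nabla_{x_i} f_i) N = 0$. This is precisely the mechanism behind the identity $N_i\tr J_{ii} N_i = 0$ already noted for replicator dynamics in Section~\ref{Sec:Stabilization}.

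Finally I would collect the surviving blocks and match the display by setting $J_i = \nabla_{w_{i+1}} f_{\mathrm{CY},i}^\mathrm{red}$, $G_i = \nabla_{z_i} f_{\mathrm{CY},i}^\mathrm{red}$, $F_i = \nabla_{w_{i+1}} g_{\mathrm{CY},i}^\mathrm{red}$, and $E_i = \nabla_{z_i} g_{\mathrm{CY},i}^\mathrm{red}$, all evaluated at $(0,r^*;c^*)$; the cyclic shift $i \mapsto i+1$ (with $4 \mapsto 1$) places $J_4$ and $F_4$ in the first column exactly as shown. The main obstacle is this diagonal step: it is the only place a structural property of the learning rule beyond the coupling topology is needed, namely that uniform payoffs induce no strategy motion. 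I would make sure this property is genuinely available for the dynamics under consideration (replicator, gradient play, fictitious play, and their additive higher-order modifications all satisfy it); without it the diagonal blocks need not vanish and the downstream simultaneous-stabilization argument would have to retain extra terms.
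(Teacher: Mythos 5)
Your bookkeeping stage and the final assembly of the blocks match the paper exactly, and your reduction of the crux to $\nabla_{w_i} f_{\mathrm{CY},i}^\mathrm{red}(0,r^*;c^*) = N\tr (\nabla_{x_i} f_i) N$ is the right identification of the only nontrivial block. Where you diverge is in how you kill that block, and the divergence matters. You postulate that $f_i(\cdot,\alpha_i\one,0)$ vanishes identically on a simplex-neighborhood of $x_i^*$ (``uniform payoffs induce no strategy motion'') and differentiate that identity. The paper instead differentiates the NE-stationarity identity $f_{\mathrm{CY},i}^\mathrm{red}(w^*(c),r^*(c);c)\equiv 0$ over the parameterized family $\mathbf{c}_\epsilon$: because $c_4$ enters player~1's reduced dynamics \emph{only} through $w_1^*(c_4)$ (while $M_1$, $w_2^*$, and $r_1^*$ all depend on $c_1$), the total derivative gives $\nabla_{w_1} f_{\mathrm{CY},1}^\mathrm{red}\cdot \nabla_{c_{4k}} w_1^* = 0$ for $k=1,\dots,\ell-1$, and Assumption~\ref{Assumption:Requirements}-c supplies $\ell-1$ linearly independent directions, forcing the block to vanish. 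This is precisely why Assumption~\ref{Assumption:Requirements}-c and the family $\mathbf{c}_\epsilon$ exist; your proof never invokes either, which is the tell that you are proving a different (narrower) statement.

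The practical consequence: your argument is sound for replicator dynamics, target gradient play, and their additive higher-order modifications, but Theorem~\ref{Thm:SStabilization} is asserted for \emph{arbitrary} uncoupled higher-order dynamics satisfying only Assumptions~\ref{Assumption:Requirements} and~\ref{Assumption:NEStationarity}, and your indifference hypothesis is not among those assumptions. You flag this yourself at the end, which is honest, but as written the proposal does not close the gap for the claim at the paper's level of generality. It is worth noting that the two arguments are close kin: since $x_1^*(c_4)$ sweeps out an $(\ell-1)$-dimensional patch of the simplex as $c_4$ varies (by the rank condition) while player~1's equilibrium payoff stays a multiple of $\one$, the paper's perturbation argument effectively \emph{derives} your local indifference property from stationarity rather than assuming it. If you replace your appeal to ``the indifference structure of natural learning'' with that derivation, your proof becomes the paper's.
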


\begin{proof}
Most of the zero blocks are direct consequences of the lack of explicit dependence. The only exceptions are $\nabla_{w_i}f_{\mathrm{CY},i}^\mathrm{red} (0,r^*;c^*)$. Let us first prove the proposition for the case when $\ell=2$.

We will consider $\nabla_{w_1}f_{\mathrm{CY},1}^\mathrm{red} $. The others will follow from similar arguments. The NE, being stationary points of the dynamics, imply that
\begin{align*}
&f_{\mathrm{CY},1}^\mathrm{red}(w^*(c),r^*(c);c) \\&\quad= N\tr f_{\mathrm{CY},1}\left(x_1^*(c^*_4) + Nw_1^*(c_4), M_{1}\cdot \left(x_2^*(c^*_1)+ Nw_2^*(c_1)\right), \phi_1\left( M_{1}\cdot (x_2^*(c^*_1)+ Nw_2^*(c_1)),r_1^*(c_1)\right);c\right)\\&\quad= 0,
\end{align*}
for all $c\in\mathbf{c}_\epsilon$, where $\epsilon$ is sufficiently small.

Taking the total derivative with respect to $c_4$ results in
$$\frac{\partial f_{\mathrm{CY},1}^\mathrm{red}}{\partial w_1} \frac{\partial w_1^*}{\partial c_4}\equiv 0,$$
which holds for all $c\in \mathbf{c}_\epsilon$.
In particular, we have
$$\nabla_{w_1}f_{\mathrm{CY},1}^\mathrm{red}(0,r^*(c^*);c^*) N\tr \nabla_{c_4}x_1^*(c_4^*)=0.$$
By Assumption~\ref{Assumption:Requirements}-c, it must be that
$$\nabla_{w_1}f_{\mathrm{CY},1}^\mathrm{red}(0,r^*(c^*);c^*) = 0.$$

Similar arguments show that
$$\nabla_{w_2}f_{\mathrm{CY},2}^\mathrm{red}(0,r^*(c^*);c^*)= 0,\quad
\nabla_{w_3}f_{\mathrm{CY},3}^\mathrm{red}(0,r^*(c^*);c^*) = 0,\quad
\nabla_{w_4}f_{\mathrm{CY},4}^\mathrm{red}(0,r^*(c^*);c^*) = 0.$$
Note that taking the total derivative with respect to $c_1$ does not imply that $\frac{\partial f_{\mathrm{CY},1}^\mathrm{red}}{\partial w_2} = 0$, since $M_{1}$ depends on $c_1$, which is not shown explicitly.

Let us now consider the case where $\ell>2$, using arguments similar to those in the scalar case. We will first take the total derivative with respect to $c_{41}$, which gives
$$\nabla_{w_1}f_{\mathrm{CY},1}^\mathrm{red}(0,r^*(c^*);c^*) N\tr \nabla_{c_{41}}x^*_1(c_4^*)=0.$$
Taking the total derivative with respect to $c_{42},....,c_{4(\ell-1)}$ results in
\begin{align*}
\nabla_{w_1}f_{\mathrm{CY},1}^\mathrm{red}(0,r^*(c^*);c^*) N\tr \nabla_{c_{42}}x^*_1(c_4^*)&= 0\\
&\vdots \\
\nabla_{w_1}f_{\mathrm{CY},1}^\mathrm{red}(0,r^*(c^*);c^*) N\tr \nabla_{c_{4(\ell-1)}}x^*_1(c_4^*)&= 0.
\end{align*}
Again,  by Assumption~\ref{Assumption:Requirements}-c, it  must be that
$$\nabla_{w_1}f_{\mathrm{CY},1}^\mathrm{red}(0,r^*(c^*);c^*)=0.$$
\end{proof}

Now let us inspect the structure of $J_\mathrm{PW}(0,r^*)$.

\begin{Claim}\label{Claim:JacobianStructureOmega}
The Jacobian matrix $J_{\mathrm{PW}}(0,r^*)$ has the structure
$$J_{\mathrm{PW}}(0,r^*)= \Pmatrix{
0 & 0 & 0 & J_{1} & G_1 & 0 & 0 & 0\\
0 & 0 & J_{2} & 0 & 0 & G_2 & 0 & 0\\
0 & J_{3} & 0 & 0 & 0 & 0 & G_3 & 0\\
J_{4} & 0 & 0 & 0 & 0 & 0 & 0 & G_4\\
0 & 0 & 0 & F_{1} & E_1 & 0 & 0 & 0\\
0 & 0 & F_{2} & 0 & 0 & E_2 & 0 & 0\\
0 & F_{3} & 0 & 0 & 0 & 0 & E_3 & 0\\
F_{4} & 0 & 0 & 0 & 0 & 0 & 0 & E_4}, $$
with the same block matrices in $J_{\mathrm{CY}}(0,r^*).$
\end{Claim}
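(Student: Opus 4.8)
The plan is to mirror the proof of Claim~\ref{Claim:JacobianStructureGamma} to fix the sparsity pattern, and then argue that the surviving blocks are literally the same matrices, exploiting that the two games share utility matrices and payoff-based learning maps and differ only in which opponent supplies each payoff vector. For the pattern, observe that in $\Gamma_{\mathrm{PW}}$ the dependency graph routes $p_1$ through player $4$, $p_2$ through player $3$, $p_3$ through player $2$, and $p_4$ through player $1$. Because each $f_{\mathrm{PW},i}^{\mathrm{red}}$ and $g_{\mathrm{PW},i}^{\mathrm{red}}$ sees $w$ only through $w_i$ and through the single opponent feeding $p_i$, every off-diagonal strategy block vanishes except those sitting in columns $w_4,w_3,w_2,w_1$ for $i=1,2,3,4$, which carry $J_i$ and $F_i$; the $G_i,E_i$ blocks occupy the higher-order diagonal exactly as before. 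The diagonal strategy blocks $\nabla_{w_i}f_{\mathrm{PW},i}^{\mathrm{red}}(0,r^*;c^*)=0$ follow by repeating the total-derivative argument of Claim~\ref{Claim:JacobianStructureGamma}: differentiate the stationarity identity $f_{\mathrm{PW},i}^{\mathrm{red}}(q^*(c),s^*(c);c)\equiv 0$ in the parameter of the player who depends on $i$ (the parameter that moves player $i$'s own equilibrium strategy) and invoke the rank condition of Assumption~\ref{Assumption:Requirements}-c. One bookkeeping point is that the $\Gamma_{\mathrm{PW}}$ equilibrium strategy of player $i$, viewed as a function of $c$, coincides with a $\Gamma_{\mathrm{CY}}$ equilibrium strategy of the same or a relabeled player, since it is pinned down by the identical indifference condition; this lets Assumption~\ref{Assumption:Requirements}-c still supply the needed full rank.

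To identify the surviving blocks with those of $J_{\mathrm{CY}}(0,r^*)$, I would write the reduced dynamics explicitly: each strategy block has the form $N\tr[\partial_{p_i}f_i+\partial_{\phi_i}f_i\,\partial_{p_i}\phi_i]\,M_i N$ (and analogously with $g_i$ for $F_i$), while $G_i$ and $E_i$ are of the form $N\tr\partial_{z_i}(\cdot)$. In every case the opponent enters only through the chain-rule factor $M_i N$, and all partials are evaluated at the player's equilibrium triple $(x_i^*,p_i^*,z_i^*)$. Since $M_i$ and the maps $f_i,g_i,\phi_i$ are common to both games, it remains only to check that these evaluation triples coincide across the two games.

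This coincidence of evaluation points is the crux, and the step I would state most carefully; it is exactly what Assumption~\ref{Assumption:Requirements}-d provides. The own strategy $x_i^*$ and the higher-order component ($r_i^*=s_i^*$ by the uniqueness in Assumption~\ref{Assumption:NEStationarity}) agree at $c^*$ by construction, and I would verify player by player that the distinct opponent still yields the same payoff vector: for instance $p_1^*=M_1x_2^*(c_1^*)$ in $\Gamma_{\mathrm{CY}}$ equals $p_1^*=M_1y_4^*(c_1^*)$ in $\Gamma_{\mathrm{PW}}$ because $x_2^*(c_1^*)=y_4^*(c_1^*)$, with $x_3^*(c_2^*)=y_3^*(c_2^*)$, $x_4^*(c_3^*)=y_2^*(c_3^*)$, and $x_1^*(c_4^*)=y_1^*(c_4^*)$ following from the NE coincidences derived from Assumption~\ref{Assumption:Requirements}-d. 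With the triples matched, every block formula returns the identical matrix, so the nonzero entries of $J_{\mathrm{PW}}(0,r^*)$ are precisely $J_i,F_i,G_i,E_i$, which is the claim.
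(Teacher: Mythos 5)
Your proposal is correct and follows essentially the same route as the paper: establish the sparsity pattern (including the vanishing diagonal strategy blocks via the total-derivative and rank argument), then match the evaluation triples across the two games using Assumption~\ref{Assumption:Requirements}-d and $r^*(c^*)=s^*(c^*)$ to identify the surviving blocks with $J_i,F_i,G_i,E_i$. Your explicit ``bookkeeping point'' about relabeling the $\Gamma_\mathrm{PW}$ equilibrium maps so that Assumption~\ref{Assumption:Requirements}-c still applies is a detail the paper passes over with ``similar arguments,'' and it is a welcome clarification rather than a deviation.
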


\begin{proof}
Using similar arguments to the proof of Claim~\ref{Claim:JacobianStructureGamma}, we conclude that  $\nabla_{q_i}f_{\mathrm{PW},i}^\mathrm{red} (0,r^*)=0$. Let us now inspect the relation between the block matrices of $J_{\mathrm{CY}}(0,r^*)$ and $J_{\mathrm{PW}}(0,r^*)$.

We will first look at the local dynamics of player 1 in $\Gamma_\mathrm{CY}(c^*)$:
\begin{align*}
\dot{w}_1&=f_{\mathrm{CY},1}^\mathrm{red}(w,z;c^*)\\
\dot{z}_1&=g_{\mathrm{CY},1}^\mathrm{red}(w,z;c^*)
\end{align*}
which are
\begin{align*}
\dot{w}_{1}&=N\tr f_1\left(x_1^*(c^*_4) + Nw_1, M_{1}\cdot \left(x_2^*(c^*_1)+ Nw_2\right),\phi_1 \left( M_{1}\cdot(x_2^*(c^*_1)+ Nw_2),z_1\right)\right)\\
\dot{z}_1&=g_{\mathrm{CY},1}^\mathrm{red}(M_{1}\cdot \left(x_2^*(c^*_1)+ Nw_2\right),z_1).
\end{align*}

Similarly, the local dynamics of player 1 in $\Gamma_\mathrm{PW}(c^*)$ take the form
\begin{align*}
\dot{q}_1&=f_{\mathrm{PW},1}^\mathrm{red}(q,z;c^*)\\
\dot{z}_1&=g_{\mathrm{PW},1}^\mathrm{red}(q,z;c^*)
\end{align*}
where
\begin{align*}
\dot{q}_1&=N\tr f_1\left(y_1^*(c^*_4) + Nq_1, M_{1}\cdot \left(y_4^*(c^*_1)+ Nq_4\right),\phi_1 \left( M_{1} \cdot  \left(y_4^*(c^*_1)+ Nq_4\right),z_1\right)\right)\\
\dot{z}_1&=g_{\mathrm{PW},1}^\mathrm{red}(M_{1} \cdot \left(y_4^*(c^*_1)+ Nq_4\right),z_1).
\end{align*}
It is clear that
$$
\nabla_{w_2} f_{\mathrm{CY},1}^\mathrm{red} (w_1,w_2,z_1)= \nabla_{q_4} f_{\mathrm{PW},1}^\mathrm{red} (q_1,q_4,\bar{z}_1), \quad \text{and} \quad \nabla_{w_2} g_{\mathrm{CY},1}^\mathrm{red} (w_2,z_1)= \nabla_{q_4} g_{\mathrm{PW},1}^\mathrm{red} (q_4,\bar{z}_1),
$$
for $w_1=q_1$, $w_2=q_4$, and $z_1=\bar{z}_1$, which is satisfied at $(0,r^*)$.
Follow the same procedure for player 3.

Next, inspect the local dynamics of player 2 in $\Gamma_\mathrm{CY}(c^*)$:
\begin{align*}
\dot{w}_{2}&=N\tr f_2\left(x_2^*(c^*_1) + Nw_2, M_{2}\cdot \left(x_3^*(c^*_2)+ Nw_3\right),\phi_1 \left( M_{2}\cdot(x_3^*(c^*_2)+ Nw_3),z_2\right)\right)\\
\dot{z}_2&=g_{\mathrm{CY},2}^\mathrm{red}(M_{2}\cdot \left(x_3^*(c^*_2)+ Nw_3\right),z_2),
\end{align*}
and the local dynamics of player 2 in $\Gamma_\mathrm{PW}(c^*)$:
\begin{align*}
\dot{q}_2&=N\tr f_2\left(y_2^*(c^*_3) + Nq_2, M_{2}\cdot \left(y_3^*(c^*_2)+ Nq_3\right),\phi_1 \left( M_{2} \cdot  \left(y_3^*(c^*_2)+ Nq_3\right),z_2\right)\right)\\
\dot{z}_2&=g_{\mathrm{PW},2}^\mathrm{red}(M_{2} \cdot \left(y_3^*(c^*_2)+ Nq_3\right),z_2).
\end{align*}
Recall that $x_2^*(c_1^*)=y_2^*(c_3^*)$,  which completes the proof for player 2. Follow the same procedure for player 4.
\end{proof}

\subsection{Characteristic polynomials and constant terms}

To find eigenvalues of each Jacobian we need to compute
$$\psi_{\mathrm{CY}}(\lambda)=\Det{J_{\mathrm{CY}} (0,r^*)-\lambda I }=0,\quad \text{and} \quad \psi_{\mathrm{PW}}(\lambda)=\Det{J_{\mathrm{PW}} (0,r^*)-\lambda I}=0. $$

We now show that the (monic) polynomials, $\psi_{\mathrm{CY}}(\lambda)$ and $\psi_{\mathrm{PW}}(\lambda)$, are such that the constant terms $\psi_{\mathrm{CY}}(0)$ and $\psi_{\mathrm{PW}}(0)$ have opposite signs. Accordingly, it cannot be that both have roots with negative real parts (since any such polynomial must have strictly positive coefficients, e.g., \cite[Theorem~9.1]{poznyak2008stable}).

The constant terms take the form
$$\psi_{\mathrm{CY}}(0)=\Det{ J_{\mathrm{CY}} (0,r^*)}\quad \text{and} \quad \psi_{\mathrm{PW}}(0)=\Det{ J_{\mathrm{PW}} (0,r^*)}. $$
The matrix  $J_{\mathrm{CY}} (0,r^*)$ is $\ell-1$ column swaps away from $J_{\mathrm{PW}} (0,r^*)$. Since $\ell$ is even, these must have opposite signs. This concludes the proof of Theorem~\ref{Thm:SStabilization}.

\section{The asymptotic best-response property}\label{Sec:ABR}\ \\[-15pt]

Up to this point, the discussion has been on proving learnability. No restrictions were set on the dynamics to ensure reasonable behavior. In this section, we introduce Asymptotic Best-Response (ABR), a property inherent to reasonable learning dynamics.

\subsection{ABR and internal stability}

Let us define the Asymptotic Best-Response (ABR) property for the standard-order dynamics defined in \eqref{eq:standardorder}. This property states that if
$$p_i(t) \rightarrow \bar{p},$$
as $t\rightarrow \infty$, where $\bar{p}$ is a constant vector,
then
$$x_i(t)\tr p_i(t) \rightarrow \max_{x_i\in \Delta(k_i)} x_i \tr \bar{p}.$$
In words, the ABR property stipulates that dynamics converge to a best response of an asymptotically stationary environment. If $\bar{p}$ has a unique maximal element, then $x_i(t)$ must converge to the corresponding vertex. If $\bar{p}$ is a vector with all equal values, e.g., $\bar{p} = \alpha \mathbf{1}$, then there is no asymptotic constraint on $x_i(t)$. In particular, $x_i(t)$ itself need not converge.

For higher-order learning dynamics, we will establish a connection between ABR and the internal stability of the higher-order components.

\begin{Proposition}\label{Prop:Internallystable}
Consider the following higher-order learning dynamics
\begin{align*}
\dot{x}_i &= f_i(x_i, p_i+ \phi_i(p_i, z_i))\\
\dot{z}_i &=E_i z_i + F_i p_i\\
& \phi_i(p_i,z_i)=G_i z_i+H_i p_i,
\end{align*}
where Assumption~\ref{Assumption:higherorder} holds and $E_i$ is a stability matrix. Suppose that the standard-order dynamics
$$\dot{x}_i=f_i(x_i,p_i)$$
satisfy ABR. For the higher-order dynamics, if $p_i(t)\rightarrow \bar{p}$ as $t\rightarrow \infty$, then
$$x_i(t)\tr p_i(t) \rightarrow \max_{x_i\in \Delta(k_i)} x_i \tr \bar{p}.$$
\end{Proposition}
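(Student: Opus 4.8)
The plan is to exploit that the higher-order subsystem is linear and internally stable, so that the modified payoff vector fed into the standard-order dynamics converges to the same limit $\bar{p}$ as the raw payoff, whereupon the assumed ABR property of the standard-order dynamics delivers the conclusion.

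First I would analyze the auxiliary dynamics $\dot{z}_i = E_i z_i + F_i p_i$. Since $E_i$ is a stability matrix it is invertible, so this linear system has a unique equilibrium $z_i^* = -E_i^{-1}F_i\bar{p}$ corresponding to the constant input $\bar{p}$. Writing $\tilde{z}_i(t) = z_i(t) - z_i^*$, one obtains $\dot{\tilde{z}}_i = E_i\tilde{z}_i + F_i(p_i(t) - \bar{p})$, a Hurwitz linear system driven by the vanishing input $F_i(p_i(t) - \bar{p})$. The converging-input/converging-state property of stable linear systems (argued via the variation-of-parameters convolution integral) then yields $\tilde{z}_i(t)\to 0$, i.e., $z_i(t)\to z_i^*$. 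I expect this step --- deducing convergence of the state from convergence of the input --- to be the main technical obstacle, although it is a standard fact in linear systems theory.

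Next I would identify $z_i^*$ with the equilibrium component guaranteed by Assumption~\ref{Assumption:higherorder}. With $g_i(p_i,z_i) = E_i z_i + F_i p_i$, the assumption provides, for $p_i^* = \bar{p}$, some $z_i^{**}$ satisfying $0 = E_i z_i^{**} + F_i\bar{p}$ and $\phi_i(\bar{p},z_i^{**}) = 0$; uniqueness of the equilibrium (invertibility of $E_i$) forces $z_i^{**} = z_i^*$, hence $\phi_i(\bar{p},z_i^*) = 0$. By continuity of $\phi_i$ together with $p_i(t)\to\bar{p}$ and $z_i(t)\to z_i^*$, it follows that $\phi_i(p_i(t),z_i(t))\to\phi_i(\bar{p},z_i^*) = 0$, so the modified payoff $\tilde{p}_i(t) := p_i(t) + \phi_i(p_i(t),z_i(t))$ converges to $\bar{p}$.

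Finally I would observe that the $x_i$-equation is exactly the standard-order dynamics $\dot{x}_i = f_i(x_i,\tilde{p}_i(t))$ driven by the input $\tilde{p}_i$, which converges to $\bar{p}$. Applying the assumed ABR property of the standard-order dynamics to this input gives $x_i(t)\tr\tilde{p}_i(t)\to\max_{x_i\in\Delta(k_i)} x_i\tr\bar{p}$. Note that ABR only asserts convergence of this scalar inner product, not of $x_i(t)$ itself, which is exactly what is available and suffices. To transfer the limit from $\tilde{p}_i$ to $p_i$, I would use that $x_i(t)$ remains in the compact simplex $\Delta(k_i)$ and is therefore bounded, so
$$\magn{x_i(t)\tr\tilde{p}_i(t) - x_i(t)\tr p_i(t)} = \magn{x_i(t)\tr\phi_i(p_i(t),z_i(t))} \le \norm{x_i(t)}\,\norm{\phi_i(p_i(t),z_i(t))} \to 0.$$
Hence $x_i(t)\tr p_i(t)$ shares the same limit $\max_{x_i\in\Delta(k_i)} x_i\tr\bar{p}$, completing the argument.
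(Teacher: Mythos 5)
Your proposal is correct and follows essentially the same route as the paper: use Hurwitz stability of $E_i$ to get $z_i(t)\to -E_i^{-1}F_i\bar p$, invoke Assumption~\ref{Assumption:higherorder} to conclude $\phi_i\to 0$ (the paper notes additionally that arbitrariness of $\bar p$ forces $-G_iE_i^{-1}F_i+H_i=0$), and then apply ABR of the standard-order dynamics to the modified input $p_i+\phi_i\to\bar p$. Your final step transferring the limit from $x_i\tr(p_i+\phi_i)$ to $x_i\tr p_i$ via boundedness of $x_i$ on the simplex is a detail the paper leaves implicit, and it is handled correctly.
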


\begin{proof}
Let $\bar{p}$ be an arbitrary constant vector and suppose
$$p_i(t) \rightarrow \bar{p}.$$
Knowing that $E_i$ is a stability matrix, we have
$$z_i(t) \rightarrow - E_i^{-1} F_i \bar{p},\quad \text{and} \quad \phi_i(t) \rightarrow (-G_i E_i^{-1} F_i+H_i)\bar{p}.$$
Since Assumption~\ref{Assumption:higherorder} holds, it must be that
$$(-G_i E_i^{-1} F_i+H_i)\bar{p}=0.$$
In fact, we must have
$$-G_i E_i^{-1} F_i+H_i=0$$
since $\bar{p}$ is arbitrary.
Therefore
$$p_i+\phi_i \rightarrow \bar{p}.$$
The desired conclusion follows from the standard-order dynamics having the ABR property.
\end{proof}

Let us now inspect the issue of learning with unstable higher-order components. Consider the higher-order version of replicator dynamics presented in \eqref{eq:higherorderRD} for a specific set of matrices $(E_i,F_i,G_i,H_i)$. If $p_i(t) \equiv \bar{p}$ and $E_i$ is not a stability matrix, the term $N_iG_i\xi_i(t)$ may not vanish. One can then construct $\bar{p}$ such that $x_i(t)$ does not converge to a best response of $\bar{p}$.

\begin{Example}\label{ex:noABR}
Consider the two-player coordination game
$$ R_1\left(x_1, x_2\right)=x_1\tr \Pmatrix{
1 & 0 \\
0 & 1}x_2, \quad  R_2\left(x_2, x_1\right)=x_2\tr \Pmatrix{
1 & 0 \\
0 & 1}x_1$$
which has a mixed-strategy NE at
$$x_1^*=x_2^*=\left(\frac{1}{2},\frac{1}{2}\right).$$

Suppose the dynamics of the players take the form
\begin{align*}
\dot{x}_{1}&=\diag{p_1+\phi_{1}(p_1,z_1)-\left( x_{1}\tr \left (p_{1}+\phi_{1}(p_1,z_1)\right)   \right)\one}x_{1}\\
\dot{z}_1&= -(N_1\tr p_1 - z_1) \\
&\quad \phi_{1}(p_1,z_1) = -\gamma_1 N_1(N_1\tr p_1 - z_{1}),\\
\dot{x}_{2}&=\diag{p_2+\phi_{2}(p_2,z_2)-\left( x_{2}\tr \left (p_{2}+\phi_{2}(p_2,z_2)\right) \right)\one} x_{2}\\
\dot{z}_2&= +\gamma_2(N_2\tr p_2-z_2)\\
&\quad \phi_{2}(p_2,z_2) = +\gamma_2 N_2 ( N_2\tr p_2-z_2),
\end{align*}
where $\gamma_1=20$ and $\gamma_2=50$.

\begin{figure}
\begin{center}
\includegraphics[width=0.5\textwidth]{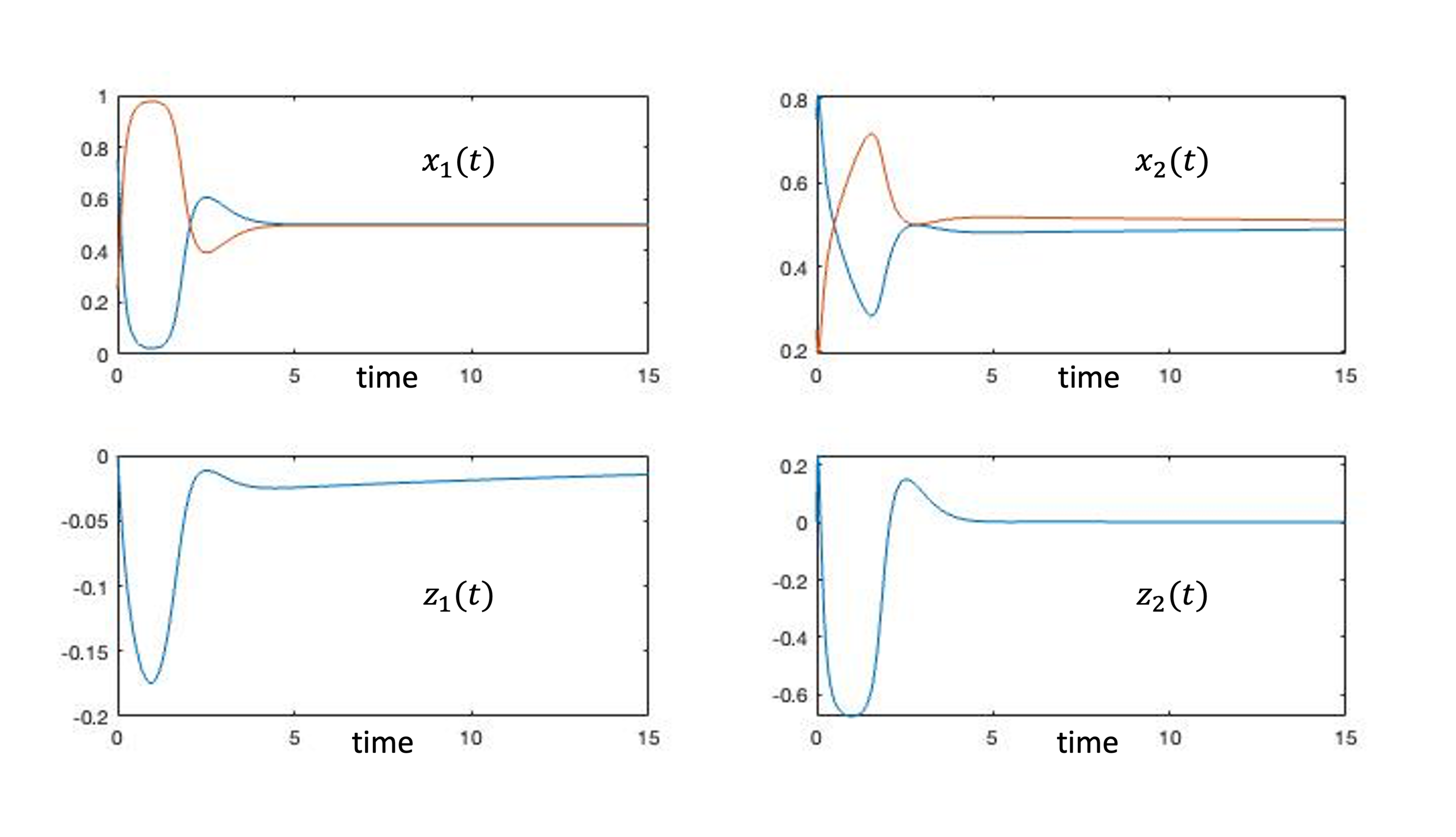}
\end{center}
\caption{Stable outcome of mixed-strategy equilibrium under higher-order replicator dynamics.}
\label{fig:stableHORD}
\end{figure}

When $p_1=P_1(x_2)$ and  $p_2=P_2(x_1)$, one can verify that the dynamics are locally stable around the mixed-strategy NE of the coordination game. Figure~\ref{fig:stableHORD} shows a representative simulation starting from $x_1(0) = x_2(0) = \Pmatrix{0.75\\0.25}$ and $z_1(0) = z_2(0) = 0$.

Now, let us inspect the first player's dynamics \textit{in isolation} with $p_1(t)$ constant. These dynamics do not play a best response, neither to $p_1=\Pmatrix{1&0}\tr$ nor to $p_1=\Pmatrix{0&1}\tr$, starting from $z_1(0)=0$ and any $x_1(0)$ in the interior of the simplex. The solution of $z_1(t)$ starting from $z_1(0)=0$  takes the form
$$z_1(t)=-\int_{0}^{t} e^{(t-\tau)} N_1\tr p_1(\tau) d \tau.$$
Let
$$N_1=\Pmatrix{-\frac{1}{\sqrt{2}}\\\frac{1}{\sqrt{2}}},$$
and inspect the solution when $p_1=\Pmatrix{1&0}\tr$. We have
$$z_1(t) \rightarrow \infty,$$
therefore
$$\phi_1(t) \rightarrow \Pmatrix{-\infty \\ \infty}\quad \text{and} \quad \left(\phi_1(t)+\Pmatrix{1\\0}\right)  \rightarrow \Pmatrix{-\infty \\ \infty}.$$
Accordingly,
$$x_1(t) \rightarrow \Pmatrix{0\\1}.$$
Figure~\ref{fig:UnstableRD} presents a simulation of player 1 responding to $p_1=\Pmatrix{1&0}\tr$, starting from various initial strategies. In all cases, $x_1(t)$ is not an asymptotic best response to the constant $p_1$.

\end{Example}

\begin{figure}[!t]
\centerline{\includegraphics[width=.3\columnwidth]{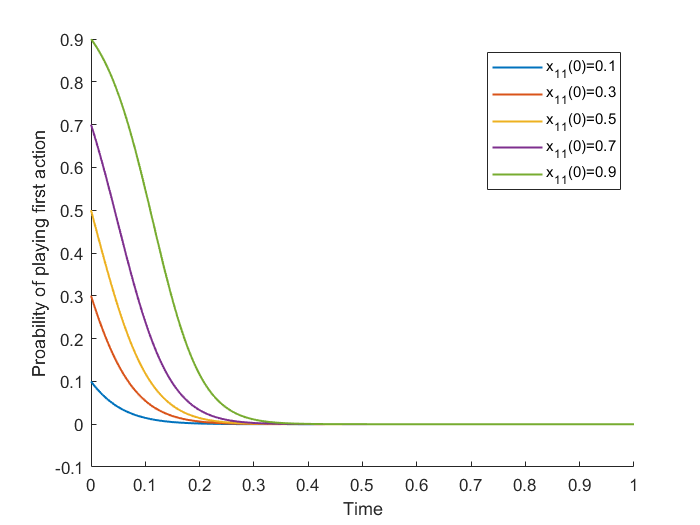}}
\caption{Dynamics of player 1 responding to $p_1=\Pmatrix{1\\0}$ from various initial strategies}
\label{fig:UnstableRD}
\end{figure}

\subsection{Learnability using internally stable higher-order components}

To recap, we have shown that the ABR property of higher-order dynamics is inherited from its standard-order counterpart whenever the higher-order are internally stable (Proposition~\ref{Prop:Internallystable}). Furthermore, in the absence of internal stability, it is possible that higher-order dynamics will not have the ABR property (Example~\ref{ex:noABR}). We now investigate the implications of imposing internal stability of the higher-order terms.

Towards this end, we will use the control theoretic concept of strong stabilization to investigate learnability using internally stable higher-order components, as before using the reduced-order linearized dynamics \eqref{eq:DecentralizedFormulationHeterogeneous}.  A system is ``strongly stabilizable'' if it can be made stable through feedback with another internally stable system. The ``parity interlacing principle" provides a necessary and sufficient condition for the strong stabilizability of a given system \cite{youla1974singleloop}. The principle states that there should be an \emph{even} number of real poles, counting with multiplicities, between each pair of real unstable blocking zeros (see Appendix~\ref{sec:linearSystems}). The decentralized version of the principle states that there should be an \emph{even} number of real poles, counting with multiplicities, between each pair of real unstable decentralized blocking zeros \cite{Ozguler1992DecentralizedSS}.

\subsubsection{A mixed-strategy NE requiring unstable higher-order dynamics}\label{sec:notstronglystabilizable}

In this section, we establish that learning some mixed-strategy NE \textit{requires} using unstable higher-order components, i.e., strong stabilization is not possible.

Consider a two-player two-action game with an isolated completely mixed-strategy NE, $x^*$, that satisfies Assumption~\ref{Assumption:Regularity}. Our general analysis framework (Section~\ref{sec:general}) takes the form
\begin{subequations}\label{eq:twoplayerss}
\begin{align}
\Pmatrix{\dot{w} \\ \dot{v}}
&= \Pmatrix{\mathcal{N}\tr \bar{J} \mathcal{N}&0\\\mathcal{M} & -I}\Pmatrix{w\\ v} + \Pmatrix{\bar{b}_1& 0\\0 & \bar{b}_2 \\0&0\\0&0}u\\
y &= \Pmatrix{\mathcal{M}&-I}\Pmatrix{w\\ v}.
\end{align}
\end{subequations}

We will consider the following assumption on the matrix $\mathcal{N}\tr \bar{J}\mathcal{N}$.

\begin{Assumption}\label{Assumption:zerotracetwoplayer}
The matrix $\mathcal{N}\tr \bar{J}\mathcal{N}$ has zero trace, i.e.,
$$\mathcal{N}\tr \bar{J}\mathcal{N}=\Pmatrix{0&n_1\\n_2&0}. $$
\end{Assumption}
The above assumption can be viewed as a consequence of NE stationarity and uncoupledness, as discussed in the proof of Claim~\ref{Claim:JacobianStructureGamma}.

Let
$$m_{12}=N_1\tr M_{12} N_2 \quad  m_{21}=N_2\tr M_{21} N_1.$$
Recalling Assumption~\ref{Assumption:Regularity}, it must be that $m_{12}\neq0$ and $m_{21}\neq0$. Furthermore, to comply with the learnability conditions of Theorem~\ref{Thm:HeterogeneousStabilization}, we assume that  $n_i \neq 0$ and $\bar{b}_i \neq 0$ for $i=1,2$.

\begin{Proposition}\label{Prop:Incomptwo}
If $n_1n_2>0$, then $x^*$ is not strongly stabilizable.
\end{Proposition}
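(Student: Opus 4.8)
The plan is to recast strong stabilizability of $x^*$ as a transfer-function question and apply the (decentralized) parity interlacing principle. First I would compute the open-loop transfer function from the control $u$ to the measurement $y$ of the reduced system \eqref{eq:twoplayerss}. Because the state matrix is block lower-triangular, the washout equation $\dot v = \mathcal{M}w - v = y$ gives $v = \tfrac{1}{s+1}\mathcal{M}w$ and hence $y = \tfrac{s}{s+1}\mathcal{M}w$, while $w = (sI - \mathcal{N}\tr\bar J\mathcal{N})^{-1}\diag{\bar b_1,\bar b_2}\,u$. This yields
\begin{equation*}
G(s) = \frac{s}{s+1}\,\mathcal{M}\,(sI - \mathcal{N}\tr\bar J\mathcal{N})^{-1}\diag{\bar b_1,\bar b_2}.
\end{equation*}
Under Assumption~\ref{Assumption:zerotracetwoplayer} one has $\mathcal{N}\tr\bar J\mathcal{N} = \Pmatrix{0 & n_1\\ n_2 & 0}$ and $\mathcal{M} = \Pmatrix{0 & m_{12}\\ m_{21} & 0}$, so $(sI - \mathcal{N}\tr\bar J\mathcal{N})^{-1} = \tfrac{1}{s^2 - n_1 n_2}\Pmatrix{s & n_1\\ n_2 & s}$, which makes every entry of $G$ explicit.

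Second, I would read off poles and blocking zeros. The plant characteristic polynomial is $\pi(s) = \Det{sI - A} = (s+1)^2(s^2 - n_1 n_2)$, with poles at $-1$ (double) and $\pm\sqrt{n_1 n_2}$. The factor $\tfrac{s}{s+1}$ is precisely the washout transfer function guaranteed by Assumption~\ref{Assumption:higherorder}; it forces $G(0)=0$, so $s=0$ is a blocking zero, while strict properness forces $G(\infty)=0$, so the point at infinity is a blocking zero. When $n_1 n_2>0$ the mode $+\sqrt{n_1 n_2}$ is a real unstable pole, and it is the \emph{unique} real pole strictly between the two real nonnegative blocking zeros $0$ and $\infty$. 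Thus an odd number (one) of real poles lies between a pair of real unstable blocking zeros, the parity interlacing principle is violated, and $x^*$ is not strongly stabilizable.

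To make this rigorous while sidestepping a full enumeration of decentralized blocking zeros, I would establish the relevant PIP instance directly by a sign argument on the closed-loop characteristic polynomial. Any internally stable decentralized controller is $K = \diag{K_1,K_2}$ with $K_i=\beta_i/\alpha_i$ proper and $\alpha_i$ Hurwitz, and the closed-loop characteristic polynomial equals, up to a nonzero constant, $p_{\mathrm{cl}}(s) = \pi(s)\,\alpha_1(s)\alpha_2(s)\,\det\!\big(I \mp G(s)K(s)\big)$. Clearing denominators gives $p_{\mathrm{cl}} = \pi\alpha_1\alpha_2 \mp (\pi G_{11})\beta_1\alpha_2 \mp (\pi G_{22})\beta_2\alpha_1 + (\pi\det G)\beta_1\beta_2$, where $\pi G_{ii}=(s+1)\,s\,(\cdot)$ and $\pi\det G = -s^2 m_{12}m_{21}\bar b_1\bar b_2$ all vanish at $s=0$. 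Hence $p_{\mathrm{cl}}(0) = \pi(0)\alpha_1(0)\alpha_2(0) = -\,n_1 n_2\,\alpha_1(0)\alpha_2(0)$, independently of the feedback sign convention, whereas the leading term $\pi\alpha_1\alpha_2$ dominates so $p_{\mathrm{cl}}(s)\to+\infty$. Since each $\alpha_i$ is Hurwitz one has $\alpha_i(0)>0$, and a Hurwitz $p_{\mathrm{cl}}$ with positive leading coefficient must be strictly positive on $[0,\infty)$; but $n_1 n_2>0$ forces $p_{\mathrm{cl}}(0)<0$, a contradiction. Therefore no internally stable decentralized controller stabilizes the reduced dynamics.

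The hard part is the decentralized aspect: an honest appeal to the decentralized PIP needs the notion of decentralized blocking zeros, and one must rule out extra such zeros in $(0,\sqrt{n_1 n_2})$ or $(\sqrt{n_1 n_2},\infty)$ that could flip the parity count. The sign computation above is what I would lean on to avoid this bookkeeping, since it invokes only that $s=0$ is a full (centralized) blocking zero and that $G$ is strictly proper, both manifestly controller-independent and sign-convention-independent. The remaining care is purely degree/properness accounting at infinity and confirming, via the standing nondegeneracy hypotheses $n_i,\bar b_i,m_{12},m_{21}\neq 0$, that $+\sqrt{n_1 n_2}$ is a genuine uncancelled pole (consistent with $+\sqrt{n_1 n_2}$ being movable, hence stabilizable by an \emph{unstable} controller per Theorem~\ref{Thm:HeterogeneousStabilization}).
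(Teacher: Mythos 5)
Your first two paragraphs reproduce the paper's proof almost verbatim: the same transfer function $\mathcal{T}(s)=\frac{s}{s+1}\,\mathcal{M}\,(sI-\mathcal{N}\tr\bar J\mathcal{N})^{-1}\mathbf{diag}(\bar b_1,\bar b_2)$, the same blocking zeros at $s=0$ and at infinity, the same pole set $\{\pm\sqrt{n_1n_2},-1,-1\}$, and the same conclusion that a single real unstable pole sits between the two blocking zeros, violating the parity interlacing principle. That part is correct and is exactly the paper's argument. What you add beyond the paper is the closed-loop sign computation: writing $p_{\mathrm{cl}}=\pi\,\alpha_1\alpha_2\det(I\mp GK)$, noting $G(0)=0$ forces $p_{\mathrm{cl}}(0)=\pi(0)\alpha_1(0)\alpha_2(0)=-n_1n_2\,\alpha_1(0)\alpha_2(0)<0$ while strict properness forces a positive leading coefficient, contradicting the fact that a Hurwitz polynomial with positive leading coefficient is positive on $[0,\infty)$. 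This is a valid, self-contained proof of the necessity direction of the PIP for this instance, and it is genuinely useful because it dispenses with citing the decentralized PIP and its notion of decentralized blocking zeros. One small remark: your worry that extra decentralized blocking zeros in $(0,\sqrt{n_1n_2})$ or $(\sqrt{n_1n_2},\infty)$ could ``flip the parity count'' is unfounded --- the PIP requires an even pole count between \emph{every} pair of real unstable blocking zeros, so additional zeros can only add constraints, never repair the violation exhibited by the pair $(0,\infty)$. The paper implicitly relies on the fact that a centralized blocking zero is a fortiori a decentralized one; your sign argument makes this dependence unnecessary.
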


\begin{proof}
To check strong stabilizability (see Appendix~\ref{sec:linearSystems}), we compute
$$
\mathcal{T}(s) =\Pmatrix{\mathcal{M}&-I} \left(s I-\Pmatrix{\mathcal{N}\tr \bar{J} \mathcal{N}&0\\\mathcal{M} & -I}\right)^{-1} \Pmatrix{\bar{b}_1& 0\\0 & \bar{b}_2 \\0&0\\0&0} =\frac{s}{s+1} \mathcal{M}(s I-\mathcal{N}\tr \bar{J} \mathcal{N})^{-1}\Pmatrix{\bar{b}_1&0\\0&\bar{b}_2}.
$$
The system has a blocking zero at $s=0$ and $|s|\rightarrow \infty$. Poles are a subset of the eigenvalues of
$$ \Pmatrix{\mathcal{N}\tr \bar{J} \mathcal{N}&0\\\mathcal{M} & -I},$$
which are
$$\pm \sqrt{n_1n_2},-1,-1.$$
When $n_1n_2 >0,$ the system has a single positive real pole with multiplicity one between the two blocking zeros. Therefore, the system is not strongly stabilizable.
\end{proof}

Using the previous result, it is straightforward to see that the mixed-strategy NE of the game in Example~\ref{ex:noABR} is not strongly stabilizable under target gradient play, replicator dynamics, or a mixture of the two. Next, we consider a general result that holds for identical games, where
$$M_{12}=M_{21}.$$

\begin{Proposition}\label{Prop:IdenticalInterest}
For identical two-player two-action games, isolated mixed-strategy NE are not strongly stabilizable when both players use the same learning dynamics that satisfy  Assumption~\ref{Assumption:zerotracetwoplayer}.
\end{Proposition}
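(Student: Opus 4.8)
The plan is to reduce the claim to the sufficient condition already proved in Proposition~\ref{Prop:Incomptwo}: I will show that identical interest ($M_{12}=M_{21}$) together with a common learning rule forces $n_1 n_2 > 0$, after which Proposition~\ref{Prop:Incomptwo} immediately yields that $x^*$ is not strongly stabilizable. Write $M := M_{12} = M_{21}$. For a two-player game the payoff vectors are linear, $P_1(x_2) = M x_2$ and $P_2(x_1) = M x_1$ (up to the derivative-irrelevant additive constant), so differentiating $\mathcal{L}_i(x,v,u) = f_i(x_i, P_i(x_{-i}) + N_i u_i)$ by the chain rule gives $\bar{J}_{12} = (\partial_p f_1) M$ and $\bar{J}_{21} = (\partial_p f_2) M$, where $\partial_p f_i$ is the Jacobian of $f_i$ with respect to its payoff argument evaluated at the equilibrium $(x_i^*,\alpha_i\one)$. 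Hence the off-diagonal reduced blocks are
$$n_1 = N_1\tr (\partial_p f_1)\, M\, N_2, \qquad n_2 = N_2\tr (\partial_p f_2)\, M\, N_1,$$
while the diagonal blocks vanish by Assumption~\ref{Assumption:zerotracetwoplayer}, which is exactly the setting of Proposition~\ref{Prop:Incomptwo}.

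The first step is to establish that the completely mixed NE is symmetric, i.e.\ $x_1^* = x_2^*$ and $\alpha_1 = \alpha_2$. Fix the common orthonormal tangent vector $N := N_1 = N_2$; this is harmless, since flipping either $N_i$ flips the sign of both $n_1$ and $n_2$ and thus leaves $n_1 n_2$ (and all stabilizability properties of \eqref{eq:twoplayerss}) invariant. The equilibrium conditions read $M x_2^* = \alpha_1\one$ and $M x_1^* = \alpha_2\one$. Writing $x_i^* = \tfrac12\one + s_i N$ and subtracting the two relations gives $(s_2 - s_1)\, M N = (\alpha_1 - \alpha_2)\one$. Projecting onto $N$ and using $N\tr\one = 0$ yields $(s_2 - s_1)\, m_{12} = 0$; since $m_{12} = N\tr M N \neq 0$ by Assumption~\ref{Assumption:Regularity}, we conclude $s_1 = s_2$, hence $x_1^* = x_2^*$, and then $\alpha_1 = \alpha_2$. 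Note this uses the identical-interest relation in the form $M_{12}=M_{21}$ (not a transpose relation), and it sidesteps any invertibility hypothesis on $M$ by projecting onto $N$.

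The second step invokes the common-dynamics hypothesis. Because $f_1 = f_2$ as functions and the evaluation point $(x_i^*,\alpha_i\one)$ is now identical for both players, their payoff Jacobians coincide, $\partial_p f_1 = \partial_p f_2 =: D$. With $N_1 = N_2 = N$ and $M_{12} = M_{21} = M$ this forces
$$n_1 = N\tr D\, M\, N = n_2,$$
so that $n_1 n_2 = n_1^2 > 0$, the strict inequality coming from the standing assumption $n_i \neq 0$. Proposition~\ref{Prop:Incomptwo} then finishes the proof.

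The main obstacle is the symmetry step: one must verify that identical interest in this paper's sense genuinely collapses the two players' equilibria onto a common interior point, and that this conclusion survives without assuming $M$ invertible. The projection onto $N$ combined with the regularity condition $m_{12}\neq 0$ is precisely the device that achieves this. Everything else is routine bookkeeping: confirming $\bar{J}_{12} = (\partial_p f_1) M_{12}$ and $\bar{J}_{21} = (\partial_p f_2) M_{21}$ from the chain rule, and checking that the sign ambiguity in the choice of the $N_i$ leaves $n_1 n_2$ unchanged.
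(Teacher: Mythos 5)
Your proof is correct and follows the same route as the paper: reduce to Proposition~\ref{Prop:Incomptwo} by showing $n_1 = n_2$, so that $n_1 n_2 = n_1^2 > 0$. The paper's own proof is a one-line assertion that identical utilities plus identical dynamics give $n_1=n_2$; your argument supplies the justification the paper leaves implicit, namely that the completely mixed equilibrium is symmetric ($x_1^*=x_2^*$ and $\alpha_1=\alpha_2$, obtained by projecting the equilibrium conditions onto $N$ and using $m_{12}\neq 0$), which is what guarantees the two players' payoff Jacobians $\partial_p f_i$ are evaluated at the same point and hence coincide.
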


\begin{proof}
Since both players use the same dynamics and utilities are identical, we have $n_1=n_2$.
\end{proof}

\subsubsection{Stable higher-order dynamics and mixed-strategy NE of competitive games}

The previous result showed that certain mixed-strategy NE are incompatible with stable higher-order dynamics. We now show that mixed-strategy NE of a class of competitive games are compatible with stable higher-order dynamics augmented to target gradient play dynamics.

\begin{Definition} [\cite{Daskalakis2009NetworkZeroSum}]
In a zero-sum graphical polymatrix game, the utility of a player takes the form
$$R_i(x)= x_i\tr \sum_{j=1\atop j\not= i}^n M_{ij} x_j,$$
and the utility matrices satisfy the following
$$M_{ij}=-M_{ji}\tr \quad \forall \quad i, j. $$
\end{Definition}

\begin{Proposition}\label{Prop:Zerosumgrap}
An isolated completely mixed-strategy NE of a zero-sum graphical polymatrix games is strongly stabilizable under higher-order target gradient play dynamics.
\end{Proposition}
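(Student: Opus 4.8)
The plan is to reuse the transfer-function machinery from the proof of Proposition~\ref{Prop:Incomptwo} and to show that the zero-sum network structure forces the relevant reduced Jacobian $\mathcal{M}$ to have purely imaginary spectrum, so that the (decentralized) parity interlacing principle is satisfied vacuously. First I would specialize the general reduced-order linearized dynamics \eqref{eq:DecentralizedFormulationHeterogeneous} to target gradient play. As established in the target gradient play subsection, one has $\mathcal{N}\tr \bar{J}\mathcal{N} = \mathcal{M}$ and $\bar{B}_i = N_i$, so that $N_i\tr \bar{B}_i = N_i\tr N_i = I$ and the open system reads $\dot{w} = \mathcal{M} w + u$, $\dot{v} = \mathcal{M}w - v$, $y = \mathcal{M} w - v$. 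Computing the transfer function exactly as in Proposition~\ref{Prop:Incomptwo} (the block-triangular inverse collapses the washout $v$-dynamics into the scalar factor $s/(s+1)$) yields
\[
\mathcal{T}(s) = \frac{s}{s+1}\,\mathcal{M}\,(sI-\mathcal{M})^{-1}.
\]
From this closed form I would read off the real unstable blocking zeros as $s=0$ and $s=\infty$ (the only values at which $\mathcal{T}$ vanishes identically, since $\mathcal{M}$ is nonsingular by Assumption~\ref{Assumption:Regularity}), and the poles as the eigenvalues of $\mathcal{M}$ together with the stable pole at $s=-1$.

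The decisive structural step is to show that $\mathcal{M}$ is skew-symmetric. For a polymatrix game one has $M_{ij} = \nabla_{x_j}P_i$ and $M_{ii}=0$, so the full Jacobian $M$ of \eqref{eq:calligraphicM} has blocks $M_{ij}$; the zero-sum condition $M_{ij} = -M_{ji}\tr$ then makes $M$ skew-symmetric block-wise, and since $\mathcal{M} = \mathcal{N}\tr M \mathcal{N}$ this congruence preserves skew-symmetry, giving $\mathcal{M}\tr = -\mathcal{M}$. A real skew-symmetric matrix has purely imaginary eigenvalues, which together with nonsingularity of $\mathcal{M}$ (excluding the origin) means that $\mathcal{M}$ has no real eigenvalues at all. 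Consequently $\mathcal{T}(s)$ has no real poles in the right-half plane.

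I would then invoke the parity interlacing principle: between the two real unstable blocking zeros $s=0$ and $s=\infty$ there are zero real right-half-plane poles, and zero is even, so the system is strongly stabilizable. Because the plant's poles are intrinsic (independent of the centralized versus decentralized distinction), the absence of any real right-half-plane poles makes the decentralized version of the principle hold as well, regardless of how the decentralized blocking zeros are enumerated. Decentralized stabilizability itself (no unstable decentralized fixed modes) is already furnished by Theorem~\ref{Thm:HeterogeneousStabilization}, since target gradient play was shown to satisfy its hypotheses; what remains specific here is the stability of the stabilizing higher-order component, which is precisely what strong stabilizability delivers.

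The main obstacle I anticipate is the book-keeping at the interface between the centralized transfer-function computation and the decentralized parity interlacing principle—in particular, arguing carefully that the possibly enlarged set of \emph{decentralized} blocking zeros cannot introduce an odd count of real right-half-plane poles between some pair of them. This is exactly where the skew-symmetry of $\mathcal{M}$ does the heavy lifting: since there are no real right-half-plane poles whatsoever, every such count is identically zero, and hence even, irrespective of the locations of the decentralized blocking zeros. A secondary point to verify is that no spurious pole-zero cancellation in $\mathcal{T}(s)$ alters the set of poles relevant to the principle, but this is routine given that $(sI-\mathcal{M})^{-1}$ is invertible away from the imaginary-axis spectrum of $\mathcal{M}$.
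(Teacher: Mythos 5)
Your proposal is correct and follows essentially the same route as the paper: the paper's proof consists precisely of observing that the zero-sum condition $M_{ij}=-M_{ji}\tr$ makes $\mathcal{M}=\mathcal{N}\tr M\mathcal{N}$ skew-symmetric, hence all its eigenvalues have zero real part and the strong stabilizability (parity interlacing) condition is trivially satisfied. Your additional transfer-function and blocking-zero bookkeeping simply makes explicit what the paper leaves implicit, and is consistent with the analogous computation in Proposition~\ref{Prop:Incomptwo}.
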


\begin{proof}
Consider the local dynamics matrix of standard-order target gradient play around $x^*$
$$ \mathcal{M}=\mathcal{N}\tr \Pmatrix{
0 & M_{12} & \hdots &  M_{1n} \\[5pt]
-M_{12}\tr & 0 & \hdots &  M_{2n} \\[5pt]
\vdots & \vdots & \ddots & \vdots\\[5pt]
-M_{1n}\tr & -M_{2n}\tr & \hdots & 0
} \mathcal{N}.$$
The matrix $\mathcal{M}$ is skew-symmetric. Therefore, all of its eigenvalues have a real part equal to zero, and the strong stabilizability condition is trivially satisfied.
\end{proof}

For two-player games, a larger class of competitive games is the class of games that are strategically equivalent to zero-sum games. We will use the following definition, adapted from \cite[Theorem 1]{Jour1978Strategically}, as a starting point.

\begin{Definition}\label{Def:Strategically}
A two-player game with payoff matrices $M_{12}$ and $M_{21}$ is strategically equivalent to a zero-sum game iff
\begin{itemize}
\item $\exists$ $\alpha>0$ such that $M_{12}-\alpha A = Q_{1}$
\item  $\exists$ $\beta>0$ such that
$M_{21}+\beta A\tr = Q_{2}$,
\end{itemize}
where $ A \in \mathbb{R}^{k_1\times k_2}$ and each $Q_{i}$ is a matrix with all equal rows.
\end{Definition}

\begin{Proposition}\label{Prop:StrategicallyZerosum}
An isolated completely mixed-strategy NE of a two-player strategically equivalent zero-sum game is strongly stabilizable under higher-order target gradient play dynamics.
\end{Proposition}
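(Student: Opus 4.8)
The plan is to follow the transfer-function route used in Propositions~\ref{Prop:Incomptwo} and~\ref{Prop:Zerosumgrap}, reducing strong stabilizability to a spectral statement about the reduced-game matrix $\mathcal{M}$. For two-player target gradient play one has $\mathcal{N}\tr \bar{J}\mathcal{N}=\mathcal{M}$ and $\bar{B}_i=N_i$, so $N_i\tr \bar{B}_i=I$ and the system \eqref{eq:DecentralizedFormulationHeterogeneous} has transfer function
\[
\mathcal{T}(s)=\Pmatrix{\mathcal{M}&-I}\left(sI-\Pmatrix{\mathcal{M}&0\\\mathcal{M}&-I}\right)^{-1}\Pmatrix{I\\0}=\frac{s}{s+1}\,\mathcal{M}(sI-\mathcal{M})^{-1}.
\]
First I would record that, since $\mathcal{M}$ is nonsingular by Assumption~\ref{Assumption:Regularity}, the factor $\mathcal{M}(sI-\mathcal{M})^{-1}=s(sI-\mathcal{M})^{-1}-I$ vanishes for no finite nonzero $s$, so the only real blocking zeros in the closed right half-plane are $s=0$ and $s=\infty$, while the poles are the eigenvalues of $\mathcal{M}$ together with $s=-1$. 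The (decentralized) parity interlacing principle therefore reduces to checking that $\mathcal{M}$ has an even number of real eigenvalues in $(0,\infty)$.

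Next I would compute $\mathcal{M}$ for a strategically zero-sum game. Writing $\mathcal{M}=\Pmatrix{0&m_{12}\\m_{21}&0}$ with $m_{12}=N_1\tr M_{12}N_2$ and $m_{21}=N_2\tr M_{21}N_1$, I substitute the decomposition of Definition~\ref{Def:Strategically}, namely $M_{12}=\alpha A+Q_1$ and $M_{21}=-\beta A\tr +Q_2$ with $\alpha,\beta>0$. The crucial observation is that each $Q_i$ has all equal rows, hence $Q_i=\one q_i\tr$ for some $q_i$, and $N_i\tr \one=0$ by \eqref{eq:Ncondition}; thus the $Q_i$ terms are annihilated by the projection. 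Setting $\tilde{A}=N_1\tr A N_2$ this yields
\[
m_{12}=\alpha\tilde{A},\qquad m_{21}=-\beta\tilde{A}\tr,\qquad \mathcal{M}=\Pmatrix{0&\alpha\tilde{A}\\-\beta\tilde{A}\tr&0}.
\]

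Finally, I would show that $\mathcal{M}$ has purely imaginary spectrum by exhibiting a positive diagonal similarity to a skew-symmetric matrix. With $D=\bdiag{I,\ \sqrt{\alpha/\beta}\,I}$ one computes that
\[
D\mathcal{M}D^{-1}=\sqrt{\alpha\beta}\,\Pmatrix{0&\tilde{A}\\-\tilde{A}\tr&0}
\]
is skew-symmetric, so all eigenvalues of $\mathcal{M}$ lie on the imaginary axis; equivalently, each eigenvalue $\lambda$ satisfies that $\lambda^2$ is an eigenvalue of $-\alpha\beta\,\tilde{A}\tilde{A}\tr$, which is negative semidefinite. Nonsingularity of $\mathcal{M}$ rules out $\lambda=0$, so there are no real eigenvalues in $(0,\infty)$ at all; the number of real poles between the two real unstable blocking zeros is zero, the parity interlacing principle holds, and $x^*$ is strongly stabilizable. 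The main obstacle—really the only non-mechanical step—is recognizing that the equal-row matrices $Q_1,Q_2$ drop out under the projections $N_i\tr$, since this is exactly what lets strategic equivalence inherit the purely imaginary reduced spectrum of a genuine zero-sum game (Proposition~\ref{Prop:Zerosumgrap}, recovered here when $\alpha=\beta$ and $Q_i=0$); a secondary point requiring care is the verification that the right-half-plane blocking-zero set is precisely $\{0,\infty\}$, after which the argument is a diagonal-similarity computation.
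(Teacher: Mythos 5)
Your proof is correct and follows essentially the same route as the paper: the equal-row matrices $Q_1,Q_2$ are annihilated by the projections $N_i\tr$ since $N_i\tr\one=0$, and the resulting $\mathcal{M}=\Pmatrix{0&\alpha\tilde{A}\\-\beta\tilde{A}\tr&0}$ is shown to have purely imaginary spectrum (the paper uses your $\lambda^2\in\mathrm{spec}(-\alpha\beta\,\tilde A\tr\tilde A)$ argument directly; your diagonal similarity to a skew-symmetric matrix is an equivalent cosmetic variant). Your explicit transfer-function and blocking-zero bookkeeping merely spells out what the paper dismisses as the parity interlacing condition being ``trivially satisfied'' when there are no real unstable poles.
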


\begin{proof}
Using Definition~\ref{Def:Strategically}, the local dynamics of standard-order target gradient play dynamics around $x^*$ take the form
$$
\mathcal{M} =\mathcal{N}\tr \Pmatrix{0 & \alpha A +Q_1 \\ -\beta A\tr + Q_{2} & 0} \mathcal{N}= \mathcal{N}\tr \Pmatrix{0 & \alpha A  \\ -\beta A\tr & 0} \mathcal{N}+\underbrace{\mathcal{N}\tr \Pmatrix{0 & Q_{1} \\  Q_{2} & 0} \mathcal{N}}_{0}.
$$
Let $K= N_1\tr A N_2$, and suppose $(\lambda,v)$ is an eigenvalue/eigenvector pair of $\mathcal{M}$. Notice that $\mathcal{M}$ is non-singular since $x^*$ is isolated. This isolation necessitates that $k_1=k_2$.
Next, we have
$$\Pmatrix{0 & \alpha K  \\ -\beta K\tr  & 0} \Pmatrix{v_1\\v_2} = \lambda \Pmatrix{v_1\\v_2}$$
which gives
$$  \alpha K v_2 = \lambda v_1 \quad \And \quad  -\beta K\tr v_1 = \lambda v_2 $$
leading to
$$-\alpha \beta K\tr K v_2= \lambda^2 v_2.$$
If $\lambda$ is an eigenvalue of $\mathcal{M}$, then $\frac{\lambda^2}{\alpha \beta}$ must be an eigenvalue of $-K\tr K$. The matrix $-K\tr K$ is negative definite since $K$ is invertible. Therefore, all eigenvalues of $\mathcal{M}$ have a real part equal to zero.
\end{proof}

\subsubsection{Stable higher-order dynamics and strict NE}

Proposition~\ref{Prop:Incomptwo} presented a condition on a two-player two-action game such that converging to a mixed-strategy NE necessitated unstable higher-order dynamics. For target gradient play, the condition becomes $m_{12}m_{21}>0$, which only holds when the associated $2\times 2$ game also has a strict NE.

When players have more than two actions, and in the presence of strict NE, it is possible that unstable higher-order dynamics are not required.

\begin{Example}\label{eq:targetexample}
Consider the following coordination game
$$
R_1(x_1,x_2) = x_1\tr \Pmatrix{2 &1 & 1 \\1& 2 &1\\1&1&2} x_2, \quad R_2(x_2,x_1) = x_2\tr \Pmatrix{2 &3 & 2 \\1& 4 &3 \\1&2&4} x_1.
$$
This game is an ordinal potential game with $7$ Nash equilibria: three strict, three mixed, and one completely mixed-strategy NE, $x^*$, at
$$ x_1^*=\left(\frac{1}{2}, \frac{1}{6}, \frac{1}{3}\right) \quad \And \quad  x_2^*=\left(\frac{1}{3}, \frac{1}{3}, \frac{1}{3}\right).$$ The eigenvalues of the local dynamics matrix of target gradient play around $x^*$ are
$\pm 1.1721 \pm 0.2011j.$
The completely mixed-strategy NE is strongly stabilizable when using target gradient play dynamics. In particular, standard anticipatory target gradient play dynamics \cite{arslan2006anticipatory} can locally lead to $x^*$, albeit with a relatively small region of attraction.
\end{Example}

The behavior of anticipatory learning dynamics in Example~\ref{eq:targetexample} is associated with another desirable property of learning dynamics, namely that they exhibit local stability in the neighborhood of strict NE.

To this end, consider the following version of higher-order target gradient play dynamics
\begin{subequations}\label{eq:higherorderGP}
\begin{align}
\dot{x}_{i}&=\Pi_{\Delta}[x_i+P_i(x_{-i})+\phi_i(P_i(x_{-i}),v_i,\xi_i)]-x_i\\
\dot{v}_i&= N_i\tr P_i(x_{-i})-v_i, \label{eq:higherorderRD3}\\
\dot{\xi}_i&=E_i\xi_i+F_i(N_i\tr P_i(x_{-i})-v_i)\\
&\quad \phi_{i}(P_i(x_{-i}),v_i,\xi_i) =N_i( G_{i} \xi_{i}+H_{i} (N_i\tr P_i(x_{-i})-v_i)),
\end{align}
\end{subequations}
where the $x_{-i}$ also evolve according to such variants of higher-order target gradient play.

\begin{Proposition}
Consider a polymatrix game with a strict NE $x^*$. If all players use higher-order target gradient play dynamics of the form \eqref{eq:higherorderGP} and the higher-order components of all players are internally stable, then $(x^*,v^*,0)$ is locally asymptotically stable.
\end{Proposition}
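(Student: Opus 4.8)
The plan is to exploit the defining feature of a strict NE---each $x_i^*$ sits at a vertex $\mathbf{e}_{\kappa_i^*}$ of $\Delta(k_i)$ with a \emph{strict} payoff gap---to show that the Euclidean projection in the $x_i$-update locks onto that vertex throughout a neighborhood of the equilibrium, collapsing the dynamics to a stable linear cascade. First I would identify the equilibrium and verify that the higher-order correction vanishes there: setting $v_i^* = N_i\tr P_i(x_{-i}^*)$ and $\xi_i^* = 0$ yields $\dot v_i = 0$, $\dot\xi_i = 0$, and $\phi_i(P_i(x_{-i}^*),v_i^*,0) = N_i H_i(N_i\tr P_i(x_{-i}^*) - v_i^*) = 0$ by Assumption~\ref{Assumption:higherorder}, so $(x^*,v^*,0)$ is stationary.

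The crucial step is to evaluate the projection target $z_i = x_i + P_i(x_{-i}) + \phi_i$ at the equilibrium. Since $x_i^* = \mathbf{e}_{\kappa_i^*}$ and $\phi_i^* = 0$, we get $z_i^* = \mathbf{e}_{\kappa_i^*} + P_i(x_{-i}^*)$, whose $\kappa_i^*$-th coordinate exceeds each other coordinate $j$ by $1 + \big(P_{i\kappa_i^*}(x_{-i}^*) - P_{ij}(x_{-i}^*)\big) > 1$, strictly because the NE is strict. The Euclidean projection onto the simplex returns a vertex exactly when one coordinate dominates the rest by at least a unit gap,
$$\Pi_{\Delta}[z] = \mathbf{e}_{\kappa} \iff z_{\kappa} - z_j \ge 1 \ \text{ for all } j \ne \kappa,$$
so the strict inequality at the equilibrium persists, by continuity of $z_i$ in $(x,v,\xi)$, on an open neighborhood $U$ of $(x^*,v^*,0)$. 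Throughout $U$ we therefore have $\Pi_{\Delta}[x_i + P_i(x_{-i}) + \phi_i] = x_i^*$ for every player $i$.

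On $U$ the dynamics reduce to
$$\dot x_i = x_i^* - x_i, \qquad \dot v_i = N_i\tr P_i(x_{-i}) - v_i, \qquad \dot\xi_i = E_i \xi_i + F_i\big(N_i\tr P_i(x_{-i}) - v_i\big).$$
Because the game is polymatrix, $P_i(x_{-i}) = \sum_{j\ne i} M_{ij} x_j$ is affine, so this is an affine system. In deviation coordinates it is block lower-triangular: the $x$-block is decoupled with dynamics matrix $-I$; the $v$-block is driven by $x$ with diagonal block $-I$; and the $\xi$-block is driven by $(x,v)$ with diagonal blocks $E_i$, each Hurwitz by the internal-stability hypothesis. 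All eigenvalues lie strictly in the left half-plane, so the linear system is exponentially stable and admits a quadratic Lyapunov function $V$.

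To close the argument I would invoke forward-invariance: for the triangular linear flow, a sufficiently small sublevel set $\{V \le \epsilon\}$ is an ellipsoid contained in the open set $U$, and it is forward-invariant since $\dot V < 0$ there; hence trajectories starting in it never leave $U$ and converge to $(x^*,v^*,0)$, giving local (indeed exponential) asymptotic stability. I expect the main obstacle to be the projection-locking step---pinning down the exact vertex-return condition for $\Pi_{\Delta}$ and confirming that strictness supplies the margin $>1$ that survives in a neighborhood. This is precisely where strict NE differ from completely mixed NE (for which no such locking occurs and the eigenvalue sign analysis of Proposition~\ref{Prop:Incomptwo} is instead the relevant phenomenon).
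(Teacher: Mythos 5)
Your proof is correct, but it takes a genuinely different route from the paper's. The paper sketches a composite Lyapunov argument: a weighted sum of $\norm{x_i-x_i^*}^2$, $\xi_i\tr B_i\xi_i$ (with $B_i$ solving a Lyapunov equation for $E_i$), and $\norm{N_i\tr P_i(x_{-i})-v_i}^2$, whose decrease is established via the convex-projection inequality $(\Pi_\Delta[x]-s)\tr(\Pi_\Delta[x]-x)\le 0$ together with strictness of the NE and $\phi_i$ vanishing at equilibrium. You instead exploit the saturation of the projection: your vertex-return criterion $\Pi_\Delta[z]=\mathbf{e}_\kappa \iff z_\kappa - z_j \ge 1$ for all $j\ne\kappa$ is the correct characterization (it is exactly the variational inequality $(z-\mathbf{e}_\kappa)\tr(\mathbf{e}_j-\mathbf{e}_\kappa)\le 0$ checked at the vertices), and strictness of the NE supplies the margin $1+\big(P_{i\kappa_i^*}(x^*_{-i})-P_{ij}(x^*_{-i})\big)>1$ that persists on an open neighborhood, so the closed loop becomes exactly the affine cascade $\dot x_i = x_i^*-x_i$, $\dot v_i = N_i\tr P_i(x_{-i})-v_i$, $\dot\xi_i = E_i\xi_i + F_i(\cdot)$, which is block lower-triangular with spectrum $\{-1\}\cup\bigcup_i\sigma(E_i)$ in the open left half-plane. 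What each approach buys: yours gives the stronger conclusion of local \emph{exponential} stability with an explicit description of the local flow, and it makes transparent exactly where strictness and internal stability of $E_i$ enter; it is, however, tied to the projection-based (target gradient play) structure, since the whole argument hinges on the projection locking onto a vertex. The paper's Lyapunov route is only a sketch but is the style of argument that would survive replacing the projection map by other best-response-like nonlinearities satisfying a comparable monotonicity inequality. Two minor points: your appeal to Assumption~\ref{Assumption:higherorder} for $\phi_i^*=0$ is unnecessary, since for the specific form \eqref{eq:higherorderGP} this follows by direct substitution of $\xi_i^*=0$ and $v_i^*=N_i\tr P_i(x^*_{-i})$; and the polymatrix hypothesis is what makes the locked dynamics exactly affine on $U$ rather than merely smooth with a Hurwitz linearization, though the conclusion would hold either way.
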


\begin{proof} (sketch)
Consider the Lyapunov function:
\begin{align*}
V(x,\xi,v)=\sum_i \frac{\alpha_i}{2} (x_i-x_i^*)\tr(x_i-x_i^*)+ \sum_i \gamma_i \xi_i\tr B_i \xi_i + \sum_i \frac{\beta_i}{2} (N_i\tr P_i(x_{-i})-v_i)\tr(N_i\tr P_i(x_{-i})-v_i),
\end{align*}
for appropriately chosen positive constants $\alpha_i$, $\beta_i$, and $\gamma_i$. The matrix $B_i$ is the unique positive definite solution of
$$B_iE_i+E_i\tr B_i= -\frac{1}{\gamma_i}I,$$

One can show that if
$$V(x,\xi,v)< \delta $$
for a sufficiently small $\delta$, then
$$\dot{V} < 0.$$
In the proof, we use the following property of convex projections \cite{Aubin1984}
$$(\Pi_\Delta[x]-s)\tr(\Pi_\Delta[x]-x) \leq 0 \quad \forall s \in \Delta,$$
the fact that $x^*$ is a strict NE, and that $\phi_i$ is zero at equilibrium.
\end{proof}

\section{Learning mixed-strategy NE in the bandit setting} \label{Sec:Bandit}\ \\[-15pt]

We now address learnability of mixed-strategy NE in a setup where players receive only instantaneously realized utilities. Our previous setup of deterministic and continuous-time dynamics assumed that each player has access to a payoff vector, $P_i(x_{-i})$, that represented the expected payoff of each discrete action.
We now consider a discrete-time stochastic setup, where each player, $i$, uses their mixed strategy to play an action $a_i(t)$. Each player then observes $r_i(a(t))$, where $a(t)\in \mathcal{A}$ is the action profile played at time $t$. Players know nothing about the structure of the game, including their own utility functions.

The main result is as follows:

{\textsc{Theorem (informal)}. \textit{For any finite game with an isolated completely mixed-strategy NE $x^*$ that satisfies Assumption~\ref{Assumption:Regularity}, there exists a higher-order bandit-based algorithm under which players' strategies converge to $x^*$ with a positive probability.}

We will first define a bandit version of our previously introduced higher-order replicator dynamics. The main tool is the ODE method of stochastic approximation (Proposition~\ref{Prop:SI}), in which asymptotic properties of discrete-time stochastic iterations can be analyzed through continuous-time deterministic differential equations. The main idea is to construct discrete-time iterations that, when converted to a differential equation, fall in the framework of Theorem~\ref{Thm:RDStabilization}. The same approach was utilized to study learnability of mixed-strategy NE in polymatrix games using higher-order gradient play dynamics \cite{toonsi2024bandit}. 

\subsection{Higher-order bandit-based algorithm and associated mean dynamics}

We make the following simplifying assumption throughout Section~\ref{Sec:Bandit}.

\begin{Assumption}\label{Assumption:PositiveUtilities}
Utility functions of all players are strictly positive, i.e.,
$$r_i(a)>0 \quad \quad \forall  a \in \mathcal{A} \quad \text{and} \quad i \in \mathcal{I}.$$
\end{Assumption}
For any finite game, a constant bias term can be added to each player's utility to satisfy the above assumption.
Adding such a constant does not change the strategic nature of the game.

We first present a bandit-based learning algorithm related to standard-order replicator dynamics. For the $i^\mathrm{th}$ player, these are:
\begin{subequations}\label{eq:RDBanditStandarda}
\begin{align}
x_i(t+1) &= x_i(t)+\frac{\delta_i r_i(a(t))}{t+1}[\mathbf{e}_{a_i(t)}-x_i(t)],\\
\sigma_i(t) &=\Pi_{\Delta_{\epsilon}}[x_i(t)]
\end{align}
\end{subequations}
where the randomized action $a_i(t)$ is chosen according to
\begin{equation}\label{eq:RDBanditStandardb}
\prob{a_i(t) = a_i' \in \mathcal{A}_i} = \sigma_{i(a_i')}(t).\\
\end{equation}
In these equations, $\sigma_{i(a_i')}(t)$ denotes the $(a_i')^\mathrm{th}$ component of $\sigma_i(t)$.
The vector $\mathbf{e}_{a_i(t)}$ is the unit vector, as defined in \eqref{eq:DefinitionOfe}, which equals 1 at the $a_i^\mathrm{th}(t)$ position. The variable $\delta_i>0$ is a step size that can be chosen small enough so that iterations remain in the simplex, i.e.,
$$x_i(0)\in \Delta(k_i) \implies  x_i(t)\in \Delta(k_i) \quad \forall t.$$
The strategy, $\sigma_i(t)$, is the projection of $x_i(t)$ onto the set
$$\Delta_{\epsilon}(k_i) = \theset{v \in \mathbb{R}^{k_i} \st v_j\ge \epsilon, j = 1, ..., k_i, \und \sum_{j=1}^{k_i} v_j = 1}.$$
With this change, the vector $x_i$ can be interpreted as the propensities of player $i$ to play each action, with the projection to $\Delta_\epsilon$ encouraging exploration, since any action will be played with probability at least $\epsilon$. When $x_i(t)$ lies deeper in the interior of the simplex, then the projection is not binding and $\sigma_i(t) = x_i(t)$.

The learning algorithm in \eqref{eq:RDBanditStandarda}--\eqref{eq:RDBanditStandardb} in the absence of projection to $\Delta_\epsilon$ has been widely studied in the literature, e.g., \cite{narendra2012learning,Verbeeck2002learningPareto}. Its connection to replicator dynamics has been established in various works such as \cite{BORGERS19971RDreinfrocement,HOPKINS2005110RDreinfrocemen2}.
We now derive the connection between \eqref{eq:RDBanditStandarda}--\eqref{eq:RDBanditStandardb} and a modified version of replicator dynamics. First, note that
$$\expected{r_i(a(t)) \mathbf{e}_{a_i(t)}\given x(t)}= \diag{P_i(\sigma_{-i}(t))} \sigma_i(t)$$
and
$$ \expected{r_i(a(t))  x_i(t) \given x(t) }=(P_i(\sigma_{-i}(t))\tr \sigma_i(t)) x_i(t).$$
It follows that the mean dynamics of \eqref{eq:RDBanditStandarda}--\eqref{eq:RDBanditStandardb} (see Appendix~\ref{sec:ODEMethod}) are
\begin{subequations}\label{eq:ModifiedRD}
\begin{align}
\dot{x}_i&=\diag{P_i(\sigma_{-i})}\sigma_{i}-\diag{(\sigma_i\tr P_i(\sigma_{-i}))\one} x_i\\
\sigma_i&=\Pi_{\Delta_{\epsilon}}[x_i].
\end{align}
\end{subequations}
When $\epsilon=0$, we retrieve replicator dynamics. Furthermore, these equations coincide with standard replicator dynamics around any isolated completely mixed-strategy NE (for sufficiently small $\epsilon$).

We now construct a higher-order version of \eqref{eq:RDBanditStandarda}\footnote{See \cite{Chasparis2012DD} for a related approach, where the interest was in equilibrium selection of pure strategy NE, rather than stabilization of mixed-strategy NE.}. The dynamics of the $i^\mathrm{th}$ player are:
\begin{subequations}\label{eq:RDBanditHigher}
\begin{align}
\label{eq:iteratesofHORD1} x_i(t+1)&=x_i(t)+\frac{\delta_i }{t+1}\left( r_i(a(t)) \left(\mathbf{e}_{a_i(t)}-x_i(t)\right)+ \diag {\Tilde{\phi}_i(t)-(x_i(t)\tr \Tilde{\phi}_i(t)) \one} x_i(t)\right)\\
v_i(t+1)&=v_i(t)+\frac{\delta_i}{t+1} (N_i\tr\Tilde{P}_i(t)-v_i(t))\\
\xi_i(t+1) & = \xi_i(t)+\frac{\delta_i}{t+1}(E_i \xi_i(t) + F_i(N_i\tr \Tilde{P}_i(t)-v_i(t))),\\
\sigma_i(t) &=\Pi_{\Delta_{\epsilon}}[x_i(t)]
\end{align}
\end{subequations}
where
\begin{align*}
\Tilde{P}_i(t)=\frac{r_i(a(t))}{\sigma_{i(a_i(t))}(t)} \mathbf{e}_{a_i(t)},\quad \Tilde{\phi}_{i}(t) =N_i( G_{i} \xi_{i}(t)+H_{i} (N_i\tr \tilde{P}_i(t)-v_i(t))).
\end{align*}
As before, the randomized action, $a_i(t)$, is chosen according to \eqref{eq:RDBanditStandardb}.

We are now ready to build a connection between \eqref{eq:RDBanditHigher} and our higher-order version of replicator dynamics. Straightforward calculations show that
\begin{align*}
\expected{ N_i\tr \Tilde{P}_i(t) \given x(t),\xi(t),v(t)}&= N_i\tr P_i(\sigma_{-i}(t)),\\
\expected{\diag{\Tilde{\phi}_i(t)}x_i|x(t),\xi(t),v(t)}&= \diag{\phi_i(t)}x_i ,\\
\expected{\Tilde{\phi}_i(t)\tr x_i|x(t),\xi(t),v(t)}&=\phi_i(t)\tr x_i,
\end{align*}
where
$$\phi_i(t)=N_i( G_{i} \xi_{i}(t)+H_{i} (N_i\tr P_i(\sigma_{-i}(t))-v_i(t))).$$
Accordingly, the relevant mean dynamics of \eqref{eq:RDBanditHigher} are
\begin{subequations}\label{eq:ModifiedHigherorderRD}
\begin{align}
\dot{x}_i&=\diag{ P_i(\sigma_{-i})} \sigma_i +\diag{\phi_i} x_i -\diag{(\sigma_i\tr P_i(\sigma_{-i})+x_i\tr\phi_{i})\one}x_i\\
\dot{v}_i&= N_i\tr P_i(\sigma_{-i}) -v_i\\
\dot{\xi}_i&=E_i\xi_i+F_i(N_i\tr  P_i(\sigma_{-i}) -v_i),\\
\sigma_i &=\Pi_{\Delta_\epsilon}\left[x_i(t)  \right]
\end{align}
\end{subequations}
where
$$\phi_i=N_i( G_{i} \xi_{i}+H_{i} (N_i\tr P_i(\sigma_{-i})-v_i)),$$
and the $x_{-i}$ and $\sigma_{-i}$ also evolve according to \eqref{eq:ModifiedHigherorderRD}.

The connection to higher-order replicator dynamics in (\ref{eq:higherorderRD}) is clear. In particular, when the projection to $\Delta_\epsilon$ is not binding, \eqref{eq:ModifiedHigherorderRD} coincides with \eqref{eq:higherorderRD}.

\subsection{Case I: Internally stable higher-order components}

We are now in a position to state a main result regarding the bandit case and utilize the methods of Appendix~\ref{sec:ODEMethod}.

\begin{Theorem}\label{Thm:BanditCaseI}
In the framework of Theorem~\ref{Thm:RDStabilization}, let the equilibrium $(x^*,0,0)$ be locally exponentially stable under higher-order replicator dynamics \eqref{eq:closedloopHORD}--\eqref{eq:controller}, where $E_i$ is a stability matrix for each $i=1,...,n$. Then under bandit-based replicator dynamics \eqref{eq:RDBanditHigher} with sufficiently small $\epsilon > 0$, sufficiently small $\delta_i>0$ for each $i=1,...,n$, and randomized actions selected according to \eqref{eq:RDBanditStandardb}, the iterations $(x(t),v(t),\xi(t))$ converge to $(x^*,0,0)$ with positive probability.
\end{Theorem}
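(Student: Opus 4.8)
The plan is to treat the discrete recursion \eqref{eq:RDBanditHigher} as a stochastic approximation whose mean field is \eqref{eq:ModifiedHigherorderRD}, and then to transfer the local exponential stability supplied by Theorem~\ref{Thm:RDStabilization} from the limiting ODE to the iterates via the ODE method (Proposition~\ref{Prop:SI}). First I would collect the iterates into a single state $\Theta(t)=(x(t),v(t),\xi(t))$ and write \eqref{eq:RDBanditHigher} in the canonical form
$$\Theta(t+1)=\Theta(t)+\frac{1}{t+1}\Big(\bar h(\Theta(t))+M(t+1)\Big),$$
where, relative to the natural filtration $\mathcal{F}_t=\sigma(\Theta(0),\dots,\Theta(t))$, the drift $\bar h$ is the conditional mean of the increment and $M(t+1)$ is the residual. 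The three conditional-expectation identities displayed just above \eqref{eq:ModifiedHigherorderRD} show directly that $\bar h$ equals the right-hand side of the mean dynamics \eqref{eq:ModifiedHigherorderRD} and that $M(t+1)$ is, by construction, a martingale-difference sequence.

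Next I would verify the standing hypotheses of the ODE method. The step sizes $\{\delta_i/(t+1)\}$ satisfy the Robbins--Monro conditions $\sum_t 1/(t+1)=\infty$ and $\sum_t 1/(t+1)^2<\infty$. Uniform boundedness of the noise and of $\bar h$ comes from the enforced exploration: since $\sigma_i(t)=\Pi_{\Delta_\epsilon}[x_i(t)]$ has every component at least $\epsilon$, the importance-weighted estimate $\tilde P_i(t)=\big(r_i(a(t))/\sigma_{i(a_i(t))}(t)\big)\mathbf{e}_{a_i(t)}$ is uniformly bounded (using finiteness of the game and Assumption~\ref{Assumption:PositiveUtilities}), so that $\expect{\norm{M(t+1)}^2\given \mathcal{F}_t}\le C<\infty$. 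The strategy block $x(t)$ stays in the compact simplex by the step-size choice, and here the Case~I hypothesis enters decisively: because each $E_i$ is a stability matrix, the recursions for $v_i$ and $\xi_i$ are stable linear filters driven by bounded inputs, so the \emph{unbounded} auxiliary states $(v(t),\xi(t))$ remain almost surely bounded. This a~priori boundedness is precisely what internal stability buys and what motivates separating Case~I from Case~II.

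With these hypotheses in hand, the ODE method identifies the almost-sure limit set of $\Theta(t)$ with an internally chain-transitive set of the flow of $\dot\Theta=\bar h(\Theta)$. I would then localize near the target. Since $x^*$ is completely mixed, for all sufficiently small $\epsilon$ the point $x^*$ lies in the interior of $\Delta_\epsilon$, the projection is non-binding on a neighborhood of $(x^*,0,0)$, and there \eqref{eq:ModifiedHigherorderRD} coincides with the higher-order replicator dynamics \eqref{eq:higherorderRD}. Hence $(x^*,0,0)$ is a locally exponentially stable equilibrium of the mean field --- a linearly stable attractor with a nontrivial basin --- and the positive-probability convergence statement of Proposition~\ref{Prop:SI} for such an attractor yields $\prob{(x(t),v(t),\xi(t))\to(x^*,0,0)}>0$.

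The step I expect to be the main obstacle is this last one: upgrading ``the mean flow has a local attractor'' to ``the iterates converge to it with positive probability.'' Because the stability is only local, one must show the iterates enter the basin with positive probability and are then trapped despite the martingale noise. This is where the full-support randomization ($\sigma_i\ge\epsilon$) does the real work: it makes the interior equilibrium attainable, so that a positive-probability entrance event into a small trapping neighborhood combines with the local asymptotic stability and the square-summability of the step sizes to confine the trajectory. Verifying these attainability and trapping conditions of Proposition~\ref{Prop:SI}, rather than the routine martingale and boundedness estimates, is the crux of the argument.
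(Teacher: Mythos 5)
Your proposal is correct and follows essentially the same route as the paper: invoke the ODE method (Proposition~\ref{Prop:SI}), identify the mean dynamics of \eqref{eq:RDBanditHigher} with the stabilized higher-order replicator dynamics where the projection to $\Delta_\epsilon$ is non-binding, use the projection to bound $\tilde P_i$ and guarantee attainability, and use the stability of each $E_i$ (the paper does this via a Lyapunov function $\xi_i\tr B_i\xi_i$) together with a small enough $\delta_i$ to keep the iterates in a compact set. The only difference is one of emphasis --- the paper treats boundedness of the iterates as the main technical step and dispatches attainability in one line, whereas you flag attainability/trapping as the crux --- but both items are present and correctly handled in your argument.
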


The remainder of this subsection is devoted to the proof of Theorem~\ref{Thm:BanditCaseI}.

The desired result will follow from Proposition~\ref{Prop:SI}. We already have established that the relevant mean dynamics of \eqref{eq:RDBanditHigher} corresponding to \eqref{eq:Mean} are \eqref{eq:closedloopHORD}--\eqref{eq:controller} for equilibria that are in the interior of $\Delta_\epsilon$. Conditions \#1, \#3, and \#4 are straightforward. Attainability (Condition \#5) is assured by the projection to $\Delta_\epsilon$. It remains to verify that the iterates are bounded (Condition \#2). It is for this reason that we have separated the case where the $E_i$ are stability matrices.

\begin{Claim}
For initial conditions $x_i(0)\in \Delta(k_i)$ and $(v_i(0),\xi_i(0))$, there exists $\delta_i$ sufficiently small so that the iterates evolve in a compact set and $x_i(t)$ remains in the simplex.
\end{Claim}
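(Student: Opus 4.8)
The plan is to exploit the fact that, thanks to the projection onto $\Delta_\epsilon$, the inflated payoff estimate $\Tilde{P}_i(t)=\frac{r_i(a(t))}{\sigma_{i(a_i(t))}(t)}\mathbf{e}_{a_i(t)}$ is bounded uniformly over every realization of the actions and every $t$: since $\sigma_i(t)\in\Delta_\epsilon(k_i)$ forces $\sigma_{i(a_i(t))}(t)\ge\epsilon$, and Assumption~\ref{Assumption:PositiveUtilities} together with finiteness of $\mathcal{A}$ gives $0<r_i(a(t))\le \bar r_i:=\max_{a}r_i(a)$, we have $\norm{\Tilde{P}_i(t)}\le \bar r_i/\epsilon$ deterministically. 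Because $N_i$ has orthonormal columns ($N_i\tr N_i=I$), the driving term $N_i\tr \Tilde{P}_i(t)$ is likewise bounded, and crucially this bound does not involve $x_i$. This decoupling is the key: it lets me bound the auxiliary states $(v_i,\xi_i)$ first, independently of the strategy iterate, and only afterwards return to $x_i$, thereby verifying the boundedness requirement (Condition~\#2) of the ODE method in Proposition~\ref{Prop:SI}. Since every bound is worst-case over the realized actions, the conclusion holds on every sample path, not merely almost surely.

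First I would bound $v_i$. Writing its update as the convex combination $v_i(t+1)=(1-\tfrac{\delta_i}{t+1})v_i(t)+\tfrac{\delta_i}{t+1}N_i\tr\Tilde{P}_i(t)$, valid whenever $\delta_i\le 1$, a contraction-toward-the-input argument gives $\norm{v_i(t)}\le\max(\norm{v_i(0)},\,\bar r_i/\epsilon)$ for all $t$. The heart of the argument is bounding $\xi_i$, and this is where the Case~I hypothesis that $E_i$ is a stability matrix enters. I would take the Lyapunov matrix $B_i\succ0$ solving $B_iE_i+E_i\tr B_i=-I$ and track $V(\xi_i(t))=\xi_i(t)\tr B_i\xi_i(t)$ along the recursion $\xi_i(t+1)=(I+\tfrac{\delta_i}{t+1}E_i)\xi_i(t)+\tfrac{\delta_i}{t+1}F_i d_i(t)$, where $d_i(t)=N_i\tr\Tilde{P}_i(t)-v_i(t)$ is bounded by the previous step. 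Expanding $V(\xi_i(t+1))$ with step size $\eta=\delta_i/(t+1)\le\delta_i$ produces a decrease term $-\eta\norm{\xi_i(t)}^2$ (from the Lyapunov identity), a cross term of order $\eta\norm{\xi_i(t)}$, and remainders of order $\eta^2\norm{\xi_i(t)}^2$ and $\eta^2$. Choosing $\delta_i$ small enough that the $\eta^2\norm{\xi_i(t)}^2$ remainder is dominated by half the decrease term, the net change in $V$ is strictly negative once $\norm{\xi_i(t)}$ exceeds a threshold independent of $\delta_i$; hence $\xi_i(t)$ never escapes a fixed ball determined by $\xi_i(0)$, $v_i(0)$, $\epsilon$, and the matrices $(E_i,F_i,B_i,N_i)$.

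With $(v_i,\xi_i)$ confined to a compact set, the higher-order modification $\Tilde{\phi}_i(t)=N_i(G_i\xi_i(t)+H_i(N_i\tr\Tilde{P}_i(t)-v_i(t)))$ obeys a uniform bound $\norm{\Tilde{\phi}_i(t)}\le\bar\phi_i$, again independent of $\delta_i$ below the threshold just fixed. It then remains to confine $x_i(t)$ to $\Delta(k_i)$. Invariance of the affine constraint is exact: summing the $x_i$-update, the term $r_i(a(t))(\mathbf{e}_{a_i(t)}-x_i(t))$ contributes $r_i(a(t))(1-1)=0$ and the term $\diag{\Tilde{\phi}_i(t)-(x_i(t)\tr\Tilde{\phi}_i(t))\one}x_i(t)$ contributes $\Tilde{\phi}_i(t)\tr x_i(t)-x_i(t)\tr\Tilde{\phi}_i(t)=0$, so $\one\tr x_i(t+1)=\one\tr x_i(t)$, preserving $\one\tr x_i=1$ by induction. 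For nonnegativity, rewrite the $j$-th component as
\[
x_{ij}(t+1)=x_{ij}(t)\Big[1+\tfrac{\delta_i}{t+1}\big(-r_i(a(t))+\Tilde{\phi}_{ij}(t)-x_i(t)\tr\Tilde{\phi}_i(t)\big)\Big]+\tfrac{\delta_i}{t+1}r_i(a(t))[\mathbf{e}_{a_i(t)}]_j .
\]
The last term is nonnegative, and the bracket is nonnegative provided $\delta_i(\bar r_i+2\bar\phi_i)\le1$. Choosing $\delta_i$ below the minimum of this threshold and the one from the $\xi_i$ step, an induction on $t$ keeps each $x_{ij}(t)\ge0$, so $x_i(t)$ stays in $\Delta(k_i)$ and the full iterate lies in a fixed compact set. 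I expect the $\xi_i$-boundedness step to be the main obstacle and the reason Case~I is isolated: without $E_i$ being a stability matrix the Lyapunov decrease is unavailable, the driving term can accumulate, and $\Tilde{\phi}_i$ — hence the admissible $\delta_i$ and the simplex invariance of $x_i$ — can no longer be controlled.
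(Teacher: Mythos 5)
Your proposal is correct and follows essentially the same route as the paper's proof: bound $\Tilde{P}_i$ uniformly via the projection onto $\Delta_\epsilon$ and positivity of utilities, bound $v_i$ and then $\xi_i$ using the Lyapunov function $\xi_i\tr B_i\xi_i$ with $B_iE_i+E_i\tr B_i=-I$ (this is where the Case~I stability of $E_i$ enters), and finally shrink $\delta_i$ so that the now-bounded $\Tilde{\phi}_i$ cannot push any component of $x_i$ negative while the sum constraint is preserved exactly. Your write-up is somewhat more explicit than the paper's (the convex-combination bound for $v_i$, the explicit step-size thresholds, and the unified nonnegativity argument that simply discards the nonnegative $\mathbf{e}_{a_i(t)}$ term rather than splitting into the cases $k=a_i(t)$ and $k\neq a_i(t)$), but these are refinements of the same argument.
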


\begin{proof}
First, note that $\Tilde{P}_i(t)$ is uniformly bounded over time. Suppose $\delta_i=1$.
It is direct to show that  $v_i(t)$ and consequently $F_i(N_i\tr \Tilde{P}_i(t)-v_i(t)))$ are uniformly bounded over time.
One can then consider a bounded set on which $v_i$ evolves and take its closure to show that $v_i$ evolves on a compact set.

Next, use the following Lyapunov function to show boundedness of $\xi_i$
$$ V_i= \xi_i\tr B_i \xi_i,$$
where each $B_i$ is the unique positive definite solution to
$$B_iE_i+E_i\tr B_i=-I.$$
Therefore, $\xi_i$ also evolves on a compact set. The bounds that hold on $\xi_i$ and $v_i$ when $\delta_i=1$ also hold for any $0<\delta_i<1$.

Next, we show that  $\delta_i$ can be chosen sufficiently small so that $x_i$ stays on the simplex. It is direct to see that $\one\tr x_i(t)=1$ for all $t$. It remains to show that if $x_i(t)>0$ entry-wise, then $x_i(t+1)> 0$ entry-wise. Let $x_{ik}$ denote the $k^{\text{th}}$ component of $x_i$.

For $k \neq a_i(t)$, we have
$$x_{ik}(t+1) = x_{ik}(t)+\frac{\delta_i}{t+1}\left(r_i(a(t)) (-x_{ik}(t))+\Tilde{\phi}_{ik}(t) x_{ik}(t) -(\Tilde{\phi}_i(t)\tr x_i(t)) x_{ik}(t)\right),$$
so we need
$$ x_{ik}(t)+\frac{\delta_i}{t+1}\left(r_i(a(t)) (-x_{ik}(t))+\Tilde{\phi}_{ik}(t) x_{ik}(t) -(\Tilde{\phi}_i(t)\tr x_i(t)) x_{ik}(t)\right)> 0.$$
Neglect $\frac{1}{t+1}$ and divide over $x_{ik}$ to get
$$ \delta_i\left(-r_i(a(t)) +\Tilde{\phi}_{ik}(t) -(\Tilde{\phi}_i(t)\tr x_i(t))\right)> -1.$$
If
$$\left(-r_i(a(t)) +\Tilde{\phi}_{ik}(t) -(\Tilde{\phi}_i(t)\tr x_i(t))\right)<0$$
we need
$$ \delta_i<\frac{1}{\left(r_i(a(t)) -\Tilde{\phi}_{ik}(t) +(\Tilde{\phi}_i(t)\tr x_i(t))\right)}.$$
The function $\Tilde{\phi}_i(t)$ is bounded given the bounds on $\tilde{P}_i(t)$, $v_i(t)$, and $\xi_i(t)$. The term $\tilde{\phi}_i\tr x_i$ is an average of elements of $\tilde{\phi}_i$, so it is also bounded.

For $k=a_i(t)$, to have
$$x_{ik}(t+1)>0$$
we need
$$ \delta_i\left(\underbrace{r_i(a(t)) (\frac{1}{x_{ik}}-1)}_{\text{positive}} +\Tilde{\phi}_{ik}(t) -(\Tilde{\phi}_i(t)\tr x_i(t))\right)>-1.$$
Therefore, it is sufficient to have
$$ \delta_i\left(\Tilde{\phi}_{ik}(t) -(\Tilde{\phi}_i(t)\tr x_i(t))\right)>-1.$$
When
$$\left(\Tilde{\phi}_{ik}(t) -(\Tilde{\phi}_i(t)\tr x_i(t))\right)<0$$
we need
$$ \delta_i<\frac{1}{-\left(\Tilde{\phi}_{ik}(t) -(\Tilde{\phi}_i(t)\tr x_i(t))\right)}.$$
Hence, we can derive a bound on $\delta_i$ to keep $x_i$ on the simplex.
\end{proof}

\subsection{Case II: Internally unstable higher-order components}

We saw in Section~\ref{sec:notstronglystabilizable} that it may be the case where it is necessary that at least one of the $E_i$ in the higher-order dynamics is not a stability matrix. In such a situation, the boundedness of the iterates in  \eqref{eq:RDBanditHigher} is not guaranteed.

To address this situation, we introduce the following modification of \eqref{eq:RDBanditHigher}:
\begin{subequations}\label{eq:InternallyUnstableIterates}
\begin{align}
\label{eq:iteratesofHORD1} x_i(t+1)&=x_i(t)+\frac{\delta_i }{t+1}\left( r_i(a(t)) \left(\mathbf{e}_{a_i(t)}-x_i(t)\right)+ \diag {\Tilde{\phi}_i(t)-(x_i(t)\tr \Tilde{\phi}_i(t)) \one} x_i(t)\right)\\
v_i(t+1)&=v_i(t)+\frac{\delta_i}{t+1} (N_i\tr\Tilde{P}_i(t)-v_i(t))\\
\xi_i(t+1) & = \xi_i(t)+\frac{\delta_i}{t+1}(E_i \xi_i(t) + F_i(N_i\tr \Tilde{P}_i(t)-v_i(t))+\psi_i(t)),\\
\sigma_i(t) &=\Pi_{\Delta_{\epsilon}}[x_i(t)]
\end{align}
\end{subequations}
where
$$ \Tilde{P}_i(t)=\frac{r_i(a(t))}{\sigma_{i(a_i(t))}(t)} \mathbf{e}_{a_i(t)},\quad \Tilde{\phi}_{i}(t) =N_i( G_{i} \xi_{i}(t)+H_{i} (N_i\tr \Tilde{P}_i(t)-v_i(t))).
$$

These dynamics have a newly introduced term, $\psi_i(t)$, defined as
$$ \psi_i(t)=-m_i(\xi_i(t)) \cdot \xi_i(t) +\Omega_i(t),$$
where
$$m_i(\xi_i(t))=\min(\beta_i,\norm{\xi_i(t)}^2).$$
Here, $\beta_i > 0$  is a sufficiently large constant, and $\Omega_i(t)$ is a
random vector with each component being a uniformly distributed zero-mean i.i.d. random variable with suitable support. If $E_i$ is a stability matrix, the extra term is unnecessary and one can set $\psi_i(t) \equiv 0$.
The effect of $m_i(\xi_i(t))$ is that its quadratic growth will enforce that the $\xi_i(t)$ iterates remain bounded.

We can now state a modified main result regarding the bandit case to address this situation.

\begin{Theorem}\label{Thm:BanditCaseII}
In the framework of Theorem~\ref{Thm:RDStabilization}, let the equilibrium $(x^*,0,0)$ be locally exponentially stable under higher-order replicator dynamics \eqref{eq:closedloopHORD}--\eqref{eq:controller}. Then under modified bandit-based replicator dynamics \eqref{eq:InternallyUnstableIterates} with sufficiently small $\epsilon > 0$, sufficiently small $\delta_i>0$ for each $i=1,...,n$, and randomized actions selected according to \eqref{eq:RDBanditStandardb}, the iterations $(x(t),v(t),\xi(t))$ converge to $(x^*,0,0)$ with positive probability.
\end{Theorem}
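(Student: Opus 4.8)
The plan is to follow the proof of Theorem~\ref{Thm:BanditCaseI} essentially verbatim, reducing the claim to a single application of the ODE method, Proposition~\ref{Prop:SI}: one verifies that the equilibrium $(x^*,0,0)$ is locally exponentially stable for the \emph{mean} dynamics associated with \eqref{eq:InternallyUnstableIterates}, checks Conditions \#1--\#5, and then invokes Proposition~\ref{Prop:SI} to obtain convergence with positive probability. The only genuinely new content is the boundedness requirement (Condition \#2), which fails for the unmodified iterates precisely when some $E_i$ is not a stability matrix; re-establishing it is the entire purpose of the added term $\psi_i$, and I expect this to be the main obstacle.

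First I would identify the mean dynamics of \eqref{eq:InternallyUnstableIterates}. Since $\Omega_i(t)$ is zero-mean and i.i.d.\ and the conditional expectations of $\tilde P_i$ and $\tilde\phi_i$ are exactly those computed before \eqref{eq:ModifiedHigherorderRD}, the mean dynamics are \eqref{eq:ModifiedHigherorderRD} with the single extra drift term $-m_i(\xi_i)\xi_i$ appended to the $\dot\xi_i$ equation, where $m_i(\xi_i)=\min(\beta_i,\norm{\xi_i}^2)$. I would then argue that $(x^*,0,0)$ remains locally exponentially stable for these modified mean dynamics. The key observation is that $\xi_i^*=0$ and that, in a neighborhood of $\xi_i=0$, one has $m_i(\xi_i)=\norm{\xi_i}^2$, so the added drift $-\norm{\xi_i}^2\xi_i$ is cubic in $\xi_i$ and contributes zero to the Jacobian at the equilibrium. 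For sufficiently small $\epsilon$ the projection onto $\Delta_\epsilon$ is inactive at the completely mixed $x^*$, so the modified mean dynamics coincide locally with \eqref{eq:closedloopHORD}--\eqref{eq:controller}, whose linearization at $(x^*,0,0)$ is a stability matrix by hypothesis. Hence local exponential stability is inherited unchanged.

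The boundedness step is where the real work lies. I would fix each $\beta_i$ large enough that $E_i-\beta_i I$ is a stability matrix, which is possible since the spectrum of $E_i$ is shifted left by $\beta_i$, eventually making it Hurwitz. The state then splits into two regimes. On $\norm{\xi_i}^2<\beta_i$ the iterate already lies in a bounded ball of radius $\sqrt{\beta_i}$, while on $\norm{\xi_i}^2\ge\beta_i$ the saturated damping turns the $\xi_i$-drift into $(E_i-\beta_i I)\xi_i$ plus the uniformly bounded disturbance $F_i(N_i\tr\tilde P_i-v_i)$. On this outer regime a quadratic Lyapunov function $V_i=\xi_i\tr B_i\xi_i$, with $B_i$ the positive definite solution of $B_i(E_i-\beta_i I)+(E_i-\beta_i I)\tr B_i=-I$, satisfies $\dot V_i<0$ for $\norm{\xi_i}$ large, giving an ultimate bound. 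Patching the two regimes confines $\xi_i$ to a compact set; combined with the boundedness of $v_i$ and the step-size argument of Theorem~\ref{Thm:BanditCaseI} keeping $x_i$ in the simplex for small $\delta_i$, this places the iterates in a compact set. I would take care that the bounded-support noise $\Omega_i$ does not destroy this ultimate boundedness (it only enlarges the confining set by a bounded amount) and that the discrete iterates, for small $\delta_i$, inherit the boundedness of the continuous Lyapunov estimate.

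Finally, Conditions \#1, \#3, and \#4 of Proposition~\ref{Prop:SI} (regularity, the martingale-difference/noise structure, and the step-size schedule) carry over from the Case~I argument, the noise now including the zero-mean $\Omega_i$ term. Attainability (Condition \#5) is again supplied by the projection onto $\Delta_\epsilon$ in the strategy coordinate, with $\Omega_i$ providing the excitation needed to reach a neighborhood of the equilibrium in the $\xi$-coordinate with positive probability, so that the process is not trapped away from the basin of attraction. With local exponential stability and boundedness established, Proposition~\ref{Prop:SI} yields convergence of $(x(t),v(t),\xi(t))$ to $(x^*,0,0)$ with positive probability, completing the proof.
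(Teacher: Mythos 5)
Your proposal is correct and follows essentially the same route as the paper: identify the mean dynamics of \eqref{eq:InternallyUnstableIterates}, observe that the saturated cubic damping $-m_i(\xi_i)\xi_i$ vanishes to second order at $\xi_i=0$ and hence leaves the linearization (and local exponential stability) untouched, establish boundedness of $\xi_i$ via a quadratic Lyapunov function in the regime $\norm{\xi_i}^2\ge\beta_i$ where the damping dominates, and use the bounded zero-mean noise $\Omega_i$ to secure attainability of $\xi_i=0$ before invoking Proposition~\ref{Prop:SI}. The only cosmetic difference is that the paper works with $V_i=\xi_i\tr\xi_i$ under the condition that $E_i+E_i\tr-2\beta_i I$ is a stability matrix, whereas you use a $B_i$-weighted Lyapunov function with $E_i-\beta_i I$ Hurwitz; both yield the same ultimate bound.
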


The remainder of this section is devoted to the proof of Theorem~\ref{Thm:BanditCaseII}.

Similarly to the previous case, the relevant mean dynamics of \eqref{eq:InternallyUnstableIterates} are
\begin{subequations}\label{eq:InternallyUnstableDynamics}
\begin{align}
\dot{x}_i&=\diag{ P_i(\sigma_{-i})} \sigma_i +\diag{\phi_i} x_i -\diag{(\sigma_i\tr P_i(\sigma_{-i})+x_i\tr\phi_{i})\one}x_i\\
\dot{v}_i&= N_i\tr P_i(\sigma_{-i}) -v_i\\
\dot{\xi}_i&=E_i\xi_i-m_i(\xi_i) \xi_i+F_i(N_i\tr  P_i(\sigma_{-i}) -v_i),\\
\sigma_i &=\Pi_{\Delta_\epsilon}\left[x_i(t)  \right]
\end{align}
\end{subequations}
where
$$\phi_i=N_i( G_{i} \xi_{i}+H_{i} (N_i\tr P_i(\sigma_{-i})-v_i)),$$
For any $E_i$ that is a stability matrix, we set $m_i(\xi_i)=0$. Otherwise, because of the quadratic structure, this term does not impact linearization analysis at $\xi= 0$.

It remains to verify attainability (Condition \#5) and the boundedness of iterates (Condition \#2) for Proposition~\ref{Prop:SI}.

\begin{Claim}\label{Prop:AttainabilityAndBoundedness}
For the iterates in \eqref{eq:InternallyUnstableIterates}, suppose $E_i$ is not a stability matrix.
Let $\beta_i$ be such that
$$E_i+E_i\tr-2\beta_iI$$
is a stability matrix, and $\eta_i$ be such that $$\norm{ F_i(N_i\tr \Tilde{P}_i(t)-v_i(t))}<\eta_i \quad \forall t.$$
Then, there exists $n_{\mathrm{max}}$ that assures $\xi_i$ is bounded and that $0$ is attainable by $\xi_i$, where
$$\norm{\Omega_i(t)}<n_{\mathrm{max}}\quad \forall t.$$
\end{Claim}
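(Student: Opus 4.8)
The plan is to verify the two asserted properties in turn --- boundedness of the $\xi_i$ iterates and attainability of $0$ --- and then to check that a single $n_{\mathrm{max}}$ can serve both. Throughout write $b_i(t)=F_i(N_i\tr\Tilde{P}_i(t)-v_i(t))$, so that $\norm{b_i(t)}<\eta_i$ by hypothesis, and recall that in the saturation region $\norm{\xi_i}^2\ge\beta_i$ one has $m_i(\xi_i)=\beta_i$ while near the origin $m_i(\xi_i)=\norm{\xi_i}^2\to 0$.

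For boundedness I would use the quadratic Lyapunov function $V_i(\xi_i)=\xi_i\tr\xi_i=\norm{\xi_i}^2$, mirroring the argument of Case~I but with $B_i=I$, the saturated restoring term supplying the stability that $E_i$ lacks. Setting $S_i:=E_i+E_i\tr-2\beta_i I$, in the region $\norm{\xi_i}^2\ge\beta_i$ the conditional one-step drift of $V_i$ is proportional to $\xi_i\tr S_i\,\xi_i+2\xi_i\tr(b_i+\Omega_i)$. By hypothesis $S_i$ is a (symmetric) stability matrix, so with $\mu_i:=-\lambda_{\max}(S_i)>0$ this drift is bounded by $-\mu_i\norm{\xi_i}^2+2(\eta_i+n_{\mathrm{max}})\norm{\xi_i}$, which is strictly negative once $\norm{\xi_i}>2(\eta_i+n_{\mathrm{max}})/\mu_i$. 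Equivalently, the scaled limit of the driving field is $\dot\xi_i=(E_i-\beta_i I)\xi_i$, and $S_i\prec 0$ makes the symmetric part of $E_i-\beta_i I$ negative definite, so $E_i-\beta_i I$ is Hurwitz and this limiting ODE has a globally asymptotically stable origin. Either phrasing shows a ball of radius $R_i$ of order $(\eta_i+n_{\mathrm{max}})/\mu_i$ is forward invariant; combined with the uniformly bounded per-step noise and vanishing step sizes $\delta_i/(t+1)$, the standard stability criterion for stochastic approximation (Borkar--Meyn, \cite{borkar2008stochastic}) confines $\xi_i(t)$ to a compact set $K_i$. The essential point is that boundedness holds for \emph{every} finite $n_{\mathrm{max}}$; only the radius $R_i$ grows.

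For attainability I would exploit the injected noise $\Omega_i(t)$. Near $\xi_i=0$ the saturation vanishes, so the mean dynamics \eqref{eq:InternallyUnstableDynamics} reduce locally to \eqref{eq:closedloopHORD}--\eqref{eq:controller}, for which $(x^*,0,0)$ is locally exponentially stable by Theorem~\ref{Thm:RDStabilization}; hence a neighborhood $U_0$ of $(x^*,0,0)$ lies in the basin of attraction of the mean ODE, and it suffices to show the iterates enter $U_0$ with positive probability. The $x_i$-coordinates are attainable through the $\Delta_\epsilon$-projection, exactly as in the proof of Theorem~\ref{Thm:BanditCaseI}, and $v_i$ is a contraction driven by bounded inputs; the new content is attainability of $0$ by $\xi_i$. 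Since $\Omega_i(t)$ is injected at every step with density positive on a ball of radius $n_{\mathrm{max}}$, the one-step transition kernel of $\xi_i$ has positive density on a ball of radius $\tfrac{\delta_i}{t+1}(n_{\mathrm{max}}-D_i)$ about $\xi_i(t)$, where $D_i:=\sup\{\norm{E_i\xi-m_i(\xi)\xi+b_i}:\xi\in K_i,\ \norm{\xi}^2\le\beta_i\}$ bounds the drift on the inner region. Choosing $n_{\mathrm{max}}>D_i$ makes this radius positive, so from any point the noise can overcome the drift and move in any prescribed direction; as the deterministic drift already points inward in the saturation region, the process first reaches the inner $\beta_i$-ball, after which a finite sequence of noise realizations --- finite because $\sum_t\delta_i/(t+1)=\infty$ lets the admissible per-step displacements accumulate to any fixed distance --- steers $\xi_i$ into any neighborhood of $0$. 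Each step has positive probability, so entry into $U_0$ occurs with positive probability.

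Combining the two parts, any $n_{\mathrm{max}}>D_i$ simultaneously guarantees boundedness (which holds for all finite $n_{\mathrm{max}}$) and attainability of $0$, proving the claim. The step I expect to be the main obstacle is attainability: one must reconcile the shrinking step sizes $\delta_i/(t+1)$ with genuine controllability of $\xi_i$ toward $0$, and confirm that an $n_{\mathrm{max}}$ large enough for local controllability near the origin remains compatible with confinement to $K_i$. The saturation term $-m_i(\xi_i)\xi_i$ is precisely what decouples these competing demands --- it secures boundedness independently of $n_{\mathrm{max}}$, leaving $n_{\mathrm{max}}$ free to be taken as large as attainability requires.
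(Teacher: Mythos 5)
Your proposal is correct and follows essentially the same route as the paper: a quadratic Lyapunov function $V_i(\xi_i)=\xi_i\tr\xi_i$ whose drift in the saturation region is controlled by the stability of $E_i+E_i\tr-2\beta_i I$ (giving boundedness for any finite noise bound), and attainability of $0$ obtained by choosing the noise support large enough to dominate the drift and forcing term on the inner ball, combined with $\sum_t \delta_i/(t+1)=\infty$ to accumulate the per-step displacements. The only cosmetic difference is that the paper steers with a specific realization $T_i$ that yields a multiplicative contraction $(1-\delta_i/(t+1))\xi_i(t)$, whereas you phrase the same step as additive controllability of the transition kernel; both are valid.
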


\begin{proof}
First, note that whenever $\norm{\xi_i(t)}^2 \leq \beta_i$, we have
\begin{align*}
\norm{(E_i-\norm{\xi_i}^2I)\xi_i}\leq \underbrace{\norm{E_i} \sqrt{\beta_i} +\beta_i \sqrt{\beta_i}}_{\text{c}}.
\end{align*}

Next, let each entry of $\Omega_i(t)$ be a uniform random variable on some interval $[-L,L]$, where $L=c+\eta_i+\sqrt{\beta_i}$. Define $y_i= F_i(N_i\tr \Tilde{P}_i(t)-v_i(t))$.
When $\norm{\xi_i(t)}^2>\beta_i$, for any arbitrarily small $\epsilon$ we have
$$\mathbf{Prob} \left[\Omega_i(t) \in (-F_iy_i(t)-\epsilon \mathbf{1}, -F_iy_i(t)+\epsilon \mathbf{1})\right]>0.$$
Consequently, we can have
$$V_i(\xi_i(t+1))-V_i(\xi_i(t))<0 \quad \text{for large } t, $$
where $V_i(\xi_i)=\xi_i\tr\xi_i$. Consider now the case when $\norm{\xi_i(t)}^2\leq \beta_i$. For any arbitrarily small $\epsilon$ we have

$$\mathbf{Prob} \left[\Omega_i(t) \in \left( T_i-\epsilon \mathbf{1},T_i+\epsilon \mathbf{1}\right)\right]>0,$$
where 
$$T_i=-F_iy_i(t)-(E_i-\norm{\xi_i(t)}^2I+I)\xi_i(t).$$
Since $\epsilon$ can be arbitrarily small, we get that for any $t>0$ and every open neighborhood $U$ of $0$
$$\mathbf{Prob}[\exists s \geq t : \xi_i(s) \in U]>0.$$
Finally, to prove boundedness, use the Lyapunov function $V_i(\xi_i)=\xi_i\tr\xi_i$ to show that for large enough $t$, $\beta_i$, and $\xi_i$ we have $V_i(\xi_i(t+1))-V_i(\xi_i(t))<0.$
\end{proof}

With this modification, the dynamics in~\eqref{eq:InternallyUnstableDynamics} still have $(x^*,0,0)$ as an equilibrium, and the local dynamics near this equilibrium are not affected by the $m_i(\xi_i(t))$ term of any player.
However, the dynamics could potentially have other equilibria. Since our results are local and we seek a positive probability of convergence, such a modification still serves our purpose.

\section{Conclusion} \label{Sec:Conclusion}\ \\[-15pt]

We established learnability of mixed-strategy NE by uncoupled higher-order replicator dynamics. Specifically, for any regular completely mixed-strategy NE of a finite game, we showed the existence of higher-order replicator dynamics that converge to this equilibrium. The main takeaway is that impossibility of learning mixed-strategy NE arises from the \textit{combination} of uncoupled dynamics and restriction to standard-order dynamics. By relaxing the latter constraint, convergence to mixed-strategy NE becomes possible.

Our modular perspective enabled several extensions. We established learnability under heterogeneous dynamics, where players use different learning dynamics. We also extended these results to bandit settings, where players observe only the realized payoffs. We further showed that the possibilities under higher-order learning do not imply universality, as there are no higher-order learning dynamics that learn mixed-strategy NE across all games.

An enabling perspective throughout is to view learning dynamics as open systems mapping payoff signals to strategic evolution. This viewpoint reveals connections between learning in games and feedback stabilization. In particular, the existence of learning dynamics leading to mixed-strategy NE corresponded to decentralized stabilization. As higher-order learning dynamics are not readily interpretable in a traditional sense in learning in games,  we introduced the asymptotic best-response (ABR) property. We analyzed the relationship between ABR and the internal stability of the higher-order components and studied compatibility of various mixed-strategy NE with this property. Again, the connection to feedback stabilization, specifically to the concept of strong stabilization, proved useful.

The broader relevance in higher-order learning is its ability to model increasingly sophisticated learning agents, particularly as ``multi-agent AI'' becomes more prominent (cf., \cite{hammond2025multiagent} for an extended discussion). Future work may explore the implications of additional structural constraints on higher-order learning, potentially leading to new limitations (e.g., \cite{fox2013population, abdelraouf2025passivity}).

\appendix

\section{Illustration:  Three players with two actions}\label{App:A0}\ \\[-15pt]

Consider a three-player two-action game. The mixed strategies of the three players are
$$x_1 = \Pmatrix{x_{11}\\ x_{12}},\quad x_2 = \Pmatrix{x_{21}\\ x_{22}}\und x_3 = \Pmatrix{x_{31}\\ x_{32}}.$$
The payoff vector of player 1 is
$$
P_1(x_{-1})= \Pmatrix{r_{111} x_{21}x_{31}+ r_{112} x_{21}x_{32}+ r_{121} x_{22}x_{31}+ r_{122} x_{22}x_{32}\\ r_{211} x_{21}x_{31}+
r_{212} x_{21}x_{32}+ r_{221} x_{22}x_{31}+ r_{222} x_{22}x_{32}},
$$
where we use the shorthand notation $r_{abc}$ for $r_1(a,b,c)$. 

Then 
$$\nabla_{x_2} P_1(x^*_{-1}) = M_{12} = 
\Pmatrix{r_{111}x_{31}^*+r_{112}x_{32}^*&\quad&r_{121}x_{31}^*+r_{122}x_{32}^*\\
r_{211}x_{31}^*+r_{212}x_{32}^*&\quad&r_{221}x_{31}^*+r_{222}x_{32}^*}$$
$$\nabla_{x_3} P_1(x^*_{-1}) = M_{13} = 
\Pmatrix{r_{111}x_{21}^*+r_{121}x_{22}^*&\quad& r_{112}x_{21}^*+r_{122}x_{22}^*\\
r_{211}x_{21}^* + r_{221}x_{22}^*&\quad&r_{212}x_{21}^*+r_{222}x_{22}^*}$$

Define
$$x_i=x_i^*+Nw_i \quad \text{for }i=1,2,3,$$
where $N\tr=\Pmatrix{\frac{1}{\sqrt{2}} & -\frac{1}{\sqrt{2}}}.$
The payoff vector of player 1 in the vicinity of a completely mixed-strategy NE, $x^*$, takes the form  
 \begin{align*}
 P_1&(x_2^*+Nw_2,x_3^*+Nw_3)\\
 &=
 \Pmatrix{
 r_{111} (x^*_{21}+\frac{1}{\sqrt{2}} w_2)(x^*_{31}+\frac{1}{\sqrt{2}} w_3)\\ 
 r_{211}(x^*_{21}+\frac{1}{\sqrt{2}} w_2)(x^*_{31}+\frac{1}{\sqrt{2}} w_3)} +
 \Pmatrix{
r_{112}(x^*_{21}+\frac{1}{\sqrt{2}} w_2)(x^*_{32}-\frac{1}{\sqrt{2}} w_3)\\ 
r_{212}(x^*_{21}+\frac{1}{\sqrt{2}} w_2)(x^*_{32}-\frac{1}{\sqrt{2}} w_3)}\\
&+
\Pmatrix{r_{121} (x^*_{22}-\frac{1}{\sqrt{2}} w_2)(x^*_{31}+\frac{1}{\sqrt{2}} w_3)\\ 
r_{221} (x^*_{22}-\frac{1}{\sqrt{2}} w_2)(x^*_{31}+\frac{1}{\sqrt{2}} w_3)} +
\Pmatrix{r_{122} (x^*_{22}-\frac{1}{\sqrt{2}} w_2)(x^*_{32}-\frac{1}{\sqrt{2}} w_3)\\
r_{222} (x^*_{22}-\frac{1}{\sqrt{2}} w_2)(x^*_{32}-\frac{1}{\sqrt{2}} w_3)}.
\end{align*} 
Rearranging terms, we have
\begin{align*}
P_1(x_2^*+Nw_2,x_3^*+Nw_3)=P_1(x_{-1}^*)+M_1 \mathcal{N} w+\Tilde{P}_{1}(w_{-1}),
\end{align*}
where
$$\mathcal{N} = \bdiag{N,N,N}, \quad M_1 = \Pmatrix{0_{2\times 2}&M_{12}&M_{13}},$$ 
and 
$$\Tilde{P}_{1}(w_{-1})=\frac{1}{2}\Pmatrix{r_{111}-r_{112}-r_{121}+r_{122} \\r_{211}-r_{212}-r_{221}+r_{222}}w_2w_3.$$
When $x^*$ is a completely mixed-strategy NE,
$$P_1(x^*_1) = \alpha_1 \one $$
for some constant $\alpha_1$, and so $N\tr P_1(x^*_1) = 0$.  The reduced-order payoff vector becomes 
\begin{align*}
 N\tr P_1(x_2^*+Nw_2,x_3^*+Nw_3)= 
\underbrace{ N\tr M_1 \mathcal{N}}_{\mathcal{M}_1} w + N\tr \Tilde{P}_{1}(w_{-1}),
\end{align*} 
where $\mathcal{M}_1$ is $1\times 3$. The full $3\times 3$ matrix
 $$\mathcal{M} = \Pmatrix{\mathcal{M}_1\\ \mathcal{M}_2\\ \mathcal{M}_3}$$ 
 can be constructed by going through the similar steps for players 2 and 3. 

\section{Background material}\ \\[-15pt]

The notation of each of these sections is independent from the remainder of the paper.

\subsection{Local stabilization of nonlinear dynamical systems}\label{sec:ofStabilization}

Consider the nonlinear dynamical system
\begin{subequations}\label{eq:nominalSystem}
\begin{align}
\dot{x} &= f(x,u)\\
y &= g(x)
\end{align}
\end{subequations}
with $f:\mathbb{R}^n\times \mathbb{R}^m\rightarrow \mathbb{R}^n$ and $g:\mathbb{R}^n\rightarrow \mathbb{R}^m$ both continuously differentiable. Let $(x^*,0)\in \mathbb{R}^n\times \mathbb{R}^m$ be an equilibrium point, i.e., 
$$0 = f(x^*,0)$$
such that
$$0 = g(x^*).$$

Now introduce linear dynamic feedback of the form
\begin{subequations}\label{eq:linearCompensator}
\begin{align}
\dot{\xi} &= E \xi + Fy\\
u &= G\xi + Hy
\end{align}
\end{subequations}
with state $\xi\in\mathbb{R}^p$.

The overall closed-loop dynamical system becomes
\begin{subequations}\label{eq:ofStabilization}
\begin{align}
\dot{x} &= f(x,G\xi + Hg(x))\\
\dot{\xi} &= E\xi + Fg(x)
\end{align}
which has an equilibrium $(x^*,0)\in \mathbb{R}^n\times \mathbb{R}^p$. 
\end{subequations}

Our goal is to provide conditions on the feedback matrices $(E,F,G,H)$ so that the equilibrium $(x^*,0)$ of (\ref{eq:ofStabilization}) is locally asymptotically stable. Towards this end, define the matrices
\begin{align*}
A &= \nabla_x f(x^*,0)  \in \mathbb{R}^n\times \mathbb{R}^n\\
B &= \nabla_u f(x^*,0)  \in \mathbb{R}^n\times \mathbb{R}^m\\
C &= \nabla_x g(x^*)    \in \mathbb{R}^m\times \mathbb{R}^n
\end{align*}
and define $\tilde{x} = x - x^*$. The \textit{linearized dynamics} of (\ref{eq:ofStabilization}) at the equilibrium point $(x^*,0)$ are
$$\Pmatrix{\dot{\tilde{x}}\\ \dot{\xi}} = 
\Pmatrix{A + BHC & BG\\FC& E}\Pmatrix{\tilde{x}\\ \xi}.$$

\begin{Theorem}\label{thm:ofStabilization} If the matrix
$$\bar{A} = \Pmatrix{A + BHC & BG\\FC& E}$$
is a stability matrix, i.e., $\mathbf{Re}[\lambda_i] < 0$ for every eigenvalue, $\lambda_i$, then
the dynamics (\ref{eq:ofStabilization}) are locally exponentially stable at the equilibrium point $(x^*,0)$.
\end{Theorem}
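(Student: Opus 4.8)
The plan is to recognize the statement as an instance of Lyapunov's indirect method (the linearization theorem) and to supply the Lyapunov-function argument that converts Hurwitzness of $\bar{A}$ into local exponential stability of the nonlinear closed loop. First I would assemble the combined state $\chi = (\tilde{x}, \xi) \in \arr^{n+p}$ and write the closed-loop dynamics \eqref{eq:ofStabilization} as a single autonomous system $\dot{\chi} = \mathcal{G}(\chi)$, where
$$\mathcal{G}(\chi) = \Pmatrix{f\big(x^* + \tilde{x},\, G\xi + Hg(x^*+\tilde{x})\big)\\ E\xi + Fg(x^*+\tilde{x})}.$$
Since $f$ and $g$ are continuously differentiable, so is $\mathcal{G}$, and the equilibrium conditions $0 = f(x^*,0)$ and $0 = g(x^*)$ give $\mathcal{G}(0) = 0$, so $\chi = 0$ (i.e.\ $(x^*,0)$) is an equilibrium of the combined system.

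Next I would verify, via the chain rule at $\chi = 0$, that the Jacobian $\nabla_\chi \mathcal{G}(0)$ equals the matrix $\bar{A}$ displayed before the theorem statement: the upper-left block is $\nabla_x f + \nabla_u f\cdot H\cdot \nabla_x g = A + BHC$, the upper-right block is $\nabla_u f\cdot G = BG$, and the lower blocks are $F\nabla_x g = FC$ and $E$. This computation is already essentially carried out in the display preceding the theorem. By Taylor's theorem I can then write $\mathcal{G}(\chi) = \bar{A}\chi + r(\chi)$ with $\norm{r(\chi)}/\norm{\chi} \to 0$ as $\chi \to 0$.

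The core of the argument is the standard converse-Lyapunov construction. Since $\bar{A}$ is a stability matrix, the Lyapunov equation $\bar{A}\tr P + P\bar{A} = -I$ has a unique positive definite solution $P$. Using $V(\chi) = \chi\tr P \chi$ as a candidate Lyapunov function, I would compute along trajectories
$$\dot{V} = \chi\tr(\bar{A}\tr P + P\bar{A})\chi + 2\chi\tr P\, r(\chi) = -\norm{\chi}^2 + 2\chi\tr P\, r(\chi).$$
Because $r(\chi) = o(\norm{\chi})$, there is a radius $\rho > 0$ on which $2\norm{P}\,\norm{r(\chi)} \le \tfrac12\norm{\chi}$, so that $\dot{V} \le -\tfrac12\norm{\chi}^2$ there. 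Combining with the sandwich $\lambda_{\min}(P)\norm{\chi}^2 \le V(\chi) \le \lambda_{\max}(P)\norm{\chi}^2$ yields $\dot{V} \le -\tfrac{1}{2\lambda_{\max}(P)} V$, hence exponential decay of $V$ and therefore of $\norm{\chi}$, establishing local exponential stability of $(x^*,0)$.

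I expect the only nonroutine step to be the remainder bound: one must confirm that the difference between the nonlinear closed loop and its linearization is genuinely $o(\norm{\chi})$, uniformly in the combined state, which rests on the continuous differentiability of both $f$ and $g$ together with the boundedness of the constant feedback matrices $E,F,G,H$. Everything else is the textbook linearization theorem, and indeed the result could instead be invoked directly once the Jacobian identification $\nabla_\chi \mathcal{G}(0) = \bar{A}$ is established.
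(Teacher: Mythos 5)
Your proposal is correct and is essentially the paper's approach: the paper proves this theorem simply by invoking the standard linearization (Lyapunov indirect method) result, citing Khalil's Theorem 4.7, and your argument is precisely the textbook proof of that result --- identify the closed-loop Jacobian with $\bar{A}$, solve the Lyapunov equation, and absorb the $o(\norm{\chi})$ remainder. Nothing is missing; the only content beyond the citation is the block-Jacobian computation, which the paper already displays before the theorem and which you verify correctly.
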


This result is standard in the analysis of nonlinear systems, e.g., \cite[Theorem 4.7]{khalil2002nonlinear}.

\begin{Theorem}\label{thm:robustStability} Under the conditions of Theorem~\ref{thm:ofStabilization}, there exists a $\gamma > 0$ such that
$$\bar{A}' = \Pmatrix{A' + B'HC' & B'G\\FC'& E}$$
is a stability matrix for all $A'$, $B'$, and $C'$, such that
$$\norm{A - A'} < \gamma, \quad \norm{B - B'}<\gamma, \und \norm{C-C'} < \gamma.$$
\end{Theorem}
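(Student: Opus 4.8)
The plan is to exploit that being a stability matrix is an \emph{open} condition on matrix space, combined with the fact that $\bar{A}'$ depends continuously on the triple $(A',B',C')$. Since the feedback matrices $E$, $F$, $G$, $H$ are held fixed, every entry of
$$\bar{A}' = \Pmatrix{A' + B'HC' & B'G\\FC'& E}$$
is a polynomial, hence continuous, function of the entries of $A'$, $B'$, and $C'$. Thus the map $\Phi:(A',B',C')\mapsto \bar{A}'$ is continuous with $\Phi(A,B,C)=\bar{A}$, and the problem reduces to showing that the preimage under $\Phi$ of a small neighborhood of $\bar{A}$ contains a product of balls of some common radius $\gamma$.

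First I would invoke the standard fact that eigenvalues depend continuously on matrix entries: the coefficients of the characteristic polynomial are polynomial in the entries, and the roots of a monic polynomial depend continuously on its coefficients. Because $\bar{A}$ is a stability matrix, set $-\mu := \max_i \mathbf{Re}[\lambda_i(\bar{A})] < 0$. By continuity of eigenvalues there is an open ball $\mathcal{B}$ about $\bar{A}$ (in any matrix norm) on which $\max_i \mathbf{Re}[\lambda_i(M)] < -\mu/2 < 0$ for every $M\in\mathcal{B}$, so every matrix in $\mathcal{B}$ is a stability matrix. I would then pull $\mathcal{B}$ back through $\Phi$: by continuity of $\Phi$ at $(A,B,C)$ there is a $\gamma > 0$ such that $\norm{A-A'}<\gamma$, $\norm{B-B'}<\gamma$, and $\norm{C-C'}<\gamma$ force $\bar{A}'=\Phi(A',B',C')\in\mathcal{B}$, whence $\bar{A}'$ is a stability matrix, which is exactly the claim.

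A more constructive alternative uses a Lyapunov certificate: since $\bar{A}$ is Hurwitz, there is a unique $P\succ 0$ solving $\bar{A}\tr P + P\bar{A} = -I$. Writing $\bar{A}' = \bar{A} + \Delta$, one gets $\bar{A}'\tr P + P\bar{A}' = -I + (\Delta\tr P + P\Delta)$, whose right-hand side stays negative definite once $\norm{\Delta} < 1/(2\norm{P})$, yielding an explicit $\gamma$. In either route the only mild obstacle is the bookkeeping showing that the perturbation $\Delta$ collects the terms $A'-A$, $B'HC'-BHC$, $(B'-B)G$, and $F(C'-C)$, each continuous in the perturbations and vanishing at $(A,B,C)$; this is routine precisely because $E$, $F$, $G$, $H$ are fixed, so no new smallness beyond $\gamma$ is needed.
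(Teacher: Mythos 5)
Your proposal is correct, and it matches the paper's treatment: the paper does not prove Theorem~\ref{thm:robustStability} but simply cites it as standard (Khalil, Lemma 9.1), which is precisely the vanishing-perturbation/Lyapunov argument you sketch in your second route, while your first route via openness of the set of Hurwitz matrices is an equally valid rendering of the same standard fact. Both of your arguments are sound, including the bookkeeping that the perturbation $\Delta$ collects $A'-A$, $(B'-B)HC' + BH(C'-C)$, $(B'-B)G$, and $F(C'-C)$ and vanishes continuously at $(A,B,C)$.
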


Again, the result is standard, e.g., \cite[Lemma 9.1]{khalil2002nonlinear}. The implication here is that if the nominal system (\ref{eq:nominalSystem}) is perturbed, albeit with a possibly shifted equilibrium $(x',0)$, then the \textit{same} feedback (\ref{eq:linearCompensator}) results in $(x',0)$ being asymptotically stable for the perturbed system as long as the perturbations are sufficiently small.

\subsection{Linear systems: Stabilizability and strong stabilizability}\label{sec:linearSystems}

The material in this section are standard and can be found in textbooks such as \cite{hespanha18linear} and \cite{rugh1996linear}.

Consider the linear system: 
\begin{subequations}\label{eq:linear}
\begin{align}
\dot{x}&=Ax+Bu,\quad x(0)=x_o\\
y&=Cx 
\end{align}
\end{subequations}
where $x \in \mathbb{R}^n,$ $u\in \mathbb{R}^m$, and $y \in \mathbb{R}^p$.  The system is called stable if the matrix $A$ is a stability matrix. 

The linear dynamic feedback 
\begin{subequations}\label{eq:efghcontroller}
\begin{align}
\dot{\xi}&=E\xi+Fy\\
u&=G\xi+Hy
\end{align}
\end{subequations}
stabilizes \eqref{eq:linear} if the overall system 
\begin{subequations}\label{eq:linearoverall}
\begin{align}
\Pmatrix{ \dot{x} \\ \dot{\xi}}=\Pmatrix{A+BHC & BG\\FC & E} \Pmatrix{x\\ \xi}
\end{align}
\end{subequations}
is stable. 

The following conditions are necessary and sufficient for the existence of feedback stabilizing dynamics for \eqref{eq:linear}:
\begin{itemize}
\item The pair $(A,B)$ is stabilizabile:
$$\Pmatrix{A-\lambda I & B}$$
has full row rank for all $\lambda\in \mathbb{C}$ with $\mathbf{Re}[\lambda] \geq 0.$
\item The pair $(A,C)$ is detectable: 
$$\Pmatrix{A-\lambda I \\ C}$$
has full column rank for all $\lambda\in \mathbb{C}$ with $\mathbf{Re}[\lambda] \geq 0.$
\end{itemize}

The matrix
$$\mathcal{T}(s)=C(sI-A)^{-1}B$$
is called the transfer matrix of \eqref{eq:linear} and is used to study the input-output properties of the system. Poles and zeros of a system are defined as follows (several definitions of zeros exist, we adopt the one relevant to our context): 
\begin{itemize}
\item A \textit{zero} is a value $\sigma \in  \mathbb{C}$ such that $\mathcal{T}(\sigma)=0.$
\item A \textit{pole} is a value $\mu \in \mathbb{C}$ such that for some $i$, $j$, 
$\magn{\mathcal{T}_{ij}(\mu)}=\infty,$
where $\mathcal{T}_{ij}(s)$ is the entry at the $i^{\text{th}}$ row and $j^{\text{th}}$ column of $\mathcal{T}$. The multiplicity of a pole is the largest multiplicity it has as a pole of  any minor of $\mathcal{T}(s).$ 
\end{itemize}

A stabilizable and detectable linear system of the form \eqref{eq:linear} is called \textit{strongly stabilizable} if there exist stabilizing dynamics of the form \eqref{eq:efghcontroller} with $E$ being a stability matrix.  The ``parity interlacing principle" \cite{youla1974singleloop} provides a necessary and sufficient condition for strong stabilizability. The principle  states that there should be an \emph{even} number of real poles, counting with multiplicities, between each pair of real unstable blocking zeros. 

\subsection{Washout filters}\label{App:WashoutFilters}

Washout filters are linear systems that take the form
\begin{align*}
\dot{x}&=-x+u\\
y&=-x+u,
\end{align*}
where $x\in\mathbb{R}^n$, $u\in\mathbb{R}^n$, and $y\in\mathbb{R}^n$. 
These systems have the property that if 
$$\lim_{t\rightarrow \infty} u(t)= u^*, \quad \text{then} \quad \lim_{t\rightarrow \infty}y(t)= 0.$$
For an extended discussion, see \cite{hassouneh2004washout}. 

\subsection{Decentralized stabilization} \label{App:Decentralized}

 Consider the linear system 
\begin{subequations}\label{eq:linear1}
\begin{align}
\dot{x}&=Ax+\sum_{i=1}^{k} B_i u_i, \\  y_i&=C_ix \quad \quad  i=1,\hdots,k ,
\end{align}
\end{subequations}
with $k$ inputs and $k$ outputs, and $A\in \mathbb{R}^{n\times n}$.
If there exist $(E_i,F_i,G_i,H_i)$, $i=1,\hdots,k$, such that the collective system
\begin{align*}
\dot{x}&=Ax+\Pmatrix{B_1 & \hdots & B_k}\Pmatrix{G_1 \xi_1 + H_1 y_1 \\ \vdots \\ G_k \xi_k + H_k y_k}\\
\dot{\xi}_i & = E_i \xi_i + F_i y_i, \quad i=1,\hdots, k 
\end{align*}
is stable, then the system is stabilized in a decentralized manner. 

A necessary and sufficient condition for decentralized stabilizability is presented in \cite{davison1990decentralized}. First, for any partition $\mathcal{U}\cup \mathcal{Y} = \theset{1,2,...,k}$
define $B\vert^\mathcal{U}$ as the matrix formed by all $B_i $ with indices in $\mathcal{U}$, i.e.,
$$B\vert^\mathcal{U} = \Pmatrix{B_{q_1}&...&B_{q_\magn{\mathcal{U}}}}$$
with $\theset{q_1,...,q_\magn{\mathcal{U}}} = \mathcal{U}$. Likewise, define $C\vert_\mathcal{Y}$ as the matrix 
formed by all $C_i$ with indices in $\mathcal{Y}$, i.e.,
$$C\vert_\mathcal{Y} = \Pmatrix{C_{e_1}\\ \vdots\\ C_{e_\magn{\mathcal{Y}}}}$$
with $\theset{e_1,...,e_\magn{\mathcal{Y}}} = \mathcal{Y}$.

\begin{Theorem}[\cite{davison1990decentralized}, Theorem 3]\label{thm:davison} 
System \eqref{eq:linear1} is stabilizable in a decentralized manner if and only if
$$\mathbf{rank}\Pmatrix{A - \lambda I&B\vert^\mathcal{U}\\ C\vert_\mathcal{Y}&0} = n$$
for all complex $\lambda$ with $\mathbf{Re}[\lambda]\ge 0$ and all partitions, $\mathcal{U}\cup \mathcal{Y} = \theset{1,2,...,k}$.
\end{Theorem}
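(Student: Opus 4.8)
The plan is to characterize decentralized stabilizability through the notion of \emph{decentralized fixed modes}. Writing $B = \Pmatrix{B_1 & \hdots & B_k}$ and $C = \Pmatrix{C_1 \\ \vdots \\ C_k}$, and letting $K = \bdiag{K_1,\hdots,K_k}$ range over block-diagonal static output-feedback gains, I would call $\lambda$ a \emph{fixed mode} if $\lambda \in \sigma(A + BKC)$ for every such $K$, so that the set of fixed modes is $\Lambda = \bigcap_{K} \sigma(A+BKC)$. The first step is to reduce the theorem to the Wang--Davison equivalence: system \eqref{eq:linear1} is decentralized stabilizable by dynamic feedback of the stated form if and only if $\Lambda \cap \theset{\lambda \in \mathbb{C} \st \mathbf{Re}[\lambda]\ge 0} = \emptyset$. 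A preliminary lemma needed here is that the fixed modes relative to static decentralized gains coincide with those relative to dynamic decentralized compensators, which justifies defining $\Lambda$ through static $K$ while arguing stabilizability through dynamic feedback. Granting the equivalence, it remains to prove that $\lambda \in \Lambda$ exactly when some partition $\mathcal{U}\cup\mathcal{Y} = \theset{1,\hdots,k}$ makes the augmented matrix rank-deficient.

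For the Wang--Davison equivalence itself, the necessity direction is immediate: a fixed mode with $\mathbf{Re}[\lambda]\ge 0$ persists as a closed-loop eigenvalue under every admissible feedback, which precludes stability. The sufficiency direction (no fixed mode in the closed right-half plane implies stabilizability) I would establish by the standard pole-shifting argument: a mode that is not fixed can be perturbed off a prescribed value by a generic static gain in a single channel, and by iterating channel-by-channel — augmenting channels with dynamic compensation so the remaining subsystem becomes controllable and observable through a designated channel — one can move every non-fixed mode arbitrarily, hence into the open left-half plane, while the fixed modes by definition remain in place.

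The combinatorial core is the rank characterization of $\Lambda$ (the Anderson--Clements characterization). I would view $p(K) = \Det{\lambda I - A - \sum_{i} B_i K_i C_i}$ as a polynomial in the entries of the blocks $K_i$, so that $\lambda \in \Lambda$ precisely when $p(K)\equiv 0$. Expanding this determinant by multilinearity in the blocks and applying a generalized Cauchy--Binet minor expansion, each monomial coefficient factors, up to sign, as a minor of $(A-\lambda I)$ bordered by a selection of input columns drawn from $\theset{B_i}$ and output rows drawn from $\theset{C_i}$. The key algebraic fact to prove is that the simultaneous vanishing of every coefficient in this expansion is equivalent to the existence of a single consistent bipartition $\mathcal{U}\cup\mathcal{Y}$ for which
$$\mathbf{rank}\Pmatrix{A - \lambda I & B\vert^\mathcal{U}\\ C\vert_\mathcal{Y} & 0} < n,$$
with the indices placed in $\mathcal{U}$ supplying input columns and those in $\mathcal{Y}$ supplying output rows. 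Intersecting over all $\lambda$ with $\mathbf{Re}[\lambda]\ge 0$ then produces the stated condition.

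The step I expect to be the main obstacle is this last equivalence. It is not enough to observe that each monomial coefficient is a bordered minor; the hard part will be showing that the \emph{joint} vanishing of the entire family of coefficients collapses to one partition-indexed rank drop, rather than to a tangle of mutually inconsistent partition-dependent conditions. The delicate bookkeeping is the clean assignment of each channel index to either the input side $\mathcal{U}$ or the output side $\mathcal{Y}$ so that a single augmented matrix witnesses the defect — this is exactly the content of the Anderson--Clements argument. By comparison, the genericity/pole-shifting lemma and the invariance of fixed modes under admissible feedback are comparatively routine.
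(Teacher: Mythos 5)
This theorem is imported background in the paper: it appears in Appendix~B as a citation of Theorem~3 of \cite{davison1990decentralized}, and the paper supplies no proof of it anywhere. So there is no internal argument to compare against; the relevant benchmark is the classical literature, and your roadmap follows it faithfully. The reduction of decentralized stabilizability to the absence of closed-right-half-plane decentralized fixed modes is Wang--Davison, the invariance of fixed modes under dynamic extension is the standard preliminary lemma, and the identification of fixed modes with partition-indexed rank deficiency of the bordered matrix $\Pmatrix{A-\lambda I & B\vert^\mathcal{U}\\ C\vert_\mathcal{Y} & 0}$ is the Anderson--Clements characterization. Your outline also correctly absorbs the trivial partitions ($\mathcal{U}$ or $\mathcal{Y}$ empty) as ordinary stabilizability/detectability of $(A,B)$ and $(A,C)$, which the fixed-mode formalism handles uniformly.

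As a standalone proof, however, the proposal is incomplete at exactly the point you flag, and that point is not peripheral --- it is the theorem. The direction in which a rank-deficient bordered matrix yields a fixed mode is a short computation: rank deficiency produces left/right null structures that no block-diagonal $K$ can destroy, so $\lambda\in\sigma(A+BKC)$ for every admissible $K$. But the converse --- that the joint vanishing of \emph{all} multilinear coefficients of $p(K)=\Det{\lambda I - A - \sum_i B_iK_iC_i}$ collapses to a \emph{single} consistent bipartition $\mathcal{U}\cup\mathcal{Y}$ witnessing $\mathbf{rank}<n$ --- requires a genuine combinatorial induction on channels; deferring it to ``the content of the Anderson--Clements argument'' restates the claim rather than proving it. Likewise, your Wang--Davison sufficiency sketch (``generic gains move non-fixed modes, iterate channel-by-channel'') needs the Corfmat--Morse-style lemma that a generic static decentralized gain in $k-1$ channels renders the entire non-fixed spectrum controllable and observable through the remaining channel, after which one centralized dynamic compensator there completes stabilization; ``perturbing a mode off a prescribed value'' in a single channel is weaker than what is needed. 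In short: right route, difficulty assigned to the right places, but the two load-bearing steps are named rather than executed --- which is acceptable for reproducing a cited result, and indeed mirrors how the paper itself treats this theorem, but it is not yet a proof.
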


\subsection{ODE method of stochastic approximation}\label{sec:ODEMethod}

The following proposition serves as the main tool for proving our results in the bandit setting.  

\begin{Proposition}[\cite{Benaim1999Dynamics}]\label{Prop:SI}
Consider the following stochastic iterates
\begin{equation}\label{eq:SI}
\zeta(t+1)=\zeta(t)+\kappa(t)\left[F(\zeta(t))+H(t)\right]
\end{equation}
where the following conditions hold:
\begin{enumerate}
\item The step-sizes $\kappa(t) > 0$ satisfy:
$$\sum_{t=1}^{\infty} \kappa(t) = \infty,\quad \sum_{t=1}^{\infty} \kappa(t)^2 < \infty.$$

\item The iterates evolve in a compact set $K.$

\item $F(\zeta)$ is uniformly Lipschitz over $K$:
$$\exists L \quad s.t \quad \norm{F(\zeta) - F(\zeta')}  \leq L \norm{\zeta-\zeta'} \text{ } \forall \zeta,\zeta' \in K.$$

\item $H(t)$ is a stochastic noise such that 
$$\mathbf{E}[H(t)\vert \zeta(t)]=0 \quad \text{and} \quad \mathbf{E}[H(t)\tr H(t)]<\infty.$$

\item The equilibrium point $x^*\in K$ satisfying $F(x^*)=0$ is attainable, i.e., for each $t>0$ and every open neighborhood $U$ of $x^*$, 
$\mathbf{Prob}[\exists s \geq t : \zeta(s) \in U]>0$\label{enum:A6}.

\end{enumerate}
Under these conditions, if $x^*$ is a locally asymptotically stable fixed point of the mean dynamics
\begin{equation}\label{eq:Mean} 
\dot{x}= F(x),
\end{equation}
 then 
 $$\mathbf{Prob} \left[\lim_{t\rightarrow \infty} \zeta(t) = x^*\right]>0.$$
\end{Proposition}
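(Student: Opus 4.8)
The plan is to invoke the ODE (or \emph{asymptotic pseudotrajectory}) method in the spirit of \cite{Benaim1999Dynamics}, reducing the positive-probability convergence of the discrete iterates \eqref{eq:SI} to the local asymptotic stability of $x^*$ under the mean flow \eqref{eq:Mean}. First I would pass to a continuous-time interpolation of the iterates. Define $\tau_0 = 0$ and $\tau_m = \sum_{k=0}^{m-1}\kappa(k)$; by Condition~1 one has $\tau_m \to \infty$, so the cumulative step sizes furnish a bona fide (stretched) time axis. Let $\bar{\zeta}(\cdot)$ be the piecewise-affine path with $\bar{\zeta}(\tau_m) = \zeta(m)$. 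Writing the iteration as $\zeta(m+1) - \zeta(m) = \kappa(m)F(\zeta(m)) + \kappa(m)H(m)$, the deterministic part is, up to a Lipschitz error controlled by Condition~3, a forward-Euler discretization of $\dot{x} = F(x)$, while the stochastic part is collected into weighted noise sums.

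The crux of the argument is to show that the noise contribution is asymptotically negligible. By Condition~4, the partial sums $M_m = \sum_{k=0}^{m-1}\kappa(k)H(k)$ form a martingale relative to the natural filtration (each $\kappa(k)$ being deterministic and each $H(k)$ conditionally mean-zero), and since $\mathbf{E}[H(k)\tr H(k)]$ is bounded while $\sum_k \kappa(k)^2 < \infty$ by Condition~1, the predictable quadratic variation $\sum_k \kappa(k)^2\,\mathbf{E}[H(k)\tr H(k)]$ is finite. Hence $M_m$ converges almost surely by the $L^2$ martingale convergence theorem. This almost-sure convergence of the tail noise, combined with the uniform Lipschitz bound on $F$ over the compact set $K$ (Conditions~2 and~3, which also keep $\bar\zeta$ bounded), is exactly what is needed to conclude that $\bar\zeta(\cdot)$ is almost surely an asymptotic pseudotrajectory of the semiflow $\Phi$ generated by $\dot x = F(x)$: for each fixed $T>0$, $\sup_{0\le h \le T}\norm{\bar\zeta(\tau+h) - \Phi_h(\bar\zeta(\tau))} \to 0$ as $\tau \to \infty$.

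Next I would localize around $x^*$ using its asymptotic stability. Fix a Lyapunov neighborhood $U$ of $x^*$, i.e.\ a forward-invariant sublevel set of a local Lyapunov function $V$ for the flow, contained in the basin of attraction, so that every flow trajectory entering $U$ converges to $x^*$. Attainability (Condition~5) supplies, for every starting index, a positive probability that $\zeta(s)$ enters a small inner neighborhood $V_\delta \subset U$ at some large $s$. On the event that entry occurs at a time $s$ large enough that the post-entry noise increment $\sup_{m\ge s}\norm{M_m - M_s}$ is small (an event of positive probability, by the almost-sure tail-convergence of $M_m$), the pseudotrajectory property forces $\bar\zeta$ to shadow the stable flow, hence to remain in $U$ and converge to $x^*$. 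Intersecting these positive-probability events yields $\mathbf{Prob}[\lim_m \zeta(m) = x^*] > 0$.

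The hard part will be making this final shadowing step quantitative: I must choose the outer region $U$, the inner neighborhood $V_\delta$, and the entry-time threshold jointly so that the accumulated noise and Euler-discretization error after entry are dominated by the contraction supplied by local asymptotic stability, ensuring the iterate does not escape $U$ before the flow pulls it to $x^*$. This is where the square-summability of $\kappa$ (bounding the total post-entry noise $\sup_{m \ge s}\norm{M_m - M_s}$) and the uniform Lipschitz continuity of $F$ (bounding the discretization drift) must be balanced against a uniform decrease of $V$ along the flow throughout $U \setminus V_\delta$.
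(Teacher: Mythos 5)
A framing note first: the paper offers no proof of Proposition~\ref{Prop:SI} at all --- it is imported with attribution from \cite{Benaim1999Dynamics} --- so your proposal can only be measured against the argument in that source. Your skeleton is indeed the canonical one from that reference: interpolation onto the stretched timescale $\tau_m=\sum_{k<m}\kappa(k)$, almost-sure convergence of the weighted noise martingale $M_m$ via square-summable steps, the asymptotic-pseudotrajectory property of the interpolated path, and a trapping argument near the attractor (the $(\delta,T)$ pseudo-orbit lemma you correctly flag as ``the hard part'' is a standard deterministic shadowing lemma for attractors and poses no real obstruction).

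The step that genuinely fails as written is the final one: ``intersecting these positive-probability events.'' Attainability (Condition~5) gives, for each threshold $s_0$, a positive probability $p(s_0)$ of entering the inner neighborhood after time $s_0$, but it gives no uniformity: nothing prevents $p(s_0)\to 0$ as $s_0\to\infty$. Meanwhile the a.s.\ convergence of $M_m$ only gives $\mathbf{P}[E_{s_0}]\to 1$ for the small-tail-noise event $E_{s_0}$, and the two events are dependent, so the bound $\mathbf{P}[A_{s_0}\cap E_{s_0}]\ge p(s_0)-\mathbf{P}[E_{s_0}^c]$ need never become positive --- enlarging $s_0$ shrinks $\mathbf{P}[E_{s_0}^c]$ but may shrink $p(s_0)$ just as fast. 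The repair, which is how the cited source actually argues, is to control the noise \emph{conditionally} rather than through a.s.\ convergence: with a uniform conditional second-moment bound, Doob's maximal inequality applied to the martingale restarted at the entry stopping time $\sigma=\inf\{s\ge s_0:\zeta(s)\in V_\delta\}$ yields the deterministic estimate $\mathbf{P}\left[\sup_{m\ge\sigma}\norm{M_m-M_\sigma}>\delta \given \mathcal{F}_\sigma\right]\le \delta^{-2}C\sum_{k\ge s_0}\kappa(k)^2=:\eta(s_0)$ on $\{\sigma<\infty\}$, whence $\mathbf{P}[\zeta(t)\to x^*]\ge p(s_0)\left(1-\eta(s_0)\right)>0$ once $s_0$ is large enough that $\eta(s_0)<1$. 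This also exposes two glosses in your noise paragraph: Condition~4 as stated gives only per-$t$ finiteness of $\mathbf{E}[H(t)\tr H(t)]$ and centering given $\zeta(t)$ alone, whereas both your $L^2$-martingale step and the conditional maximal inequality require a bound uniform in $t$ and mean-zero noise given the whole past $\mathcal{F}_t$ --- assumptions the source makes and that the paper's bandit application supplies via compactness of the iterates, but which you should state rather than silently strengthen.
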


If $x^*$ is a globally asymptotically stable fixed point of \eqref{eq:Mean}, the result in the above proposition strengthens to almost sure convergence.

\bibliographystyle{ieeetr}
\bibliography{AllcitationsNOURL}
\end{document}